\documentclass[format=acmsmall, review=false]{acmart}
\usepackage{ec26-submission-style-files/acm-ec-26}
\usepackage{booktabs} 
\usepackage[ruled,noend]{algorithm2e} 

\SetAlFnt{\small}
\SetAlCapFnt{\small}
\SetAlCapNameFnt{\small}
\SetAlCapHSkip{0pt}
\IncMargin{-\parindent}
\usepackage{outlines}
\usepackage{placeins}
\usepackage{wrapfig}
\usepackage{makecell}
\usepackage{graphicx}
\usepackage{subcaption}
\usepackage{cleveref}
\usepackage{amsmath}
\DeclareMathOperator*{\argmax}{arg\,max}
\usepackage{multirow}
\usepackage{xspace}
\usepackage{tikz}
\usetikzlibrary{shapes.geometric} 
\usetikzlibrary{positioning}
\usetikzlibrary{calc}
\usepackage{thmtools}
\usepackage{rotating}
\usepackage{pdflscape}
\usepackage{booktabs}
\usetikzlibrary{positioning,fit,calc,backgrounds}

\newcommand{\Cov}{\mathrm{Cov}}

\usepackage{xcolor}

\newcommand{\students}{\mathcal{S}}
\newcommand{\programs}{\mathcal{P}}

\newcommand{\nudgecaps}{\mathcal{C}}
\newcommand{\programnudgecap}{\nudgecaps_j}

\newcommand{\applicationprofiles}{\succ}
\newcommand{\nudgeallocation}{X}
\newcommand{\matchsimulator}{M}
\newcommand{\matchoutcome}{\mu}
\newcommand{\matchoutcomeonnudges}{\mu_{\nudgeallocation}}

\newcommand{\programlossfxn}{\ell_j}
\newcommand{\adoptionfxn}{f}

\newcommand{\Ind}{\mathbf{1}}

\newcommand{\naive}{\textbf{Naive}\xspace}
\newcommand{\optimizelinear}{\textbf{Capacitated}\xspace}
\newcommand{\optimizesqrt}{\textbf{Balanced Capacitated}\xspace}
\newcommand{\high}{\ifmmode\mathrm{high}\else\textit{High}\xspace\fi}
\newcommand{\nudged}{\ifmmode\mathrm{nudged}\else\textup{nudged}\xspace\fi}
\newcommand{\medium}{\textit{Medium}\xspace}
\newcommand{\low}{\textit{Low}\xspace}
\newcommand{\methodabbr}{CNA\xspace}
\newcommand{\algoname}{NudgeCapComputation\xspace}

\newcommand{\blfootnote}[1]{
  \begingroup
  \renewcommand{\thefootnote}{}
  \footnotetext{#1}
  \addtocounter{footnote}{-1}
  \endgroup
}

\theoremstyle{acmdefinition}
\newtheorem{assumption}{Assumption}

\newtheorem{definition}{Definition}
\crefname{assumption}{assumption}{assumptions}
\Crefname{assumption}{Assumption}{Assumptions}

\setcitestyle{authoryear}
\AtBeginDocument{
  \pagestyle{fancy}
  \fancyhead{}
  \fancyhead[R]{\thepage}
  \fancyhead[L]{\small\sffamily Personalized Recommendations Without Inducing Congestion}
  \fancyfoot{}
}

\newcommand{\titletext}{Personalized Recommendations Without Inducing Congestion: Mitigating Disparities in the NYC High School Match}

\title[\titletext]{\titletext}

\author{Erica Chiang}
\affiliation{
  \institution{Cornell University}
  \city{New York}
  \country{USA}
  }

\author{Kenny Peng}
\affiliation{
  \institution{Cornell University}
  \city{New York}
  \country{USA}
}

\author{Rebecca Lichtenstein}
\affiliation{
  \institution{New York City Public Schools}
    \city{New York}
  \country{USA}
}

\author{Brielle McDaniel}
\affiliation{
  \institution{New York City Public Schools}
    \city{New York}
  \country{USA}
}

\author{Kristen O'Neil}
\affiliation{
  \institution{New York City Public Schools}
    \city{New York}
  \country{USA}
}

\author{Dejaunique Thomas}
\affiliation{
  \institution{New York City Public Schools}
    \city{New York}
  \country{USA}
}

\author{Lianna Wright}
\affiliation{
  \institution{New York City Public Schools}
    \city{New York}
  \country{USA}
}

\author{Jon Kleinberg}
\affiliation{
  \institution{Cornell University}
    \city{Ithaca}
  \country{USA}
}

\author{Eva Tardos}
\affiliation{
  \institution{Cornell University}
      \city{Ithaca}
  \country{USA}
}

\author{Nikhil Garg}
\affiliation{
  \institution{Cornell University}
    \city{New York}
  \country{USA}
}

\begin{abstract}

Algorithmic recommendations can help participants navigate large matching markets. For example, recommendations for school and college choices may reduce information frictions and disparities in access to high-performing programs. At scale, however, recommenders in capacity-constrained settings can be self-defeating: if they steer too many users toward the same items, then even users who were originally predicted to have a high chance of matching to an item may not, due to increased competition. In this paper, we formalize this phenomenon of \textit{recommendation-induced congestion}; motivated by the NYC high school match, we show that naive recommendations can cause sharp decreases in program acceptance rates, most affecting applicants with the fewest nearby options. Next, we propose and theoretically analyze a congestion-aware, bilevel optimize-and-simulate approach to allocate recommendations and improve match outcomes \textit{safely, in equilibrium}. Finally, we deploy this approach in the 2025-26 admissions cycle of the NYC high school match, aiming to reduce disparities by highlighting personalized lists of nearby, high-performing programs where an applicant has a high predicted offer likelihood. In a randomized controlled trial, we find that 16.4\% of treatment applicants ranked a recommended program, versus 10.5\% of control applicants who ranked a program they would have been recommended (57\% relative increase; $p$=0.011); 5.6\% of treatment applicants matched to such a program, versus 3.3\% of control applicants (71\% relative increase; $p$=0.071); further, no treatment applicant was rejected from a recommended program. Our findings suggest that recommenders should be analyzed and designed as market-shaping interventions. 

\end{abstract}

\begin{document}
\renewcommand{\shortauthors}{Chiang et al.}

\maketitle
\today
\section{Introduction}

\blfootnote{Corresponding authors: Erica Chiang (esc99@cornell.edu) and Nikhil Garg (ngarg@cornell.edu).}

People navigate large matching markets to find educational opportunities, jobs, and partners. At scale, these markets can increase access to opportunity, but it may be difficult to actually identify suitable matches. In turn, many platforms use \textit{recommender systems} to guide participants. While these recommenders are often designed to optimize individual-level outcomes, they have \textit{systemic} effects---one participant's recommendations impose externalities on others, either their potential matches or competitors. In our specific context, the New York City (NYC) high school match, eighth-grade students apply to high school by submitting a ranked list of programs, and a centralized algorithm matches each student to one program \cite{abdulkadiroglu_new_2005, gale1962college}. While the system was designed to expand access, forming preferences over programs is a complex, challenging process \cite{marian2023algorithmic}. Students must choose from over 900 high school programs, which vary widely in academic focus and performance, transportation requirements, and extracurricular activities. 

Forming informed preferences thus requires substantial time, information, and support. Given heterogeneous resources to navigate these informational burdens, the application complexities may perpetuate disparities \cite{broyard_nyc_2023, corradini2023information, cohodes2022informational, Corcoran18, idoux2024overcoming}. For example, \citet{pengUndermatchingDisparitiesPortfolio} find that Black, Hispanic, and low-income students in NYC disproportionately \textit{undermatch}: they could have matched to programs that measured higher on certain metrics (program selectivity, academic performance, impact, and other measures of quality), while being at least as close to home as their true match, had they applied; the largest gaps were among the students with the most competitive applications. 

Such disparities have motivated several informational interventions in NYC \cite{cohodes2022informational, Corcoran18}. One prominent approach used in these interventions is to provide applicants at target middle schools with \textit{non-personalized} lists of nearby, high-performing programs. Because every student within a given middle school receives the same list, such an intervention must either (a) be restricted to programs that most students would be admitted to (e.g., historically undersubscribed programs), or (b) risk steering many students to programs where they are ultimately rejected. The former limits the set of high schools that can be included, since many high-performing programs are competitive and require high grades and favorable lottery numbers. The latter can induce disappointment and may reduce trust in the intervention in future years.\footnote{\citet{Corcoran18} and \citet{cohodes2022informational} navigate this tradeoff by providing all students a list of about 25 programs at varying selectivity levels with guidance on implementing a ``reach, match, safety'' approach \cite{ali2021college}. \citet{cohodes2022informational} also provide interactive search and recommendation tools using preferences such as for extracurriculars.} 

Recent advances in personalized predictions of admissions outcomes enable moving beyond these approaches, toward personalized guidance regarding where a student is likely to be admitted. Beginning in the 2024-25 cycle, in work by \citet{corradini2025school}, NYC Public Schools (NYCPS) has provided all high school applicants with personalized, program-specific offer likelihood estimates during the application process. These likelihoods are bucketed into \{\high, \medium, \textit{Low}\}, with applicants told that \high corresponds to at least a 99\% chance of admission and \low to a less than 1\% chance \cite{nycdoeOfferChances2026}. Similar tools have recently been integrated into other school matching platforms \cite{arteaga2022smart, larroucau2024college}, including program-specific recommendations that depend on such predictions. More generally, predictions of downstream match outcomes are also increasingly incorporated into recommender systems in settings such as labor and dating platforms.

\paragraph{Recommendation-induced congestion.} The above discussion overlooks an important concern: recommender systems are not passive information tools. Match outcomes are \textit{endogenous} to all applicants' behavior, and an intervention's goal is to change behavior. Thus, individual-level interventions may be self-defeating: if many applicants are steered toward the same program, many---even those with \high estimated offer likelihood---may be rejected due to new competition that was not accounted for in the original likelihood estimation. We refer to this phenomenon as \textit{recommendation-induced congestion}. 
For example, \citet{pengUndermatchingDisparitiesPortfolio} identify individually-optimal application changes that would improve certain metrics of each student's matched program if adopted in isolation; however, they find in simulations that when many students adopt their ``optimal changes'' simultaneously, many students end up not matching at all. 
In this work, we further show in simulations that recommendation-induced congestion itself can be disparate: students with less access to high-performing programs are more likely to be affected by congestion at the programs they \textit{are} eligible for, due to geographic correlation.

\paragraph{This work.} We formalize the challenge of and propose an approach to allocating recommendations without inducing congestion. 
We apply our approach and leverage offer likelihood estimates to deploy a \textit{personalized informational intervention}. In partnership with NYCPS, we sent applicants personalized lists of high-performing, nearby programs where they have \high predicted offer likelihoods---focusing on applicants from middle schools that historically did not send many students to such schools. 
In addition to the overarching intervention goal of addressing disparities in applications to high-performing programs, a key design constraint in our deployment was to avoid recommendation-induced congestion. Our recommendations needed to preserve offer likelihood estimates even if many students adopted them; this was a non-negotiable constraint for our NYCPS partners, in order to prevent misleading applicants and to retain trust in the offer likelihood tool, future interventions, and the match process broadly. Our contributions are as follows. 

\textbf{Characterization of recommendation-induced congestion.} Through empirically-calibrated simulations, we show that naively recommending each program to all eligible students can induce congestion  at scale, failing to preserve offer likelihoods. The effects are heterogeneous, most affecting applicants with less \textit{ex ante} access to high-performing programs,  reinforcing disparities. 

\textbf{Method for safe recommendations in equilibrium.} To mitigate such effects, we propose a bi-level optimization method for allocating recommendations 
in capacity-constrained settings, safely. We manage congestion by capping the number of nudges sent to each program, using historical data to determine how high nudge capacities can be while meeting congestion constraints. Then, at deployment time, we optimize nudge allocations given the nudge capacities. Theoretically, we show that under natural assumptions, our approach approximately maximizes expected desirable matches, while preserving acceptance rates. Empirically, our method achieves target acceptance rates without reducing overall intervention effectiveness compared to the naive approach. 

\textbf{Deployment and RCT of personalized recommendations.} We deployed our method in an informational intervention for the 2025-26 NYC high school admissions cycle, providing applicants from target middle schools with \textit{personalized} recommendations to nearby, high-performing programs they are likely to be admitted to even if recommendations are adopted at scale. We evaluate our intervention through a pre-registered randomized controlled trial (RCT), finding substantial take-up of our recommendations among treatment applicants: 16.4\% of the 444 treatment applicants apply to a program they were nudged to, while 10.5\% of the 822 control applicants apply to a program they would have been recommended---a 57\% increase ($p$=0.011). We find suggestive evidence of an increase in \textit{matches} to nudged programs: 5.6\% of treatment applicants match to a nudged program, as opposed to 3.3\% for control applicants, a 71\% relative increase ($p$=0.071). Furthermore, no treatment applicants were rejected from a program they were nudged to,\footnote{The dropoff from 16.4\% to 5.6\% is fully explained by matches to programs listed higher than the nudged program.} and only one control applicant was rejected from a program they \textit{would have} been nudged to. 

We further apply our framework to study how big an effect such demand-side interventions can have. While our discussion in this paper is motivated by school placement due to our deployment, these considerations also arise in other capacity-constrained settings, such as labor and dating platforms: individual-level interventions in any of these settings should consider system-wide effects for safe, effective outcomes. 

\section{Related work}
This work is motivated by substantial documentation of information frictions in school choice \cite{jensen2010perceived, wiswall2015determinants, bleemer2018intended, ainsworth2023households, hastings2015effects, hastings2016informed, walters2018demand, jochim2014parents}, the resulting suboptimal application behavior \cite{marian2023algorithmic, tomkins2023showing, larroucau2024college, arteaga2022smart}, and the existence of \textit{disparities} in behavior and outcomes \cite{broyard_nyc_2023, Corradini2023HighSchoolChoices, pengUndermatchingDisparitiesPortfolio, singer2024detroitschoolchoice, hoxby2015college, hoxby2012missing, dynarski2006cost, corradini2023information, idoux2024overcoming, sattin2018surviving, condliffe2015stuck,garg2025heterogeneous}.
More broadly, a large literature considers the design of school assignment systems and empirically evaluates their performance, including a focus on disparities \cite{allman2022designing, jeloudar2025optimization, immorlica2020information, awadelkarim2023rank,angrist2024credible,dur2018reserve}. 

\textit{Informational interventions for school choice.} Many informational interventions have aimed to increase awareness (and reduce disparities) in school assignment, both in NYC \cite{Corcoran18, cohodes2022informational} and elsewhere \cite{arteaga2022smart,hastings2007preferences, hastings2008information, agte2024search, larroucau2024college, ye2024improving, campos2024preferenceslosangeles, valant2022neworleans, ainsworth2020valueadded,allende2019approximating, arteaga2022recommendations, neilson2026decade, elacqua2025promise}. They have provided students with information on school performance, value added, admissions chances, and/or outcomes. 
While some interventions deploy personalized search and recommendation tools, to our knowledge none of them have directly controlled recommendation-induced congestion. Most relevant to ours is the work of \citet{corradini2025school}, as discussed in the introduction; in NYC, their offer likelihood tool provides warnings to applicants in the online portal if they do not apply to any program where they have \high likelihood. This is intended to prevent students from failing to match at all, as opposed to preventing {undermatching}. \citet{arteaga2022recommendations} provide program-specific recommendations that depend on such predictions. \citet{larroucau2024college} provide personalized \textit{major} recommendations, but not \textit{program} recommendations, and further include warnings if an applicant is highly likely to be admitted to their first-choice option (suggesting that they may consider listing a more competitive program first). Further, some of these systems use offer likelihoods or other characteristics for their \textit{search} features. 

A benefit of using personalized offer likelihoods is that we can include more competitive programs: in our deployment, performance cutoffs defining high-performing schools were higher than those used in prior NYC interventions, and over half of the recommended programs were oversubscribed the previous year (had more applicants than seats) and thus would have been excluded in a non-personalized intervention limited to programs where most would be admitted.

Other past work has analyzed the prediction of admissions outcomes in similar settings \cite{dong2026capacity,pathak2021well}. In particular, \citet{dong2026capacity} show that machine learning-based predictions can become unstable when the applicant pool changes. 
Among other things, this implies the need to account for equilibrium effects when shifting application behavior in capacity-constrained allocation systems. 
Our work designs recommendations that explicitly account for these equilibrium effects when guiding applicants’ choices.

\textit{Recommendation in equilibrium.} While classical recommendation and decision support tools have historically been designed for uncapacitated environments, there is growing interest in accounting for scarcity, congestion, and fairness in recommendation settings. 
Recent work has developed methods for explicitly incorporating downstream matching and fairness considerations into recommendation design by balancing item exposure \cite{mashayekhi2023recon, christakopoulou2017recommendation, su2022optimizing, shi2025optimal,greenwood2024user,chen2025feature,NEURIPS2024_59d4e18a}; used assortment optimization to dynamically show a restricted set of items to each user, while achieving final outcomes that maximize matches or revenue and are aligned with user preferences \cite{ashlagi2022assortment, torrico2020multi, rios2023platform, housni2024two}; studied mechanisms for reducing congestion by changing incentives \cite{he2022applicationcosts}; and modeled congestion in capacity-constrained settings to characterize design tradeoffs \cite{bied2021congestion,bied2023toward,besbes2025impact,kanoria2021facilitating, arnosti2019managingcongestion, escobar2025congestion}. One setting with similar considerations of balancing allocations in a stage \textit{before} a matching is the interview stage for medical residencies, with proposed approaches to reduce congestion
\cite{ashlagi2017communication, ashlagi2025stable, allman_interviewing_2023, wapnir2021explaining}. 

While the above papers are largely theoretical, several recommender deployments have been congestion aware or estimated congestion ex post. \citet{rios2023improving} propose dynamic exposure policies and improve match rates in two dating markets, \citet{manshadi2025redesigning} develop and deploy a model for rankings in a large volunteer matching platform, \citet{huang2022managingcongestion} improve match efficiency in a dating market by controlling disclosure of demand information, {\citet{behaghel2024potential} deploy congestion-aware recommendations with French public employment services to maximize potential employment,} and \citet{chen2023reducing} develop an equilibrium-aware recommendation approach and apply it to a large online dating service. Other recommender deployments report congestion effects: \citet{neilson2026decade} documents congestion from information and recommendation tools in Chile's school choice system; \citet{altmann2026jobsearch} provide online job search advice at scale and find that employment and earnings gains diminish when more people receive treatment; and \citet{bobba2023perceived} provide applicants with feedback about their own academic performance to drive skill-based sorting into high school tracks, using the experimental results to model scale-up and finding that congestion can have negative downstream effects, including on equity.

Our work builds on this literature by developing, deploying, and evaluating a method for recommendation in equilibrium under hard capacity and offer likelihood constraints.

Recommendation-induced congestion is also closely related to concerns of algorithmic monoculture in hiring and other matching markets, in which employers using the same hiring algorithms may induce congestion on who they interview and make offers to \cite{kleinberg2021algorithmic,peng24monoculture,bommasani2022picking,creel2022algorithmic,bommasani2026algorithmic,peng2024wisdom,kim2025correlated}. 
Related work also studies the design of information displays in settings where agents respond strategically to shared information in the presence of congestion \cite{kayongo2021visualization, zhang2024designing}. Relatedly, recommendations based on offer likelihoods changing the offer likelihoods themselves is an example of performative prediction \cite{perdomo2020performative}. In \Cref{sec:relatedworkadditional}, we further discuss the relationship to work on experimentation in equilibrium.

\section{Safe Recommendation in capacity-constrained settings}
\label{sec:method}
The overarching goal of an intervention is to shift application behavior to increase matches among a specified set of ``desirable matches.'' We consider interventions that achieve these shifts by nudging students to programs they would not otherwise have applied to. However, because programs have limited capacity, changing demand can also change acceptance rates. In \Cref{subsec:consequences_of_congestion}, we characterize the consequences of failing to account for these shifts. In \Cref{subsec:problem_definition}, we formalize our goal of safely allocating recommendations (``nudges'') to achieve improved match outcomes without inducing congestion. Finally, in \Cref{subsec:allocation}, we develop a solution for this concern---a heuristic bilevel optimization approach to compute such an allocation---and theoretically analyze its performance. \Cref{fig:recommendation_induced_congestion} overviews the problem and technical approach.

\begin{figure}[t]
\centering
\begin{subfigure}{\textwidth}
\resizebox{\textwidth}{!}{
\begin{tikzpicture}[
    box/.style={
        draw,
        rounded corners,
        align=center,
        text width=3.0cm,
        minimum height=2.5cm,
        inner sep=8pt,
        font=\small
    },
    arrow/.style={
        ->,
        thick
    },
    title/.style={
        font=\bfseries\small,
        align=center
    }
]

\node[box, draw=blue!70, fill=blue!8] (predict) {
    \textbf{1. Prediction}\\[0.35em]
    Offer likelihood tool predicts \high chance at a program given historical data
};

\node[box, draw=green!60!black, fill=green!12, right=0.8cm of predict] (recommend) {
    \textbf{2. Recommendation}\\[0.35em]
    Nudge students to programs where they have \high offer likelihood\\[0.35em]
};

\node[box, draw=blue!70, fill=blue!8, right=0.8cm of recommend] (adopt) {
    \textbf{3. Adoption}\\[0.35em]
    Many students follow the same recommendation
};

\node[box, draw=blue!70, fill=blue!8, right=0.8cm of adopt] (competition) {
    \textbf{4. Congestion}\\[0.35em]
    Competition for recommended program increases
};

\node[box, draw=blue!70, fill=blue!8, right=0.8cm of competition] (invalid) {
    \textbf{5. Lower \\offer rates}\\[0.35em]
    Offer likelihood predictions no longer hold; decreased trust
};

\draw[arrow] (predict) -- (recommend);
\draw[arrow] (recommend) -- (adopt);
\draw[arrow] (adopt) -- (competition);
\draw[arrow] (competition) -- (invalid);

\draw[arrow, dashed, color=green!60!black, transform canvas={yshift=0.25cm}] (recommend) -- (adopt);
\draw[arrow, dashed, color=green!60!black, transform canvas={yshift=0.25cm}] (adopt) -- (competition);
\draw[arrow, dashed, color=green!60!black, transform canvas={yshift=0.25cm}] (competition) -- (invalid);

\draw[
    arrow,
    dashed,
    color=green!60!black,
    bend right=20
] (invalid.north west) to (recommend.north east);

\end{tikzpicture}
}
\caption{Our method simulates outcomes (green arrows) to generate recommendations that mitigate congestion.}
\label{fig:recommendation_induced_congestion}
\end{subfigure}

\begin{subfigure}{\textwidth}
    \centerline{\includegraphics[width=0.77\textwidth]{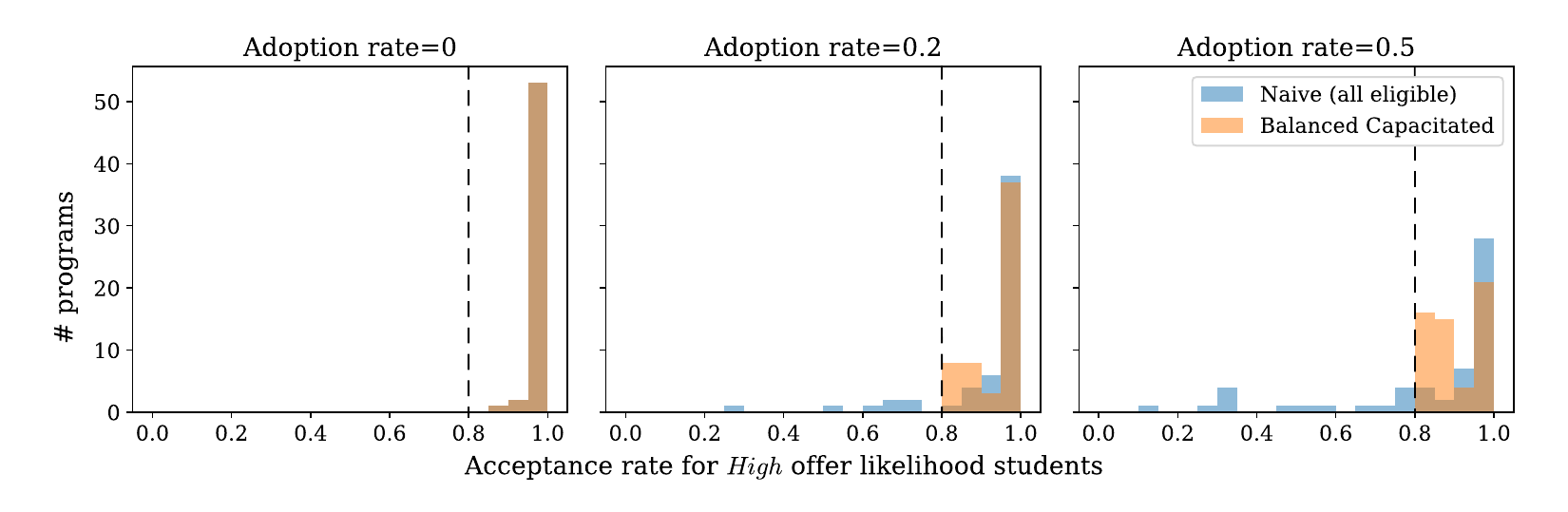}}
    \caption{Simulated acceptance rates at 3 different simulated adoption rates.}
    \label{fig:acceptance_rates_hist}
\end{subfigure}
\caption{\textbf{Recommendation-induced congestion.} (a) shows a conceptual diagram of recommendation-induced congestion and how our method mitigates congestion. (b) shows acceptance rates at nudge-eligible programs, at various rates of simulated nudge adoption. Under \naive nudge allocation (blue), simulated acceptance rates among \high offer likelihood students drop substantially with adoption. We compare these to our \optimizesqrt nudge allocation approach designed to keep these rates $\geq$ 80\% (orange).}
\end{figure}

\subsection{Motivation: consequences of recommendation-induced congestion}
\label{subsec:consequences_of_congestion}
To illustrate recommendation-induced congestion, we simulate\footnote{In our setting, we can accurately \textit{simulate} match outcomes given rankings. Then, using historical data on rankings and assumptions about how applicants respond to nudges, we can evaluate match outcomes for any nudge allocation.} the potential consequences of a naive approach to recommendation using data from the 2024-25 NYC high school admissions cycle. (Detailed descriptions of the data and setting appear in \Cref{sec:datasetup}; here, we discuss results to motivate our method.) Recall that in our deployed intervention, our desirable matches are between students from target middle schools and nearby, high-performing high school programs where they have \high offer likelihood. The naive intervention approach nudges each student to \textit{all} programs that meet these nudge-eligible criteria (i.e., all pairs forming a desirable match).\footnote{\Cref{subsec:nonpersonalized_nudges} further evaluates the non-personalized approach used in prior interventions (every student at each middle school receives the same list of all nearby, high-performing programs), which induces even more congestion.} 

\Cref{fig:acceptance_rates_hist} shows histograms of simulated acceptance rates, among students with \high pre-nudge offer likelihoods across programs, under different assumed rates of students adopting nudges. In blue is the naive approach of nudging all students to all eligible programs (we defer discussion of orange to \Cref{sec:simulation_results}).
Adoption of nudges causes steep \textit{drop-offs in program acceptance rates}, even among students who ex ante were told they have \high (at least 99\%) likelihood of being admitted. 

Furthermore, congestion can induce \textit{disparate effects} when applicants have unequal access to high-performing programs. In NYC, there are large geographic and income-based disparities in the number of high-performing programs accessible to applicants (illustrated in \Cref{fig:access_disparities}). Under a naive recommendation approach, congestion is disproportionately high at programs in areas where students tend to have less access to other options. \Cref{fig:match_rates} illustrates this effect in simulation: applicants with less \textit{ex ante} access to these programs have the lowest match rates to nudged programs, even when nudged.

Thus, naive nudge allocation would fail to preserve offer likelihoods, with disparate effects. This would undermine both trust in the offer likelihood tool, which is designed to help students more effectively navigate program options, and would also diminish the effectiveness of the intervention at scale. Our method is designed to address both of these effects.

\begin{figure}[t]
  \centering
  \begin{subfigure}[t]{0.54\linewidth}
    \centering
    \includegraphics[width=0.95\linewidth]{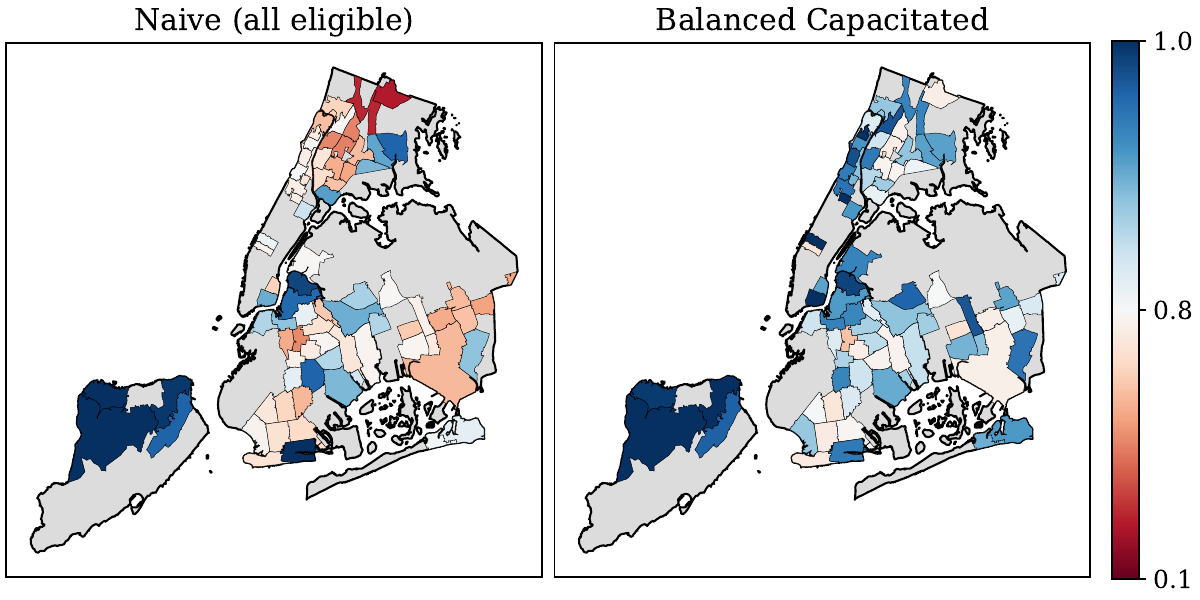}
    \caption{Simulated match rates by zip code. Gray zip codes have $\leq 2$ adopting students.}
    \label{fig:match_rate_geographic}
  \end{subfigure}
  \hfill
  \begin{subfigure}[t]{0.42\linewidth}
    \centering
    \includegraphics[width=0.8\linewidth]{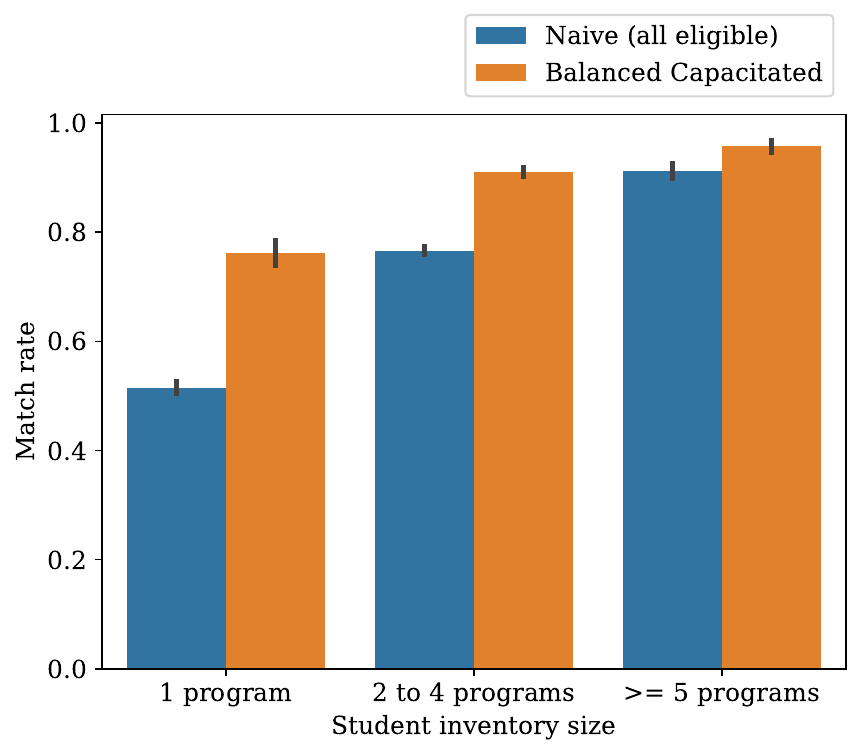}
    \caption{Simulated match rates by inventory size (access to high-performing programs).}
    \label{fig:match_rate_invsize}
  \end{subfigure}
  \caption{\textbf{Match rates to nudged programs at 20\% adoption.} In simulation, the \naive approach of nudging all eligible students to all eligible high school programs leads to \textit{steep, disparate} drop-offs in the match rates of nudged students to their nudged programs, despite the fact that they originally had \high offer likelihoods.}
  \label{fig:match_rates}
\end{figure}

\subsection{Model and Problem Definition}
\label{subsec:problem_definition}

Suppose we have some exogenously-specified set of ``desirable matches'' $Z$, where each binary $Z_{ij}$ indicates whether or not it would be desirable for student $i$ to match to program $j$. The \textit{intervention goal} is to increase the number of matches $(i,j)$ with $Z_{ij}=1$. In our setting, a student-program match is desirable ($Z_{ij}=1$) only if it meets certain ``nudge-eligibility'' criteria: an applicant $i$ attends a target middle school, has \high offer likelihood at a high-performing program $j$, and lives near $j$. (\Cref{sec:datasetup} contains a detailed description.) The \textit{recommendation task} is to choose an allocation of nudges $\nudgeallocation$, where each binary $\nudgeallocation_{ij}$ indicates whether student $i$ is nudged to program $j$. We only nudge individuals toward desirable matches ($X_{ij} \leq Z_{ij}$). 
Given a nudge allocation $\nudgeallocation$, students adopt nudges and apply to programs according to a function $\adoptionfxn(\nudgeallocation)$. The resulting applications induce a match outcome $\matchoutcomeonnudges = \matchsimulator(\adoptionfxn(\nudgeallocation))$, with $\matchoutcomeonnudges(i)$ denoting applicant $i$'s match.

\begin{definition}[Congestion constraint] We require nudges to preserve offer likelihoods. For the realized match process, let $\programlossfxn(\matchoutcome) = \min\{\programlossfxn^{\high}(\matchoutcome),\programlossfxn^{\nudged}(\matchoutcome)\}$, where $\programlossfxn^{\high}(\matchoutcome),\programlossfxn^{\nudged}(\matchoutcome)$ are the acceptance rates at program $j$ among all \high likelihood applicants and the subset nudged to $j$, respectively.\footnote{A student may not functionally apply to all programs on their list if they match to a higher-ranked one. Formally, $\programlossfxn^{\high} = $ (\# \high-likelihood students who functionally apply and match to $j$) / (\# \high-likelihood students who functionally apply to $j$), with $\programlossfxn^{\high} := 1$ if there are no such applicants, and analogously for $\programlossfxn^{\nudged}$. Note that other choices of $\ell$ are also compatible with our approach, e.g., considering acceptance rates among all applicants.} Then, we require that $\programlossfxn(\matchoutcomeonnudges) \geq L$ for all $j\in\programs$, where $\programs$ is the set of all programs $j$ that are ``nudge-eligible'' (meaning $Z_{ij}=1$ for at least one student $i$).
\end{definition}

When instantiating our method, we enforce the constraint among \high likelihood applicants only, since estimates for the adopting subset can be noisy when the subset is small; in our simulations (\Cref{sec:simulation_results}), we set $L = 0.8$, and in the deployment, we set $L = 0.95$. 
For a nudge allocation $\nudgeallocation$, let
$
\operatorname{Matches}_Z(\nudgeallocation)
:=
\sum_{i,j} Z_{ij}\mathbf{1}\{\matchoutcomeonnudges(i)=j\}
$
denote the number of desirable matches. 
Then, our above discussion yields the following optimization problem:
\begin{align}
    \nudgeallocation^* \in \argmax_{\nudgeallocation} \quad & \operatorname{Matches}_Z(\nudgeallocation) \label{opt:direct}\\
    \text{s.t.} \quad \programlossfxn(\matchoutcomeonnudges) \geq L \quad \forall j \in \programs; &
    \qquad
    \nudgeallocation_{ij} \in \{0, 1\},\ \ \nudgeallocation_{ij} \leq Z_{ij} \quad \forall i,j \nonumber
\end{align}
\noindent However, we cannot directly optimize \eqref{opt:direct} in deployment: predicting $\mathbf{1}\{\matchoutcomeonnudges(i) = j\}$ and $\programlossfxn(\matchoutcomeonnudges)$ for a potential nudge allocation $\nudgeallocation$ requires detailed counterfactual information on each applicant's behavior and the resulting match outcome. In other words, the relationships between a nudge allocation, the objective, and the congestion constraints are intermediated by unknown applicant behavior $\adoptionfxn(\cdot)$ and the match algorithm. Instead, we develop an approach to calculating a deployment-time allocation that approximately solves \eqref{opt:direct} in practice.

\subsection{Methodological Approach and Theoretical Results}
\label{subsec:allocation}

Our method, summarized in \Cref{fig:fullpipelinediagram}, separates recommendation into two stages. First, using historical data, we calculate \textit{nudge capacities} $\nudgecaps \in \mathbb{N}^{|\programs|}$, the maximum number of applicants we can nudge to each program such that we expect to satisfy the congestion constraints $\programlossfxn \geq L$. At deployment time, we then allocate nudges given these pre-computed capacities, using an integer program whose objective function is concave increasing in the number of nudges sent to each applicant so as to broadly distribute the available nudges. This decomposition allows us to leverage historical data to ensure we satisfy the congestion constraints, and then allocate recommendations at deployment-time without advance knowledge of applicants or their preferences. 

A central challenge our method solves is determining how high nudge capacities can be while satisfying the congestion constraints. We do so using the following bilevel optimization approach. Let CapacitatedNudgeAllocation, $\textbf{\methodabbr}(\nudgecaps)$, denote the nudge allocation returned by our integer program, given nudge capacities $\nudgecaps$. To evaluate candidate nudge capacities $\nudgecaps$, we make an \textit{adoption} assumption about how applicants who \textit{would have} submitted a ranking $\applicationprofiles$ without nudges instead submit a ranking $\applicationprofiles'$ with given a set of nudges. We further assume that, given any application behavior, we can evaluate whether the congestion constraints are satisfied (e.g., with a simulator). Our method for calculating nudge capacities, \algoname, then uses \textbf{\methodabbr} as a subroutine to evaluate candidate nudge capacities and search for the largest nudge capacity vector such that the nudge-induced applications satisfy the congestion constraints.

\pgfdeclarelayer{behind background}
\pgfsetlayers{behind background,background,main}

\begin{figure}[t]
\centering

\definecolor{historicalblue}{RGB}{217,226,243}
\definecolor{optgreen}{RGB}{220,245,210}
\definecolor{insetgray}{RGB}{159,159,159}

\resizebox{\textwidth}{!}{
\begin{tikzpicture}[remember picture,
    node distance=1.6cm and 2.5cm,
    box/.style={draw, rectangle, rounded corners, minimum width=0cm, minimum height=0.9cm, align=center},
    historicalinput/.style={box, fill=historicalblue},
    input/.style={box, fill=orange!12},
    opt/.style={box, fill=optgreen},
    def/.style={box, fill=none},
    sim/.style={box, fill=white},
    output/.style={
        draw=none,
        fill=none,
        align=center,
        inner sep=2pt
    },
    arrow/.style={->, thick},
    font=\large,
    block/.style={
        draw,
        rectangle,
        fill=black!4,
        inner sep=14pt
    },
    line/.style={thick}
]

\node[historicalinput] (historical_23_25) {\textbf{Historical data (2023-25)}};

\node[opt, below=1.8cm of historical_23_25] (cna) {\textbf{NudgeCapComputation}};
\node[inner sep=0pt, outer sep=0pt] (cnaSWmark) at (cna.south west) {};

\draw[arrow] (historical_23_25) -- (cna);

\node[opt, right=3cm of cna] (algo1) {\textbf{\methodabbr} \\\textbf{(Nudge allocation)}};

\node[output, right=2.2cm of algo1] (deploy) {\textbf{Deployment}};
\node[output, right=3.8cm of deploy] (eval) {\textbf{Analyze} \\\textbf{results}};

\draw[arrow] (algo1) -- node[above]{\textit{Nudges} $X$} (deploy);
\draw[arrow] (deploy) -- node[above, align=center]{\textit{Submitted} \\\textit{application profiles}} (eval);

\node[input, above=1.8cm of algo1] (match_simulator) {\textbf{Real-time data (2025-26)}};

\draw[arrow] (match_simulator) -- node[right, align=left]{\textit{Student attributes}\\ \textit{+ offer likelihoods}} (algo1);
\draw[arrow] (historical_23_25) -- (algo1);

\draw[arrow] ([yshift=0pt]cna.east) --
    node[above, align=center] [xshift=0.15cm]{\textit{Nudge caps $\nudgecaps$}}
    ([yshift=0pt]algo1.west);

\begin{scope}[on background layer]
\node[
    draw,
    color=insetgray,
    inner sep=8pt,
    line width=1.5pt,
    fit=(historical_23_25) (cna)
] {};
\end{scope}

\end{tikzpicture}
}

\vspace{1cm}

\resizebox{\textwidth}{!}{
\begin{tikzpicture}[remember picture, 
    node distance=1.6cm and 2.5cm,
    box/.style={draw, rectangle, rounded corners, minimum width=0cm, minimum height=0.9cm, align=center},
    historicalinput/.style={box, fill=historicalblue},
    input/.style={box, fill=orange!12},
    opt/.style={box, fill=optgreen},
    def/.style={box, fill=none},
    sim/.style={box, fill=white},
    output/.style={
        draw=none,
        fill=none,
        align=center,
        inner sep=2pt
    },
    arrow/.style={->, thick},
    font=\large,
    block/.style={
        draw,
        rectangle,
        fill=black!4,
        inner sep=14pt
    },
    line/.style={thick}
]

\node[opt, minimum width=2.4cm] (cna2) {\methodabbr \\ (Nudge \\ allocation)};
\node[sim, right=2cm of cna2, minimum width=2cm] (adoption)
    {Adoption \\ function $\adoptionfxn$};
\node[sim, right=2cm of adoption, minimum width=2cm] (match)
    {Match \\ simulator $\matchsimulator$};
\node[sim, right=1.8cm of match, minimum width=3.2cm] (congestion)
    {Evaluate \\ congestion \\ constraint $\ell_j \ge L$};

\draw[arrow] (cna2) -- node[above, output, align=left] {\textit{Nudges} $X$} (adoption);
\draw[arrow] (adoption) -- node[above, output, align=center] {\textit{Post-nudge} \\ \textit{profiles} $\succ'$} (match);
\draw[arrow] (match) -- node[above, output] {\textit{Match} $\mu$} (congestion);

\node[historicalinput, above=2.1cm of adoption, minimum width=11cm, font=\Large, minimum height=0.95cm]
    (historical) {\textbf{Match inputs (2024--25)}};

\node[output, below=0.3cm of historical, xshift=-2.7cm, align=left]
    (hist_student_label) {\textit{Student attributes} \\ \textit{+ offer likelihoods}};
\node[output, below=0.3cm of historical, xshift=1.65cm, align=left]
    (hist_app_label) {\textit{Student application} \\ \textit{profiles $\succ$}};
\node[output, below=0.3cm of historical, xshift=5.7cm, align=left]
    (hist_prog_label) {\textit{Program capacities} \\ \textit{+ rankings}};

\draw[arrow] ($(historical.south)+(-4.21cm,0)$) -- (cna2.north);
\draw[arrow] ($(historical.south)+(0,0)$) -- (adoption.north);
\draw[arrow] ($(historical.south)+(4.1cm,0)$) -- (match.north);

\node[historicalinput, left=2.4cm of cna2, yshift=0.75cm, minimum width=2.2cm]
    (hsmetrics) {\textbf{HS metrics} \\ \textbf{(2023--24)}};
\node[historicalinput, below=0.25cm of hsmetrics, minimum width=2.2cm]
    (feeder) {\textbf{Feeder} \\ \textbf{patterns} \\ \textbf{(2023--25)}};

\node[output, right=0.15cm of hsmetrics, yshift=0.15cm, align=center]
    (nudgeeligible) {\textit{Nudge-} \\ \textit{eligible HS}};
\node[output, right=0.15cm of feeder, yshift=0.05cm, align=center]
    (targetms) {\textit{Target MS}};

\draw[arrow] ([yshift=-10.8pt]hsmetrics.east) -- ([yshift=10.5pt]cna2.west);
\draw[arrow] ([yshift=12pt]feeder.east) -- ([yshift=-10.4pt]cna2.west);

\node[
    sim,
    below=0.8cm of congestion.south,
    minimum width=2.5cm,
    minimum height=1.2cm,
    align=center,
    aspect=2.2
] (decision_diamond) {Update $\mathcal C$ \\ or terminate};

\coordinate (feedback_down) at ($(congestion.south)+(0,-1.5cm)$);
\coordinate (feedback_left) at ($(cna2.south)+(0,-1.43cm)$);

\draw[line] (congestion.south) -- (decision_diamond.north);
\draw[line] (decision_diamond.west) -- node[above, output]
    {\textit{\quad \quad Candidate nudge capacities $\nudgecaps$}} (feedback_left);
\draw[arrow] (feedback_left) -- (cna2.south);

\coordinate (finalout) at ($(congestion.south)+(0cm, -3.6cm)$);
\draw[arrow] (decision_diamond.south) -- (finalout);
\node[output, below=0cm of finalout] {\textit{Nudge caps $\nudgecaps$}};

\node[output, below=1.8cm of cna2, xshift=1.65cm, align=left, font=\LARGE]
    (algo_label) {\textbf{NudgeCapComputation}};

\begin{pgfonlayer}{background}

\coordinate (capacity_top) at ([yshift=13pt]cna2.north);
\coordinate (capacity_bottom) at ([yshift=-8pt]algo_label.south);

\node[
    block,
    fill=optgreen,
    inner xsep=10pt,
    inner ysep=0pt,
    fit=(cna2)(adoption)(match)(congestion)
        (feedback_left)(capacity_top)(capacity_bottom)
] (capacityblock) {};

\coordinate (oracle_top) at ($(adoption.center |- cna2.north)+(0,8pt)$);
\coordinate (oracle_bottom) at ($(adoption.center |- cna2.south)+(0,-15pt)$);

\node[
    block,
    fill=gray!30,
    inner xsep=5pt,
    inner ysep=0pt,
    fit=(adoption)(match)(congestion)(oracle_top)(oracle_bottom)
] (oracleblock) {};
\end{pgfonlayer}

\node[output, anchor=south, font=\LARGE]
    at ($(oracleblock.center |- feedback_left)+(0,1.05cm)$)
    (oracle_label) {\textbf{Oracle $\mathcal O$}};

\begin{scope}[on background layer]
\node[
    draw,
    inner sep=14pt,
    color=insetgray,
    line width=1.5pt,
    fit=(cna2)(adoption)(match)(congestion)(feedback_down)(feedback_left)(hsmetrics)(capacityblock)(historical)
] (algo1expanded) {};
\end{scope}

\node[inner sep=0pt, outer sep=0pt] (algo1expandedNWmark)
    at ([xshift=4pt,yshift=-4pt]algo1expanded.north west) {};

\end{tikzpicture}
}

\begin{tikzpicture}[remember picture, overlay]
\draw[dashed, line width=0.5pt, dash pattern=on 2.3pt off 2pt]
    ([xshift=-0.34cm, yshift=-0.3cm]cna.south west) -- ([xshift=-0.34cm, yshift=-1.3cm]cna.south west);
\draw[dashed, line width=0.5pt, dash pattern=on 2.3pt off 2pt]
    ([xshift=0.9cm, yshift=-0.3cm]cna.south) -- ([xshift=11.3cm, yshift=-1.32cm]cna.south);
\end{tikzpicture}

\caption{\textbf{Overview of method and input data.} \algoname computes program-level nudge capacities $\nudgecaps$ using historical data (blue). Then at deployment time, we allocate nudges to new students (orange) subject to these nudge capacities. The inset shows more detail on how NudgeCapComputation searches over candidate nudge capacities $\nudgecaps$: for each candidate $\nudgecaps$, it uses \methodabbr as a subroutine to allocate nudges, simulates applicant behavior and the resulting match outcome given the nudges, and evaluates whether the congestion constraints are satisfied.}
\label{fig:fullpipelinediagram}

\end{figure}

\paragraph{{Theoretical result}} We show that our algorithmic approach yields a set of nudge capacities and allocations that maximize a lower bound of the objective in \eqref{opt:direct} while satisfying the constraints. For theoretical simplicity, we assume the following adoption behavior.

\begin{assumption}[{Adoption function}]
{An applicant receiving $k$ nudges adopts nudges by adding one nudged program to the top of their list with probability $q(k)$, where $q(0)=0$ and $q$ is increasing and discretely concave.} Applicants do not apply to desirable matches without adopting a nudge.
\label{assm:adoption}
\end{assumption}

Our guarantees are stated in terms of expectations over random application and adoption behavior and the resulting match outcomes. (We also note that, in reality, applications often include desirable matches even without nudges.)

\begin{definition}[Constraint Oracle and Safe recommendations] Given a nudge allocation $\nudgeallocation$, let $\mathcal O_j(\nudgeallocation)\in\{0,1\}$ be a deterministic constraint \textit{oracle} that indicates whether the matching process induced by $\nudgeallocation$ satisfies the congestion constraint at program $j$ in expectation: i.e., whether the expected number of \high likelihood and nudged applicants who are admitted is at least a fraction $L$ of the expected number that effectively applies.
  
  Allocation $\nudgeallocation$ is safe if $\mathcal O_j(\nudgeallocation)=1$ for every $j\in\programs$, i.e., the constraint oracle is satisfied. Nudge capacities $\nudgecaps$ are safe if the induced allocation $\textbf{\methodabbr}(\nudgecaps)$ is safe.\footnote{We define safety for a nudge capacity vector $\nudgecaps$ with respect to a specific allocation it induces. This allows us to efficiently instantiate an oracle $\mathcal O$, as described in \Cref{subsubsec:instantiation_details}. We note that our theoretical results can extend beyond this definition (e.g., if the oracle is able to consider arbitrary allocations compatible with a single nudge capacity vector).} 
\end{definition}

The oracle captures how applicants adopt nudges and how their resulting applications are processed by deferred acceptance; in our setting, we instantiate $\mathcal O$ using historical data and a high-fidelity simulator of the match process. 

\begin{assumption}[Feasible and monotone congestion constraint]
The congestion constraints are satisfied without nudges: $\mathcal O_j(\mathbf{0}) = 1$ for all $j$. Furthermore, congestion is monotonic in nudge capacities: increasing a program's nudge capacity cannot decrease congestion at that program and cannot increase congestion at any other program. That is, holding $\nudgecaps_{-j}$ fixed, $\mathcal O_j\!\left(\textbf{\methodabbr}(\nudgecaps_{-j},c)\right)$ is non-increasing in $c$, and $\mathcal O_k\!\left(\textbf{\methodabbr}(\nudgecaps_{-j},c)\right)$ is non-decreasing in $c$ for all $k\neq j$.
\label{assm:deterministic_sim}
\end{assumption}

Then our approach provides an $L$-approximation to the objective function in \eqref{opt:direct}, among safe nudge capacities $\nudgecaps$ and feasible allocations $\nudgeallocation$ given those capacities.\footnote{By considering allocations feasible under some nudge capacity vector $\nudgecaps$ for which $\textbf{\methodabbr}(\nudgecaps)$ is safe, rather than over all allocations that are themselves safe, the benchmark on the right-hand side differs from our original optimization problem \eqref{opt:direct}. These two sets need not coincide, depending on the oracle used to determine safety. This gap is a consequence of our approach to first calculate nudge capacities, and then optimal allocations given those capacities.} 

\begin{theorem}[{{Approach approximately maximizes expected matches}}]
\label{thm:overall_guarantee}
{Suppose \Cref{assm:adoption,assm:deterministic_sim}. Let $\nudgecaps^*$ be returned by \algoname and $\nudgeallocation^*=\textbf{\methodabbr}(\nudgecaps^*)$. Then, $\nudgeallocation^*$ is safe and}
\[{\mathbb{E}\left[\operatorname{Matches}_Z(\nudgeallocation^*)\right]
\geq
L\left[\max_{\substack{
\nudgecaps'\text{ safe}\\
\nudgeallocation\text{ feasible for }Z,\nudgecaps'
}}
\mathbb{E}\left[\operatorname{Matches}_Z(\nudgeallocation)\right]\right].}
\]
\end{theorem}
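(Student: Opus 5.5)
The proof has three parts: safety of the returned capacities, a maximality property of those capacities, and an allocation-level argument that yields the factor $L$.

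\textbf{Safety.} By construction, \algoname only moves to candidate capacities whose induced allocation passes the oracle. It starts from $\mathbf{0}$, which is safe by \Cref{assm:deterministic_sim}, and raises coordinates while $\mathcal O_j(\textbf{\methodabbr}(\nudgecaps))=1$ for all $j$. So $\nudgeallocation^*$ is safe by definition.

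\textbf{Maximality.} I then show that $\nudgecaps^*$ is coordinatewise maximal, and in fact dominates every safe $\nudgecaps'$. The monotonicity assumption drives this. Raising $\nudgecaps_j$ can only break the constraint at $j$, never at any $k\neq j$. So a greedy or coordinate search that raises each coordinate until its own constraint binds reaches a fixed point. I then argue that for any safe $\nudgecaps'$, we have $\nudgecaps'\le\nudgecaps^*$ coordinatewise. Suppose instead that $\nudgecaps'_j>\nudgecaps^*_j$ for some $j$. Consider $\max(\nudgecaps',\nudgecaps^*)$: raising the other coordinates from $\nudgecaps'$ up to this vector only helps program $j$, so its $j$-th constraint still holds. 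This contradicts the fact that the search stopped at $\nudgecaps^*_j$. This lattice-style step is where I expect the main obstacle, since it depends on exactly how \algoname searches. If only coordinatewise maximality is available, I would fall back on comparing against a benchmark dominated by $\nudgecaps^*$.

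\textbf{Allocation and the factor $L$.} Under \Cref{assm:adoption}, desirable matches arise only from adopted nudges, with the nudged program placed at the top of the list. Let $A_j$ be the expected number of applicants adopting a nudge to $j$. An adopter who lists $j$ first applies to $j$ effectively, and is nudged and \high likelihood. So whenever $\mathcal O_j=1$, the expected number of matches to $j$ is at least $L$ times the expected number of effective applicants from the nudged/\high pool. I would then lower bound $\mathbb E[\operatorname{Matches}_Z(\nudgeallocation^*)]\ge L\sum_j \mathbb E[A_j]$, using $Z_{ij}=1$ for nudged pairs. I am assuming here that the oracle's ratio can be restricted to the adopters. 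Justifying this needs care, because the constraint is stated over the whole \high-and-nudged pool; I would handle it by noting that adopters are a subset ranking $j$ first.

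Conversely, for any feasible $\nudgeallocation$ under $\nudgecaps'$, each student contributes at most one desirable match, and only if they adopt. So $\mathbb E[\operatorname{Matches}_Z(\nudgeallocation)]\le\sum_i q(k_i(\nudgeallocation))$, where $k_i$ is the number of nudges student $i$ receives. It then suffices to show that $\textbf{\methodabbr}(\nudgecaps^*)$ maximizes $\sum_i q(k_i)$ over allocations with $X\le Z$ and per-program counts at most $\nudgecaps^*$. This holds by the concave-objective integer program, assuming its objective is exactly $q$. Since $\nudgecaps'\le\nudgecaps^*$, every such $\nudgeallocation$ is feasible for $\nudgecaps^*$. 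Chaining these gives
\[
\mathbb E[\operatorname{Matches}_Z(\nudgeallocation^*)]\ \ge\ L\sum_i q(k_i(\nudgeallocation^*))\ \ge\ L\,\mathbb E[\operatorname{Matches}_Z(\nudgeallocation)].
\]
Discrete concavity of $q$ ensures the integer program is a concave separable transportation problem, so it is solvable to optimality with integral solutions. The identity $\sum_j\mathbb E[A_j]=\sum_i q(k_i)$ follows from the adoption model.
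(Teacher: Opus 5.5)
\textbf{The dominance step fails, and it rests on a misreading of \algoname.} The algorithm does not ascend from $\mathbf 0$. It starts at $\overline c$ and only lowers coordinates: each $\nudgecaps_j$ is set to the largest value not exceeding its current one at which $\mathcal O_j$ passes, with the other coordinates held at the previous iterate. Intermediate iterates are generally unsafe. Safety comes from the fixed point: in the last pass nothing changed, so each update certified $\mathcal O_j$ at the final vector itself.

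Your contradiction argument also uses cross-program monotonicity in the wrong direction. Knowing that $j$'s constraint holds at $(\max(\nudgecaps',\nudgecaps^*)_{-j},\nudgecaps'_j)$ only transfers to points with \emph{larger} other coordinates. The stopping condition, however, concerns $(\nudgecaps^*_{-j},\nudgecaps^*_j+1)$, whose other coordinates are smaller. For a bottom-up search the claimed dominance is in fact false. Suppose $\mathcal O_1(\textbf{\methodabbr}(c))=1$ iff $c_1\le c_2$, and $\mathcal O_2(\textbf{\methodabbr}(c))=1$ iff $c_2\le c_1$. Then \Cref{assm:deterministic_sim} holds, and the greatest safe vector below $\overline c=(5,5)$ is $(5,5)$. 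Yet single-coordinate ascent from $\mathbf 0$ never moves.

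The argument that works is a top-down induction. Suppose $\nudgecaps'\le\overline c$ is safe and $\nudgecaps'\le\nudgecaps^{(t-1)}$. In $j$'s update the other coordinates sit at $\nudgecaps^{(t-1)}_{-j}\ge\nudgecaps'_{-j}$, so $\nudgecaps'_j$ is still a feasible candidate, and hence $\nudgecaps^{(t)}_j\ge\nudgecaps'_j$. You also need to reduce an arbitrary safe $\nudgecaps'$ to $\min\{\nudgecaps',\overline c\}$. This vector has the same feasible allocation set, hence the same \textbf{\methodabbr} output, so it is still safe. Retreating to a benchmark dominated by $\nudgecaps^*$ would not prove the stated theorem, whose benchmark ranges over all safe capacity vectors.

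\textbf{The second gap is the step ``assuming its objective is exactly $q$.''} \textbf{\methodabbr} maximizes $\sum_i u(k_i)$ for a fixed, designer-chosen increasing strictly concave $u$ (e.g.\ $\sqrt{x}$). By contrast, $q$ is the unknown, only weakly concave adoption curve, and nothing in your argument connects the two.

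The missing idea has two parts. First, every optimizer for any such $u$ has prefix-maximal coverage, i.e.\ it maximizes $\sum_{r\le t}\mathrm{Cov}_r$ for every $t$. The paper shows this by viewing allocations as independent sets of a gammoid on student copies, with weights $u(r)-u(r-1)$, and invoking the matroid greedy theorem. Second, summation by parts gives $\sum_i q(k_i)=\sum_t(\Delta_t-\Delta_{t+1})\sum_{r\le t}\mathrm{Cov}_r$, where $\Delta_t=q(t)-q(t-1)$ and all coefficients are nonnegative. So prefix-maximal coverage maximizes expected adoption for every concave $q$ simultaneously.

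The rest of your third part matches the paper: the upper bound by $\sum_i q(k_i)$ and the lower bound by $L$ times expected adoption. However, justify the lower bound through the nudged-pool part of the constraint, $\ell_j^{\nudged}$. Under \Cref{assm:adoption}, the nudged applicants who effectively apply to $j$ are exactly the $j$-adopters. Your ``subset ranking $j$ first'' argument would not work: ranking $j$ first does not change priority at $j$, so a subset need not inherit the pool's acceptance rate.
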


We now detail each component and prove this result.

\subsubsection{Allocating nudges subject to nudge capacities}
\label{subsubsec:allocation_given_nudgecaps}
First, we describe how to compute an allocation $\nudgeallocation$ given program nudge capacities $\nudgecaps$, students $\students$, and desirable matches $Z$. We use this as a subroutine when calculating the nudge capacities themselves. Let $k_i(\nudgeallocation) = \sum_{j} \nudgeallocation_{ij}$ be the number of nudges allocated to student $i$ under allocation $\nudgeallocation$. For some utility function $u:\mathbb{N}\to\mathbb{R}_{\ge 0}$, we formulate the following integer program, denoted Capacitated Nudge Allocation (\textbf{\methodabbr}):
\begin{align*}
    \textbf{\methodabbr}(\nudgecaps, u)\textbf{} &:
    \argmax_{\nudgeallocation}\ \sum_{i \in \students} u(k_i(\nudgeallocation)) \\
    \text{s.t.} \quad
     \sum_{i\in \students} \nudgeallocation_{ij} \leq \programnudgecap \quad \forall j \in \programs; &
    \qquad
    \nudgeallocation_{ij} \in \{0, 1\},\ \ \nudgeallocation_{ij} \leq Z_{ij} \quad \forall i,j
\end{align*}
The first constraint ensures we allocate at most $\nudgecaps_j$ nudges to program $j$. The objective maximizes total utility of the allocated nudges, for some function $u$ that only depends on the \textit{number} of programs student $i$ is nudged to. 

When $u$ is linear ($u(x) = x$), \textbf{\methodabbr} is equivalent to maximizing the total number of nudges allocated. We choose $u$ to be strictly discretely concave and increasing, to select for a more balanced distribution of nudges.
Intuitively, a balanced distribution of nudges may be both more fair and yield more desirable matches. Concentrating nudges on students who are eligible for many programs (those with greater access) may leave others with no nudges; given the geographic disparities in access throughout the city (see \Cref{fig:access_disparities}), this could perpetuate existing disparities. Further, because each student ultimately matches (and enrolls) to at most one program, allocating many nudges to the same student may have diminishing returns. 
We now formalize these intuitions.

Given \Cref{assm:adoption} and nudge capacities $\nudgecaps$, \textbf{\methodabbr} approximately optimizes our match objective.

\begin{restatable}[CNA maximizes a lower bound on matches]{theorem}{cnamatchesthm}
\label{thm:cna_matches}
Suppose \Cref{assm:adoption}. If allocation $\nudgeallocation^*$ is safe and maximizes \textbf{\methodabbr}$(\nudgecaps,u)$ for any increasing and strictly discretely concave $u$, then
\[
\begin{aligned}
\mathbb{E}\left[\operatorname{Matches}_Z(\nudgeallocation^*)\right]
{\geq L \left[\max_{\nudgeallocation\text{ feasible for }Z,\nudgecaps}
\mathbb{E}\left[\operatorname{Matches}_Z(\nudgeallocation)\right]\right]},
\end{aligned}
\]
i.e., \textbf{\methodabbr} provides an $L$-approximation to our match objective function among feasible allocations.
\end{restatable}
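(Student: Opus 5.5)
The plan is to sandwich $\mathbb{E}[\operatorname{Matches}_Z(\nudgeallocation)]$ between two quantities that depend only on how many nudges each student receives, and then show that a maximizer of \textbf{\methodabbr}$(\nudgecaps,u)$ simultaneously maximizes the relevant count. Write $N(\nudgeallocation)=\sum_i q(k_i(\nudgeallocation))$. This is the expected number of students who adopt a nudge and add a nudged program to the top of their list.

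\textbf{Upper bound.} Under \Cref{assm:adoption}, a student can only form a desirable match by applying to one, and they apply to a desirable match only by adopting a nudge. So for any feasible $\nudgeallocation$, $\mathbb{E}[\operatorname{Matches}_Z(\nudgeallocation)]\le \mathbb{E}[\#\text{adopters}] = N(\nudgeallocation)$.

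\textbf{Lower bound for the safe $\nudgeallocation^*$.} An adopter places the nudged program $j$ first, so they functionally apply to $j$ and are counted among the nudged applicants at $j$. Safety says that, for each $j$, the expected number of admitted high-likelihood and nudged applicants is at least $L$ times the expected number who effectively apply. I would use this at each program for the nudged adopters, which gives
\[
\mathbb{E}[\#\text{adopters matched to their nudged } j]\ \ge\ L\,\mathbb{E}[\#\text{adopters at } j].
\]
Each such match has $Z_{ij}=1$, since $\nudgeallocation_{ij}\le Z_{ij}$. Summing over $j$ then gives $\mathbb{E}[\operatorname{Matches}_Z(\nudgeallocation^*)]\ge L\,N(\nudgeallocation^*)$.

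This is the step I expect to be delicate. The oracle constrains the pooled high-plus-nudged ratio, so I would either read $\programlossfxn=\min\{\programlossfxn^{\high},\programlossfxn^{\nudged}\}$ as imposing the bound on the nudged subgroup separately, or argue that the definition applies it to the nudged applicants' expected counts directly.

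\textbf{Combinatorial core.} It remains to show $N(\nudgeallocation^*)\ge N(\nudgeallocation)$ for every $\nudgeallocation$ feasible for $Z,\nudgecaps$. The feasible allocations are $b$-matchings in a bipartite graph (students to programs along edges with $Z_{ij}=1$, capacities $\nudgecaps_j$). The set of achievable degree vectors $(k_i)$ on the student side forms the integer points of a polymatroid, namely the transversal/flow structure. Maximizing a separable, increasing, strictly concave objective $\sum_i u(k_i)$ over such a set yields a degree vector that is the unique "most balanced" (lexicographically optimal) base. This is the classical result of Fujishige and Frank on decreasingly-minimal elements of integral polymatroids. That base simultaneously maximizes every separable increasing concave function, in particular $\sum_i q(k_i)$ with $q$ increasing and discretely concave.

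Concretely, I would argue by exchange. Suppose $\nudgeallocation$ gives a higher $N$. Take the symmetric difference of the two allocations and decompose it into alternating paths. Each path either moves one nudge from a student with degree $a$ to one with degree $b$, or increases the total count (which contradicts optimality for increasing $u$). Strict concavity of $u$ forces $\nudgeallocation^*$ to have no improving path with $a\ge b+2$. Then a majorization argument shows that $k(\nudgeallocation^*)$ is weakly majorized by $k(\nudgeallocation)$ from above, so by Karamata $\sum_i q(k_i^*)\ge\sum_i q(k_i)$.

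\textbf{Conclusion.} Chaining the three steps gives
\[
\mathbb{E}[\operatorname{Matches}_Z(\nudgeallocation^*)]\ge L\,N(\nudgeallocation^*)\ge L\,N(\nudgeallocation)\ge L\,\mathbb{E}[\operatorname{Matches}_Z(\nudgeallocation)],
\]
and taking the maximum over feasible $\nudgeallocation$ proves the theorem.
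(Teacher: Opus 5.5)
Your proof is correct. It has the paper's three-part skeleton:
\begin{itemize}
\item expected desirable matches are at most expected adoption $\sum_i q(k_i(X))$ under \Cref{assm:adoption};
\item for the safe $X^*$ they are at least $L\sum_i q(k_i(X^*))$, by applying the oracle at each program and summing;
\item $X^*$ maximizes expected adoption over feasible allocations.
\end{itemize}
On the step you flagged as delicate, the paper's proof implicitly uses the same reading you propose: the oracle bounds the nudged applicants separately, via $\ell_j=\min\{\ell_j^{\mathrm{high}},\ell_j^{\mathrm{nudged}}\}$. A pooled bound would not suffice.

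Only the last step is argued differently. The paper builds a gammoid on student copies and applies the matroid greedy theorem with strictly decreasing level weights. This gives prefix-maximal coverage, $\sum_{r\le t}\mathrm{Cov}_r(X^*)\ge\sum_{r\le t}\mathrm{Cov}_r(X)$ for all $t$ (\Cref{thm:concave_beats_linear_cov}). It then finishes with the summation by parts $\sum_i q(k_i)=\sum_t(\Delta_t-\Delta_{t+1})\sum_{r\le t}\mathrm{Cov}_r$, where $\Delta_t=q(t)-q(t-1)$.

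You instead treat degree vectors as integer points of a polymatroid and cite the Fujishige/Frank--Murota/Tamir theory of decreasingly minimal, least majorized elements. These are the same fact in two languages. Since $\sum_{r\le t}\mathrm{Cov}_r(X)=\sum_i\min\{k_i(X),t\}$, prefix-maximal coverage is exactly weak supermajorization of the degree vectors. The paper's summation by parts is then the integer case of the Tomi\'c--Weyl inequality.

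Two small corrections to your version:
\begin{itemize}
\item Tomi\'c--Weyl, not Karamata, is the right tool, because $\sum_i k_i(X)$ may be smaller than $\sum_i k_i(X^*)$. Alternatively, first extend $X$ to full utilization and use monotonicity of $q$.
\item Integral dec-min elements are unique only up to permuting values, not as vectors. This does not affect the argument.
\end{itemize}
The paper's route is self-contained given the greedy theorem; yours hands the key step to heavier external theory.

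Your "concrete" exchange sketch is a more elementary alternative, but its final "majorization argument" carries all the content and needs to be written out. It does go through:
\begin{itemize}
\item Decompose $E(X^*)\triangle E(X)$ into alternating trails, and concatenate them so that no student is both a start and an end.
\item A trail that would add a nudge to $X^*$ at a program with spare capacity contradicts optimality. Here $u$ is automatically strictly increasing, since $u(r)-u(r-1)>u(r+1)-u(r)\ge 0$.
\item Other trails with a program endpoint only make some $k_e(X)$ smaller than $k_e(X^*)$, which can only help.
\item Each student-to-student trail from $s$ to $e$ satisfies $k_e(X^*)\le k_s(X^*)+1$, since otherwise flipping it strictly improves the concave objective.
\item Pair the unit lost at $e$ (a level at most $k_e(X^*)$) with the unit gained at $s$ (a level at least $k_s(X^*)+1$). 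Every lost level is then at most its paired gained level.
\item Hence $\sum_i\min\{k_i(X),t\}\le\sum_i\min\{k_i(X^*),t\}$ for every $t$.
\end{itemize}
This extends the paper's $\mathrm{Cov}_1$-only alternating-path argument (\Cref{thm:intuition}) to full prefix-maximality without matroids.
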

The proof in \Cref{subsec:appendix_concavity_proof} shows that under any strictly discretely concave and increasing $u$, \textbf{\methodabbr} produces an allocation with \textit{prefix-maximal coverage} (it maximizes the number of applicants who receive at least one nudge ($\mathrm{Cov_1}$), and then at least two, and so on.) Then, given adoption \Cref{assm:adoption} and the preservation of offer likelihoods, nudge coverage implies match guarantees: notably, preserving offer likelihoods is key to intervention effectiveness. The first part of the proof proceeds by showing that the feasible allocations induce a matroid, so the greedy algorithm (which produces  {prefix-maximal} coverage) is optimal. For intuition, we also provide a direct construction in which any allocation that does not maximize $\mathrm{Cov_1}$ can be improved by reallocating nudges from those with multiple nudges to those with no nudges, increasing the concave objective and $\mathrm{Cov_1}$.\footnote{Note that the result is trivially true if $Z_{ij}=1$ for all $i,j$, since then we can arbitrarily transfer nudges to those without nudges; but for general $Z$ we need to show transfer feasibility, which we do so via an alternating path construction.}

The result further implies that the exact choice of $u$ is not important: any increasing and strictly discretely concave $u$ yields the same optimal allocation sets. Thus, we write $\textbf{\methodabbr}(\nudgecaps)$ for the allocation returned using such a $u$, breaking ties according to a fixed ordering over allocations.

In summary, we have described a method (\textbf{\methodabbr}) for computing a nudge allocation $\nudgeallocation$ given nudge capacities $\nudgecaps$ and motivated the use of a strictly concave objective. 

\subsubsection{Computing nudge capacities for congestion constraints}
\label{subsubsec:computing_nudgecaps}
\begin{algorithm}[t]
    \SetAlgoNoLine
    \KwIn{{Constraint oracle $\mathcal O$,} nudge capacity upper bounds $\overline{c}$}
    \KwOut{Program nudge capacities $\nudgecaps$}

    Initialize $\nudgecaps^{(0)} \leftarrow \overline{c}$\\

    \For{$t=1,2,\dots$}{
        $\nudgecaps^{(t)} \leftarrow \nudgecaps^{(t-1)}$\\

        \For{each program $j\in\programs$}{
            
            $\nudgecaps_j^{(t)} \leftarrow$ largest $c \in [0, {\nudgecaps_j^{(t-1)}}]$ s.t. $\mathcal O_j\!\left(\textbf{\methodabbr}\left(({\nudgecaps_{-j}^{(t-1)}},c)\right)\right)=1$,\\
            \qquad\qquad which can be done via binary search between $0$ and ${\nudgecaps_j^{(t-1)}}$.
        }
        \lIf{$\nudgecaps^{(t)} = \nudgecaps^{(t-1)}$}{\Return $\nudgecaps^{(t)}$}
    }
    \caption{\algoname (Diagram illustration in \Cref{fig:fullpipelinediagram} inset)}
    \label{alg:nudgecaps}
  \end{algorithm}
We now describe the computation of nudge capacities $\nudgecaps = (\programnudgecap)_{j\in\programs}$, the maximum number of students we can nudge towards each program while maintaining acceptance rates. 
Algorithm \ref{alg:nudgecaps} (\algoname) {initializes nudge capacities at their given upper bounds and iteratively decreases them to satisfy} the congestion constraints $\programlossfxn \geq L$ for all $j$. To determine this feasibility, it uses $\textbf{\methodabbr}$ as a subroutine to allocate nudges subject to the candidate nudge capacities and queries the oracle. For each program, we decrease its nudge capacity to the largest candidate value satisfying its congestion constraint. This process repeats until all constraints are simultaneously satisfied. 

We set the effective upper bounds $\overline c=\left(\sum_i Z_{ij}\right)_{j\in\programs}$, since at those nudge capacities, all applicants are nudged to all programs where they are nudge-eligible; this is the largest number of students we can simulate nudging using the observed data. Let $(\nudgecaps_{-j},\ c)$ denote the vector $\nudgecaps$ with program $j$'s nudge capacity replaced by $c$. Under \Cref{assm:deterministic_sim}, {\algoname} returns the greatest safe nudge capacities {bounded by $\overline c$}.

\begin{restatable}[Greatest safe nudge capacities]{theorem}{algothm} 
Suppose \Cref{assm:deterministic_sim}. Then \algoname{}\hspace{0pt}$(\mathcal O, \overline{c})$ terminates using $O\!\left(|\programs|\left(1+\sum_{i,j}Z_{ij}\right)\log\left(1+\sum_{i,j}Z_{ij}\right)\right)$ oracle calls, and the returned nudge capacities $\nudgecaps^*$ are safe. Moreover, for any safe $\nudgecaps \leq \overline{c}$, it holds that $\nudgecaps \leq \nudgecaps^*$ coordinatewise; thus, $\nudgecaps^*$ is the unique coordinatewise greatest safe vector bounded by $\overline c$.
\label{thm:algorithm_guarantees}
\end{restatable}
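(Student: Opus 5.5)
The plan is to treat \algoname as a monotone fixed-point iteration, in the style of Knaster--Tarski or a deferred-acceptance-type argument. Write $S_j(\nudgecaps) := \mathcal O_j(\textbf{\methodabbr}(\nudgecaps))$. By \Cref{assm:deterministic_sim}, $S_j$ is non-increasing in $\nudgecaps_j$ and non-decreasing in $\nudgecaps_k$ for $k\neq j$. Also $S_j(\mathbf 0)=1$, because $\textbf{\methodabbr}(\mathbf 0)=\mathbf 0$. For each $j$, let $g_j(\nudgecaps)$ be the largest $c\in[0,\nudgecaps_j]$ with $S_j(\nudgecaps_{-j},c)=1$. The algorithm then sets $\nudgecaps^{(t)}=g(\nudgecaps^{(t-1)})$, with all coordinates updated in parallel from $\nudgecaps^{(t-1)}$.

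First I show that $g_j$ is well defined and that binary search computes it. Well-definedness needs $S_j(\nudgecaps_{-j},0)=1$. This follows from monotonicity in the other coordinates: lowering each $\nudgecaps_k$, $k\neq j$, one at a time down to $0$ can only weakly increase $S_j$. Hence $S_j(\nudgecaps_{-j},0)\ge S_j(\mathbf 0)=1$. Since $S_j(\nudgecaps_{-j},\cdot)$ is non-increasing, the set of feasible $c$ is a prefix $\{0,\dots,c^*\}$, so binary search over $[0,\nudgecaps_j^{(t-1)}]$ is correct and uses $O(\log(1+\overline c_j))$ oracle calls.

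Termination and the query bound come next. Coordinates never increase, since $g_j(\nudgecaps)\le \nudgecaps_j$. If the algorithm does not stop at round $t$, some coordinate has strictly decreased. The total potential $\sum_j \nudgecaps^{(0)}_j=\sum_{i,j}Z_{ij}$, so there are at most $1+\sum_{i,j}Z_{ij}$ rounds. Each round makes $|\programs|$ binary searches of $O(\log(1+\sum Z_{ij}))$ calls each, which gives the stated bound. Safety of the output $\nudgecaps^*$ is immediate: at termination $g(\nudgecaps^*)=\nudgecaps^*$, which means $S_j(\nudgecaps^*_{-j},\nudgecaps^*_j)=1$ for every $j$. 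That is exactly the statement that $\textbf{\methodabbr}(\nudgecaps^*)$ is safe.

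The key step is maximality. I prove by induction on $t$ that every safe $\nudgecaps\le\overline c$ satisfies $\nudgecaps\le\nudgecaps^{(t)}$. The base case $t=0$ holds by hypothesis. For the inductive step, assume $\nudgecaps\le\nudgecaps^{(t-1)}$ and fix $j$. Safety gives $S_j(\nudgecaps)=1$. Raising the other coordinates from $\nudgecaps_{-j}$ to $\nudgecaps^{(t-1)}_{-j}$, one coordinate at a time, keeps $S_j=1$ by monotonicity in $k\neq j$. So $S_j(\nudgecaps^{(t-1)}_{-j},\nudgecaps_j)=1$, and $\nudgecaps_j\le\nudgecaps^{(t-1)}_j$. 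Therefore $\nudgecaps_j$ is feasible in the search defining $g_j$, and $\nudgecaps_j\le \nudgecaps^{(t)}_j$. At termination this gives $\nudgecaps\le\nudgecaps^*$. Since $\nudgecaps^*$ is itself safe and dominates every safe vector bounded by $\overline c$, it is the unique coordinatewise greatest such vector.

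I expect the main obstacle to be the precise use of \Cref{assm:deterministic_sim} in two places: the base case $S_j(\nudgecaps_{-j},0)=1$, and the inductive transfer step. Both require chaining the single-coordinate monotonicity across several coordinates. This works because the assumption holds at every $\nudgecaps_{-j}$, so one can move one coordinate at a time and apply the assumption at each intermediate point. I will state this chaining as a small lemma before the three parts of the proof.
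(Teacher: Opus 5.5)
Your proposal is correct and follows essentially the same argument as the paper. Termination and the query bound come from the decreasing potential $\sum_j \nudgecaps_j$, safety from the fixed-point condition $\nudgecaps^{(t)}=\nudgecaps^{(t-1)}$, and maximality from induction on $t$ using cross-program monotonicity (raising $\nudgecaps_{-j}$ up to $\nudgecaps^{(t-1)}_{-j}$).

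One wording slip in the well-definedness step needs fixing. It is \emph{raising} the other coordinates from $\mathbf 0$ to $\nudgecaps_{-j}$ that can only weakly increase $S_j$; lowering them can only weakly decrease it. The raising direction is what yields your correctly stated inequality $S_j(\nudgecaps_{-j},0)\ge S_j(\mathbf 0)=1$.
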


The proof is in \Cref{proof:algo_guarantees}. Putting our results together proves \Cref{thm:overall_guarantee}.

\begin{proof}[Proof of \Cref{thm:overall_guarantee}]
For any safe $\nudgecaps$, the bounded capacity vector $\widetilde{\nudgecaps}$ defined by $\widetilde{\nudgecaps}_j=\min\{\nudgecaps_j,\overline c_j\}$ has an identical feasible allocation set ($\textbf{\methodabbr}(\nudgecaps) = \textbf{\methodabbr}(\widetilde{\nudgecaps})$ under fixed tie-breaking), so $\widetilde{\nudgecaps}$ is safe. By \Cref{thm:algorithm_guarantees}, $\nudgecaps^*$ is safe and $\widetilde{\nudgecaps}\leq\nudgecaps^*$. Thus, every allocation feasible for $Z,\nudgecaps$ is also feasible for $Z,\nudgecaps^*$. The result follows from \Cref{thm:cna_matches} applied at $\nudgecaps^*$.
\end{proof}

\subsection{Implementing our method in practice}
\label{subsubsec:instantiation_details}

We now discuss the instantiation of our approach in practice. 
We use \textit{historical data} from the 2024-25 application cycle to compute nudge capacities: for completed application cycles, we observe a submitted \textit{ranked application profile} $\applicationprofiles^i$ for each applicant $i$, indicating strict preferences over some subset of programs in the absence of nudges.

This allows us to estimate counterfactual applications for any given nudge allocation $\nudgeallocation$---we assume an \textit{adoption function} $\adoptionfxn(\applicationprofiles, \nudgeallocation)$ that simulates how applicants would have \textit{updated} their profiles given a nudge allocation $\nudgeallocation$ and outputs a new collection of ``post-nudge profiles.'' We instantiate the adoption function as follows: each applicant adds one randomly selected nudged program to the top of their no-nudge profile with fixed probability $p$ (the \textit{adoption rate}). In general, $\adoptionfxn$ may either encode domain knowledge or be learned from data on prior interventions. 

We can then simulate {match outcomes} $\matchoutcomeonnudges = \matchsimulator(\adoptionfxn(\applicationprofiles, \nudgeallocation))$ for these post-nudge profiles, by running deferred acceptance using known program seat capacities and preferences.\footnote{Since program seat capacities and preferences remain fixed in all simulations, we omit them from the notation, even though the match outcome depends on them.} Using this data, we instantiate the constraint oracle as
\[
{\mathcal O_j(\nudgeallocation)
=
\mathbf 1\!\left\{
\ell^\high_j\!\left(\matchsimulator(\adoptionfxn(\applicationprofiles,\nudgeallocation))\right)\geq L
\right\}},
\]
i.e., given an allocation, we simulate adoption and the match, and then evaluate the constraint. (We just use $\ell^\high_j$ in our instantiation, since $\ell^\nudged_j$ becomes noisy with small nudged subsets at some programs; we verify that the constraint for the final nudge capacities generally holds among them as well.)
Then, in deployment, we generate nudge allocations for observed applicants using these pre-computed capacities $\nudgecaps$ (appropriately scaled and modified for robustness) and solve \textbf{\methodabbr}$(\nudgecaps, u(x)=\sqrt{x})$.\footnote{\textbf{\methodabbr} can be formulated as a lexicographic max-/min-cost flow on a bipartite graph via a student-copy expansion and is therefore solvable in polynomial time; we use Gurobi for implementation convenience and directly return its output.} Recall that the exact choice of $u$ does not matter, by \Cref{thm:concave_beats_linear_cov}. This full procedure is illustrated in \Cref{fig:fullpipelinediagram}. 

Finally, we note that our instantiated constraint oracle is stochastic (the adoption function encodes random behavior, and multiple optimal allocations may exist)---it calculates a single draw of the congestion constraint as opposed to its expectation, meaning \Cref{thm:algorithm_guarantees} does not apply as-is. (One approach is to turn \Cref{thm:algorithm_guarantees} into one with probabilistic guarantees and rerun the simulation multiple times per iteration, but this prohibitively increases runtime by requiring many deferred acceptance simulations for each iteration.) To accommodate this stochasticity without substantially increasing runtime, we modify \algoname in our implementation as follows: in each iteration, we loop through programs $\{j\}$; for each $j$, we simulate the match once to get an empirical estimate of $\programlossfxn$. Then we increase $j$'s nudge capacity if its empirical congestion metric is above $L + \epsilon$, and decrease it if it is below $L$. We continue until there is an iteration with no capacity changes. This modification reduces per-iteration runtime at the cost of ``backtracking;'' we may need to increase a program's nudge capacity, unlike in \Cref{alg:nudgecaps}. We find that this approach converges efficiently in practice; a complete run to compute nudge capacities for 56 nudge-eligible programs on the 2024-25 cycle data takes approximately 24 hours, and in practice this could be sped up further with warm-starts and further optimizations.

\subsection{Model and Methods Discussion}
In our setting, ``desirable matches'' encode our choices of ``high-performing'' high school programs, target middle schools, an applicant-program level offer likelihood constraint, and a transportation distance constraint. Although defining desirable matches is inherently normative, doing so is foundational to the idea of recommender systems, and our intervention goal is aligned with past interventions. Our method itself takes these as given and is agnostic to the specific student-program pairs $Z$ deemed desirable. 
Furthermore, while nudges outside of the desirable pairs ($X_{ij} > Z_{ij}$) could theoretically help to accomplish the intervention goal, we decided collectively with NYCPS stakeholders that we did not want to provide any information that could be deemed lower quality. We therefore only nudge students towards desirable matches. 

We also emphasize that accurately simulating counterfactual application behavior and match outcomes is crucial for our computation of nudge capacities. Following \citet{pengUndermatchingDisparitiesPortfolio} and the deployed offer likelihood work of \citet{corradini2025school}, we only nudge applicants to programs where we can reconstruct the program’s admissions ranking for \textit{counterfactual applicants}.\footnote{The nudges do not introduce any new applicants to any \textit{other} programs, so historical rankings over applying students are sufficient for simulating counterfactual match outcomes.} 
Relatedly, if our assumed adoption function $f$ is substantively incorrect, we would misestimate behavior and congestion. For instance, we assume that students not nudged in our intervention exhibit no behavioral change; in reality, non-nudged students may still change their behavior compared to comparable applicants in previous years, due to increased reliance on the offer likelihood estimates themselves, spillover effects, or other factors. As such, our deployment this year is conservative. In the future, these other behavioral changes could be directly incorporated into the model of $f$. 

Our theoretical results show that our method effectively optimizes for total number of desirable matches. Some natural extensions are that we could define $u_i$ to vary by student (e.g., heterogeneous adoption rates, or importance weights). While solutions may no longer be prefix-maximal, the greedy solution still optimizes this setup, and so related results hold. 

\section{Design decisions and empirical context on the NYC HS match } 
\label{sec:datasetup}
In this section, we overview our data and empirical choices relevant for both our simulations in \Cref{sec:simulation_results} and deployment in \Cref{sec:experiment}.
We provide further detail on the NYC high school match in \Cref{app:application_process_background} and on the data in \Cref{app:data}. We developed the system based on analysis and simulations using historical student data from the 2024-25 cycle (72,197 students), and we deploy our system for the 2025-26 cycle. We consider schools other than the 9 Specialized High Schools, which run a parallel process \cite{nycdoe_specialized_high_schools}.

We define criteria for designating ``high-performing high schools'' in collaboration with the NYCPS Office of School Performance, based on insights from past NYCPS surveys that families care most about school performance and safety metrics \cite{idoux2024overcoming}. We use public metrics reported on the 2023-24 school year, the most recent results available at deployment time. 

\begin{definition}[High-performing high school]A high-performing high school is a high school with 4-Year Graduation Rate $\geq$ 95\% (top 30\% of schools), 4-Year College and Career Readiness Rating $\geq$ 60 (top 22\%), and Safety and School Climate Rating $\geq$ 3.0 (top 74\%).
\end{definition}

High-performing high schools account for approximately 16\% of schools and 22\% of seats. Each high school has at least one \textit{program}; different programs at the same school may provide different concentrations (such as arts), and students apply to and are admitted at the \textit{program level}. Since not all programs have offer likelihood estimates available (e.g., auditions or essays are part of the admissions process), we further define a subset of programs at high-performing high schools that we can actually nudge select students towards, based on the availability of offer likelihoods. 

\begin{definition}[Potential nudge-eligible high school program]A potential nudge-eligible high school program is one that (1) is at a high-performing high school and (2) has offer likelihood estimates. 
\end{definition}

There are 56 potential nudge-eligible high school programs, across 40 schools and 17 (of 32) districts. This is the set of programs we use for simulations in \Cref{sec:simulation_results}. For the deployment, we requested approval from superintendents to include their schools in the intervention, and further reduced programs for a conservative deployment. Ultimately, 15 programs across 12 schools and 9 districts were included in the intervention as \textit{deployed nudge-eligible high school programs}. Notably, this set did not include any programs in the Bronx or Staten Island (so while we discuss effects on Bronx Latin in our simulations, it does not appear in our deployment set). 

We next define criteria for target middle schools: those that do not typically send students to high-performing high schools. We target students from these middle schools in our intervention.

\begin{definition}[Target middle school]A target middle school is a public NYC middle school (excluding charter schools) that sent less than 15\% of its students to high-performing high schools in the 2024-25 cycle, and averaged across 2023-24 and 2024-25 cycles. Note that we consider all high-performing high schools here, not just nudge-eligible ones. 
\end{definition}

This resulted in 240 target middle schools. This is the set of target middle schools used in all simulations. Demographically (all numbers based on the 2024-25 cycle), 83\% of students at these schools are eligible for Free/Reduced Lunch (FRL) (versus 73\% overall). 11.5\% of students at high-performing high schools came from target middle schools, even though target middle schools account for 33.6\% of incoming high school students and over 50\% of students at target middle schools have \high offer likelihood to at least one nudge-eligible program within 45 minutes of their home. For the deployment, due to changes in schools still open and other related rationales, we had a total of 218 target middle schools. Further, nudges were sent in two batches, and we dropped a set of middle schools from Batch 2 in response to stakeholder feedback after Batch 1. This criterion was applied to both treatment (25 dropped) and control (22 dropped). 

Finally, a student is eligible to receive emails encouraging them to apply to a nudge-eligible high school program if they meet the following criteria, determined in collaboration with many NYCPS offices and informed by parent focus groups held in Summer 2025.

\begin{definition}[Nudge-eligible student for a program]Any student $i$ is ``nudge-eligible for a program $j$'' if they attend a target middle school, their middle school sent less than 2\% of its students to program $j$ in the 2024-25 cycle, they have a \high chance of receiving an offer to program $j$, and their home census tract's centroid is within 45 minutes of program $j$'s high school. 
\end{definition}

We use OpenTripPlanner to estimate travel time using any combination of subway, bus, and walking. In data from the 2024-25 cycle, using zip code as an approximation for census tract, this led to about 11,500 total students who are nudge-eligible for at least one program. 88\% of these students are FRL-eligible, compared to 73\% in the general student population and 83\% at target middle schools.\footnote{If we filter for programs within 30 minutes of a student's home instead of 45, the number of students nudge-eligible for at least one program becomes about 6,500. Among these students, $87\%$ are FRL-eligible.} 
Geographic distributions of these schools and students are shown in \Cref{fig:counts_by_zipcode}.

\section{Results: Congestion in simulation}
\label{sec:simulation_results}

Here we demonstrate the implications of congestion and evaluate our approach using simulations on 2024-25 cycle data. This analysis is ``in distribution'': we compute nudge capacities (using our approach in \Cref{sec:method}) and then evaluate match outcomes \textit{as if we nudged applicants in that same cycle}, under various nudge allocation methods and adoption assumptions.\footnote{In contrast, the deployment is ``out of distribution,'' run on 2025-26 application data using capacities computed for 2024-25.} To account for stochasticity in nudge allocations and adoption, we report mean values over repeated simulations. 

\subsection{Empirical setup}
We use the definitions of nudge-eligible high schools and students specified in \Cref{sec:datasetup}. Notably, for all simulations in this section we use the ``full'' set of 56 potential nudge-eligible high school programs, including programs that we ultimately cut in the deployment (e.g., Bronx Latin). We allocate nudges among nudge-eligible students according to each of the methods defined below. At various adoption rates, we simulate modifications to their true submitted application profiles ($f$ in our above notation) and the resulting match outcomes. All non-nudge-eligible students retain their original application profiles. 
For each adoption rate $p$, we compute and enforce nudge capacities to target acceptance rates among \high offer likelihood students ($\programlossfxn$) above 80\% at $p$\% adoption.\footnote{Note that in the deployment, we ultimately targeted 95\% acceptance rates.} We compare three methods, calculating nudge capacities as defined in \Cref{sec:method}:
\begin{itemize}
    \item \textbf{\naive}: Nudge \textit{all} nudge-eligible student-program pairs, without using nudge capacities.
    \item \textbf{\optimizelinear}: Allocate nudges subject to nudge capacities, maximizing total number of nudges (linear objective), i.e., $\textbf{\methodabbr}(\nudgecaps, u(x) = x)$.
    \item \textbf{\optimizesqrt}: Allocate nudges subject to nudge capacities, maximizing a concave utility of nudges, i.e., $\textbf{\methodabbr}(\nudgecaps, u(x) = \sqrt{x})$.\footnote{As shown in \Cref{thm:concave_beats_linear_cov}, any increasing, strictly concave function has the same set of optimal allocations. We rely on Gurobi's returned solution to tiebreak among the optimal set. Note also that solutions maximizing a concave objective are also optimal for the linear objective. }
\end{itemize}

\optimizesqrt allocation is our proposed method, while the others disambiguate the effects of nudge capacities and of the concave objective function. In \Cref{subsec:nonpersonalized_nudges}, we further analyze \textit{non-personalized} recommendations in simulation, in which all students from target middle schools are nudged to all nearby nudge-eligible programs (reflecting the design of prior interventions).

\subsection{Results}
\paragraph{Nudge capacities preserve program-level acceptance rates.} \Cref{fig:acceptance_rates_hist} shows that the nudge capacities indeed decrease program-level congestion, retaining acceptance rates for \high offer likelihood students above the targeted 80\% threshold. This improves upon results from \naive allocation, where acceptance rates among \high offer likelihood students fall below 80\% for 7 of 56 programs at 20\% nudge adoption, and for 15 of 56 programs at 50\% nudge adoption. In \Cref{subsec:nonpersonalized_nudges}, we show that even \naive personalized recommendation substantially outperforms non-personalized recommendation, in which most programs reject more than half of applicants even at 20\% adoption.

\paragraph{Nudge capacities address disparities in match rates to nudged programs.} \Cref{fig:match_rate_geographic} depicts match rates to nudged programs by zip code. Nudge capacities lead to increased match rates among nudged students across the city, addressing geographic disparities that arose with the \naive approach. 

\Cref{fig:match_rate_invsize} shows that nudge capacities also address disparities in match rates by level of student access. With \naive nudge allocation, match rates among students with access to only one nudge-eligible program are much lower (despite all students adopting only \textit{one} nudged program). Enforcing nudge capacities with our \optimizesqrt method leads to increased match rates across students of all access levels, and particularly large gains among students with limited access to nudge-eligible programs (increasing match rates for students with 1 inventory program from 51.6\% to 76.2\%, and those for students with 5+ inventory programs from 91.3\% to 95.8\%). 

\paragraph{\optimizesqrt allocation increases match rates while approximately preserving the total number of students who match to nudged programs.} Notably, while the number of nudges and nudged students decreases when imposing capacities (leading to higher match rates), using our method retains the number of \textit{matches to nudged programs}---``desirable matches''---which our intervention goal is to maximize. In our simulations at 20\% (\Cref{table:20_adoption}), our method actually achieves \textit{more} matches to nudge-eligible programs than the \naive method of allocating nudges to all eligible students (which has substantially lower match rates). Thus, nudge capacities can help address congestion without sacrificing overall effectiveness of the intervention. Intuitively, the constrained nudges steer applicants toward the programs they are more likely to be admitted to in equilibrium. 

\paragraph{Allocation with concave objective leads to a geographic shift in prioritized students.} Tables \ref{table:20_adoption} and \ref{table:table_50adoption} also reveal that \optimizesqrt allocation improves upon \optimizelinear in number of unique students nudged ($\Cov_1$) and matched, empirically supporting the theoretical claims in \Cref{subsubsec:allocation_given_nudgecaps}. To illustrate the key difference between the allocation methods, \Cref{fig:Y35A_nudges_shift} depicts the distribution of students nudged to Bronx Latin in the simulation, the only nudge-eligible school in the Bronx. Due to geographic constraints, many students (1304) have Bronx Latin as their \textit{only} nudge-eligible program. Of these students, 630 reside in zip codes where \textit{no} student has access to \textit{any other} nudge-eligible programs.
Under \optimizelinear allocation, nudges to Bronx Latin are distributed arbitrarily among eligible students. In contrast, the \optimizesqrt allocation prioritizes students without access to other programs, shifting more nudges towards students in the Upper Bronx (especially those in zip codes with no access to other nudge-eligible programs). 

\begin{figure}[tb]
  \begin{center}
    \centerline{\includegraphics[width=0.6\textwidth]{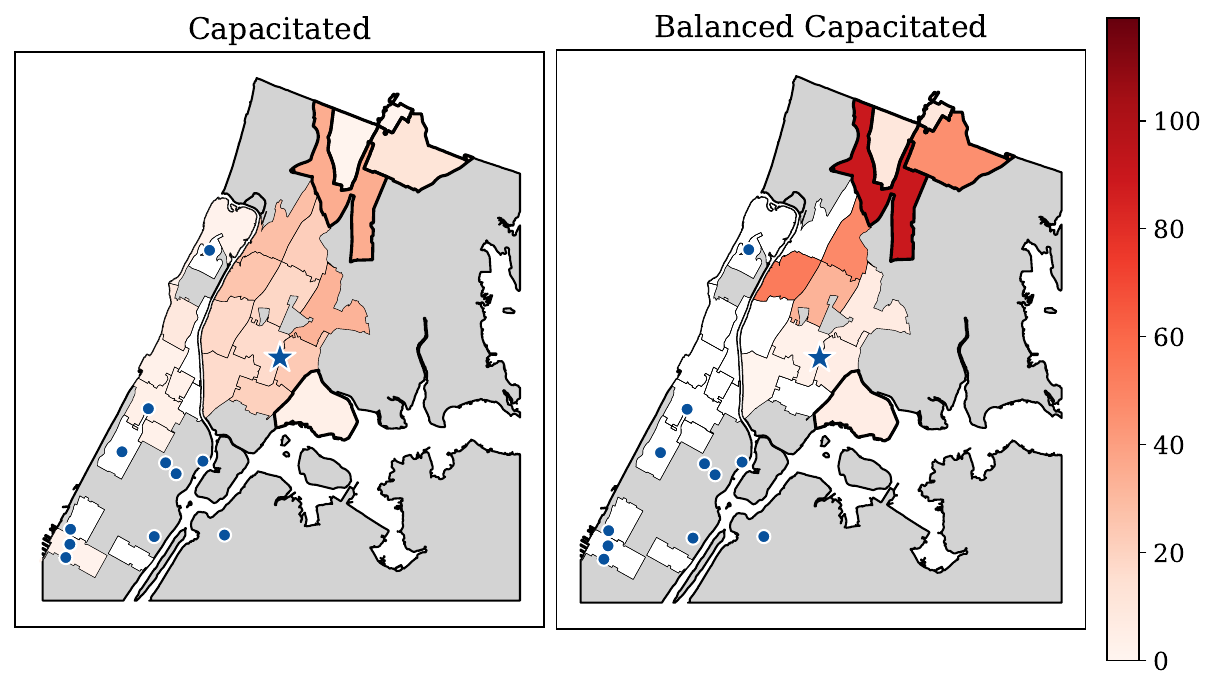}}
    \caption{\textbf{Simulated geographic shift in students nudged to Bronx Latin at 20\% adoption.} Geographic differences between recommendation methods in simulation. The shade of red reflects the number of students from each zip code nudged to Bronx Latin under each allocation method. Nudge-eligible programs are indicated by the blue star (Bronx Latin) and dots (other). Balanced allocation gives nudges to students in zip codes \textit{without} access to any other programs (indicated with thicker outline).}
    \label{fig:Y35A_nudges_shift}
    \end{center}
\end{figure}\renewcommand{\arraystretch}{1.2}

This shift leads to meaningful changes in match outcomes. As depicted in \Cref{fig:Y35A_matches_shift}, \optimizelinear allocation yields 9 matches from Bronx Latin-only eligible zip codes, whereas \optimizesqrt yields 25. Importantly, this does \textit{not} come at the expense of target students elsewhere, as they still receive nudges to other programs: among students from other zip codes with access to Bronx Latin, \optimizelinear allocation yields 517 matches to other nudged programs, whereas \optimizesqrt yields 611. Thus, balanced allocation increases desirable match outcomes for students with limited choice sets, \textit{without} reducing aggregate matches among other students.

~\\\noindent These simulations show that our method effectively achieves the intervention goal of increasing matches to high-performing programs while decreasing disparities in acceptance rates.  

\renewcommand{\arraystretch}{1.2}
\begin{table*}\centering
\small
\begin{tabular}{@{}lrrrrr@{}}\toprule
\textbf{Allocation Method} &
\makecell[r]{\textbf{Total \#}\\\textbf{Nudges}} &
\makecell[r]{\textbf{\# Students}\\\textbf{Nudged}} &
\makecell[r]{\textbf{Match}\\\textbf{Rate}} &
\makecell[r]{\textbf{\# Matched to}\\\textbf{Nudged}} &
\makecell[r]{\textbf{\# Matched to any}\\\textbf{Nudge-eligible}} \\
\midrule
\naive & 26,325 & 11,417 & 0.70 & 1,597 & 1,989 \\
\optimizelinear & 16,175 & 7,838 & 0.98 & 1,520 & 1,947 \\
\optimizesqrt & 16,175 & 9,804 & 0.89 & 1,744 & 2,153 \\
\bottomrule
\end{tabular}
\caption{\textbf{Simulation results at 20\% student adoption.} While the \naive method results in more total nudges and total students nudged, recommendation-induced congestion leads to low \textit{match rates} to nudged programs (among students who effectively apply to a nudged program, the fraction that is accepted). Enforcing nudge capacities results in higher match rates to nudged programs; on top of this, our \optimizesqrt method still results in \textit{more} students matched to nudged programs and nudge-eligible programs overall (\textit{desirable matches}).}
\label{table:20_adoption}
\end{table*}

\section{Deployment and Randomized Controlled Trial}
\label{sec:experiment}

In Fall 2025 (2025-26 admissions cycle), we deployed our method in a personalized intervention in collaboration with NYCPS. We sent emails to a subset of nudge-eligible students (our treatment group), highlighting high-performing programs that might be a good fit for them. We ran a pre-registered randomized controlled trial to evaluate the impact of our intervention.\footnote{Pre-registered at the AEA RCT Registry: https://www.socialscienceregistry.org/trials/17146}
Discussions with NYCPS stakeholders guided our congestion constraint policy decisions: in particular, we aimed for at least 95\% acceptance among \high likelihood applicants, at adoption rates of up to 40\%. These decisions were conservative, to account for uncertainty over student adoption (\Cref{fig:miscalibrated_nudgecaps} illustrates the tradeoff between intervention effectiveness and potential congestion effects from computing nudge capacities under incorrect assumptions about adoption; in this first deployment we conservatively err on the side of preserving \high offer likelihood acceptance rates). 

All students---including control students and those not eligible for the intervention---can access personalized offer likelihood estimates and program metrics through other sources (e.g., MySchools, commute time calculators). Thus, our intervention does not make available any \textit{new} information to treatment applicants, compared to others; rather, it highlights existing information in a way that encourages applicants to consider applying to high-performing programs that may be a good fit. Furthermore, applicants have ultimate agency in their application decisions and can ignore nudges if the programs do not align with their true preferences.

\subsection{Experiment design}
\paragraph{Nudge-eligible High School Programs.}
As mentioned in \Cref{sec:datasetup}, our deployment set of nudge-eligible programs was a subset of the potential nudge-eligible programs, resulting in 15 total programs. It did not include any programs in the Bronx or Staten Island. 

\paragraph{Experimental Population.} 
Nudge emails were sent in two batches during the application cycle. For each batch, we pulled data 1–2 days prior to email delivery and determined the experimental population for that batch as being nudge-eligible students (as defined in \Cref{sec:datasetup}) whose applications were \textit{in-progress} (they had opened their application, meaning we had offer likelihood estimates for them, but they had not yet submitted).  
Offer likelihood predictions are dependent on a student's lottery number, which is only generated when they begin their application in the online portal. Thus, this design allows us to balance reaching more students and sending them communications earlier in their application process (hopefully before they have finalized their decisions).

\paragraph{Randomization and Treatment Assignment.} 
We divided our experimental population into two groups: control students received all standard NYCPS communications, and treatment students received the standard communications plus \textit{nudge emails} with program recommendations. We randomized at the middle school level, assigning each target middle school to 0, 50, or 100\% treatment intensity prior to the first email batch. This design allows for the possibility of measuring spillovers across students who attend the same middle school. The number of schools assigned to each treatment intensity (keeping 50\% and 100\% treatment intensity groups at equal size) was chosen such that---based on last year's number of nudge-eligible students per middle school---the expected total number of treatment applicants would be approximately 500.  
We assigned middle schools to treatment intensities via stratified sampling, with strata defined by borough and expected number of nudge-eligible applicants. 

Individual student treatment assignment was implemented separately for each batch. For each batch, we assigned students to treatment or control within each school according to the school’s pre-assigned treatment intensity (0\%, 50\%, or 100\%), by randomly sampling the appropriate fraction of students from the experimental population at that middle school. Balance tests in \Cref{sec:balance_checks} support that the randomization was effective; we further prespecified and include stratum fixed effects in our primary analyses, to account for realized differences in the group samples.

\paragraph{Nudge Allocation for Treated Applicants.}
Among treatment students in each batch, we determined which nudge-eligible high school programs to recommend to which students using the integer program described in \Cref{subsubsec:allocation_given_nudgecaps}. Each program's nudge capacity was pre-computed as described in \Cref{subsubsec:computing_nudgecaps} using data from the 2024-25 cycle, targeting an acceptance rate of at least 95\% among high likelihood applicants at up to 40\% student adoption. Because nudges were sent in two batches, we solved the optimization problem separately for each batch. Nudge capacities were scaled according to the batch size: if Batch 1 contained $x$\% of the expected total of 500 treatment students, we would scale the Batch 1 nudge capacity for each program by $x$\% and Batch 2 nudge capacities by (100-$x$)\%. Ultimately, with 353 treatment students in Batch 1 and an overall target of 500, each program's nudge capacity in Batch 1 was approximately 70.6\% of its overall nudge capacity. 

\paragraph{Defining Counterfactual Nudges for Control Applicants.}
For the control group, the set of programs they \textit{would have been nudged to} is not clearly defined. There is no clear counterfactual for what would have happened if each student had instead been in the treatment group, since nudge allocations depend on the composition of the treatment group itself. For example, we cannot simply add each control group student (individually) to the treatment group and re-solve the optimization: due to the clustered sampling at the middle school level, adding each control student individually would not reflect the true counterfactual scenario where their middle school had been assigned to treatment; since congestion patterns are geographically correlated, this would under-estimate their expected congestion, and control group students would counterfactually receive larger nudge lists.

To address this challenge, we take advantage of the fact that, by design, the control and treatment groups are distributionally identical in expectation. For each batch, we therefore solve a separate, parallel optimization problem for the control group, as if the control and treatment groups had been swapped entirely. This produces, for each control applicant, a counterfactual set of programs to which they would have been nudged had they instead been assigned to treatment.

Because the control group was larger than the treatment group, we proportionally scaled program capacities in the control optimization. For each batch, if the control group had $x$\% as many students as the treatment group, we scaled each program’s nudge capacity by $x$\%. Using the resulting counterfactual nudge sets, we can directly compare behavior and match outcomes between groups.

\paragraph{Nudge Emails.} The 2025-26 high school application opened on October 7 and closed on December 4, 2025. We sent nudge emails from the NYCPS High School Enrollment email account\footnote{This NYCPS email account also sent other application-related emails to all students during the application period, such as reminders to submit applications before the deadline.} on October 30 (Batch 1) and November 19 (Batch 2). The email template is in \Cref{fig:email_text}. Our emails explained to applicants that they had a \high chance of offer at high-performing high school programs within 45 minutes of their home and listed these programs with tracked links to the school webpages (allowing us to observe clicks for each school, but not whether an applicant opened the email). On November 20, we sent targeted follow-up emails to students from Batch 1, personalized based on whether their saved application profile (if any) included no high-performing programs, high-performing programs but ranked below a program \textit{not} on the high-performing programs list where they had \high offer likelihood, or at least one high-performing program at the top of their list.

\paragraph{Operational details.} Deploying the intervention in collaboration with NYCPS involved many stakeholders and coordinated entities. The intervention and all family-facing text were approved by both University and NYCPS IRBs. Consistent with our data use and non-disclosure agreements, no individual-level identifiers were shared with University partners outside of secure NYCPS systems. 

Implementation choices were shaped by both technical and operational constraints, and key design decisions were made in collaboration with multiple NYCPS offices. Additionally, we conducted six virtual focus groups with 8th- and 9th-grade families in English and Spanish, helping to inform decisions such as what constituted a reasonable distance from a student's home. The email text was drafted and edited in coordination with multiple NYCPS offices and aligned with the High School Admissions Team’s existing communications language and schedule. Consistent with NYCPS norms, emails were sent to the \textit{parents or guardians} of applicants. Personalized emails were translated into 12 languages by the NYCPS translations team and then sent from the High School Enrollment email address, potentially in multiple languages as appropriate for the recipient. While we considered integrating nudges into the application portal, sending text messages, and mailing physical letters, these options were not feasible this year and we delivered nudges solely via email.

All major project decisions were discussed with and approved by NYCPS leadership, including the Chief Enrollment Officer, Deputy Chancellor of School Leadership, Acting First Deputy Chancellor, and Chancellor. We obtained approval from superintendents of all deployment high schools and informed staff at treated middle schools about the intervention and which students received nudges to which programs. Based on stakeholder feedback after the first batch, we removed students from several middle schools for the first batch's reminder emails and the second batch. 

After the application deadline but before match outcomes were released to applicants, we also administered an optional post-application survey to gather descriptive evidence on families’ perceptions of the emails and demand for recommendations. We sent a post-application survey to approximately 2,200 applicants with a \$20 incentive. Survey responses were not used as primary experimental outcomes.

\subsection{Experiment results}
\label{sec:experiment_results}
We ended up with 822 control and 444 treatment students. Here, we report results on treatment effects (application behavior and match outcomes), congestion, and general family engagement and perceptions, with supporting results in the appendix. 

\paragraph{Treatment effects on application rates and match outcomes.} 
\Cref{fig:student_behavior} contains the primary experimental results. Among treatment applicants, 16.4\% ranked at least one program they were nudged to at least somewhere on their list; among control applicants, 10.5\% ranked a program they would have been nudged to. In terms of match outcomes, 5.6\% of treatment applicants matched to a nudged program, and 3.3\% of control applicants matched to a program they would have been nudged to. \textit{No} treatment students were rejected from a program they were nudged to (out of 25 effective applicants, i.e., not counting applicants who listed a nudged program but were accepted to another program they ranked higher), and 1 control student was rejected from a program they would have been nudged to (out of 28 effective applicants), for an overall 98\% acceptance rate. 

We fit pre-registered logistic regression models with standard errors clustered at the middle school level.
Table \ref{table:treatment_effects_logit} reports estimates of the treatment effect on application behavior, both with and without stratum fixed effects. 
In the pre-registered model with stratum fixed effects, treatment significantly increased the likelihood of ranking a recommended program ($\hat\beta$=0.561, SE=0.219, $p$=0.011). For interpretability, we compute average marginal effects to translate this estimate to the probability scale, finding that treatment increased the probability of ranking a recommended program by 6.5 percentage points (SE=2.6pp, $p$=0.013), relative to a control mean of 10.5\%. There is also suggestive evidence that the treatment increased matches to recommended programs ($\hat\beta$=0.643, SE=0.356, $p$=0.071). On the probability scale, this corresponds to a 2.7 percentage-point increase (SE=1.6pp, $p$=0.103), relative to a control mean of 3.3\%. In terms of \textit{where} in a student's application profile these nudged programs were listed, control and treatment are distributionally similar: of students who ranked a nudged program (in control and treatment groups, respectively), 22.1\% and 23.3\% listed it first, and 47.7\% and 47.9\% in the top 3.

Next, we consider applications to all nudge-eligible programs, not just those the applicant was actually nudge-eligible for or nudged to. These outcomes are less dependent on our empirical choices, such as the calculation of hypothetical nudges for the control group. Generally, the additive effect sizes remain unchanged: 36.5\% of treatment applicants ranked at least one nudge-eligible program somewhere on their list, versus 29.6\% of control applicants. 8.3\% of treatment applicants matched to a nudge-eligible program, versus 6.3\% of control applicants.

We run robustness checks with linear probability models, alternative fixed effects specifications, and additional outcome specifications; these do not meaningfully change the results (\Cref{subsec:all_fixed_effects}, \ref{app:additional_experimental_results}). We also report results from additional analyses, including for spillover effects, local average treatment effects, and heterogeneous treatment effects (\Cref{app:additional_experimental_results}). 

\begin{figure}[tb]
  \centering

  \begin{subfigure}[t]{0.22\linewidth}
    \vspace{0pt}
    \centering
    \includegraphics[width=\linewidth]{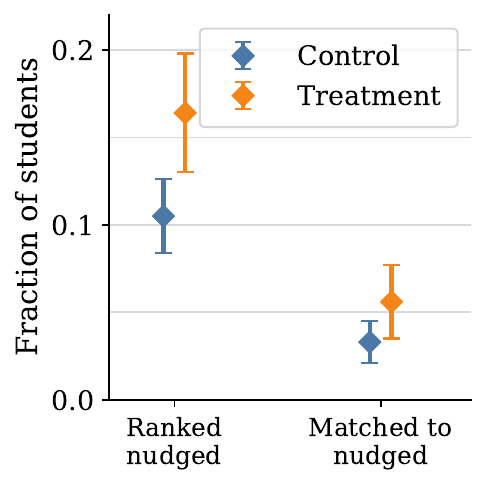}
    \caption{\textbf{Outcome rates.} }
    \label{fig:student_behavior_forest}
  \end{subfigure}
  \hfill
  \begin{subfigure}[t]{0.75\linewidth}
  \centering
  \begin{minipage}[t][2.95cm][t]{\linewidth}
    \vspace{5pt}
    \centering
    \small
    \resizebox{\linewidth}{!}{
      \begin{tabular}{lccccl}
       \toprule
        \textbf{Outcome} 
        & \textbf{Control rate} 
        & \textbf{Treatment rate} 
        & \textbf{Stratum FE} 
        & \textbf{Treatment coef.} 
        & \textbf{p-value} \\
        \midrule

        \multirow{2}{*}{Ranked nudged}
        & \multirow{2}{*}{0.105}
        & \multirow{2}{*}{0.164}
        & Yes & 0.561 (0.219) & 0.011** \\
        & & & No & 0.521 (0.232) & 0.024** \\

        \addlinespace[0.25em]

        \multirow{2}{*}{Matched to nudged}
        & \multirow{2}{*}{0.033}
        & \multirow{2}{*}{0.056}
        & Yes & 0.643 (0.356) & 0.071* \\
        & & & No & 0.564 (0.346) & 0.104 \\

        \bottomrule
      \end{tabular}
    }
  \end{minipage}
  \caption{\textbf{Regression-estimated treatment effects (Logit): primary outcomes.} 1,266 total observations. SEs in parentheses. * $p<0.10$, ** $p<0.05$.}
  \label{table:treatment_effects_logit}
\end{subfigure}

\caption{\textbf{Application behavior by treatment condition.} Regressions are fit with robust standard errors clustered at the school level, estimated with and without stratum fixed effects (borough and school size). CIs in the figure are 95\% CIs.}
\label{fig:student_behavior}
\end{figure}

\paragraph{Link clicks.} The above analyses follow an intent-to-treat design: we include \textit{all} applicants assigned to treatment or control. For insight on engagement, 88 treatment students (19.8\%) clicked on the trackable links, and among those, 25 (28.4\%) ultimately ranked a nudged program (compared to 16.4\% of all treatment applicants). We do not observe email open events.

\paragraph{Post-application survey responses.} We received 239 responses (42 treatment, 61 control, 136 from non-target middle schools). 25/42 treatment families reported \textit{reading} a nudge email; among those, 22/25 (88\%) reported that the email was helpful and 11/25 (44\%) reported that it affected their decisions. 173/195 (89\%) of non-treatment families reported that they would have liked to receive program recommendations (2 families did not answer the question and are omitted). Although these responses should be interpreted descriptively due to potential selection bias, they suggest substantial perceived value of personalized recommendations.

The survey also shed light on why some families did \textit{not} add nudged programs to their application; the most common reported reason was that the recommended programs' academic interest area was not aligned with the student's interests, and many families stated that extracurricular offerings would be an important factor for them in recommended programs. This suggests that while offer likelihood and commute distance are important dimensions for identifying potentially relevant programs, future recommendation systems may benefit from incorporating further personalization regarding students' academic and extracurricular preferences.

\subsection{Scale-up and robustness: What are the limits of student-side interventions?}
\label{sec:scale_up}
Our experiment design was intentionally conservative to account for uncertainty in the first deployment: we reduced the number of nudge-eligible programs, determined nudge capacities assuming an adoption rate of 40\%, and limited the expected treatment group size to 500. These precautions reflect NYCPS' real concerns about recommendation-induced congestion and its effect on offer likelihoods; they provided safeguards against unexpectedly high adoption, year-to-year shifts in application behavior, or other deviations from our modeling assumptions. 

Having deployed the intervention and observed its effects in practice, we may wish to relax these constraints and ``scale up'' such an intervention to increase its reach in the future. Along these lines, we explore two questions: (1) What is the maximum impact the modeled intervention could have, without increasing the supply of high-performing programs? (2) How robust are our congestion calculations to year-to-year distribution shifts in the market environment and applicant behavior? 

\paragraph{Scale-up}
Here, we ask: how impactful can demand-side interventions be? This question informs the necessity of supply-side interventions or other structural changes. We answer this question by calculating nudge capacities using our framework assuming 100\% adoption---since the capacities preserve acceptance rates, this calculation approximates how many applicants can simultaneously, safely, apply and match to high-performing options. Under our model, any larger effect, from demand-side behavior changes alone, must displace existing students with \high offer likelihoods.

Using the 2024-25 applicant population, we compute nudge capacities to maintain the targeted 95\% acceptance rate among \high offer likelihood students at 100\% adoption. We use the set of 15 \textit{deployed} high school programs. We run simulations akin to those in \Cref{sec:simulation_results} to assess the maximum number of students we could have nudged and that would have matched to nudge-eligible programs. 
With the set of 15 deployed programs, this procedure yields 873 desirable matches for the 4,982 nudge-eligible students (in reality, 42 such matches occurred that year). 

There are three further paths to scaling up the size of future interventions. First, our definition of high-performing schools was stricter than in previous interventions; relaxing this, potentially with multiple categories of schools, could allow including more schools. Relatedly, as the survey responses suggest, applicants also have (likely heterogeneous) preferences for other school characteristics, such as extracurriculars offered; incorporating such personal preferences in recommendations could further reduce congestion. Finally, we have so far considered the offer likelihood design as fixed; it may be possible to jointly design the intervention and likelihood calculations, such as by incorporating intervention effects into the estimators. These changes would expand the set of feasible intervention designs. 

\paragraph{Distribution shift} Each year, the distribution of applications across programs may change. For instance, offer likelihood estimates themselves (especially as they are more widely advertised and trusted), other policy decisions, demographic shifts, and program changes could change applicant behavior over time. Given our use of historical data in computing nudge capacities, it is important to understand robustness to these year-to-year distribution shifts. 

We re-compute nudge capacities using the (deployment year) 2025-26 applicant population, targeting the same 95\% acceptance rate among \high offer likelihood students at 40\% adoption,\footnote{While some 2025-26 applicants adopted nudges in response to our intervention, our treatment group was intentionally small. We simulate 40\% adoption on \textit{top} of the submitted profiles, making our computed capacities more conservative.} and compare these values with nudge capacities computed using the same parameters on the 2024-25 applicant population.\footnote{At some programs it was impossible to preserve \high offer likelihood student acceptance rates above 95\% because the rate dropped below 95\% in the true outcomes this year; we set these nudge capacities to 0.} We use our 2024-25 match simulator (using that cycle's seat group targets, rule sets, etc.\footnote{For programs with manual rankings, such as after auditions, we use their true rankings over the 2025-26 applicants.}) for both calculations, to focus on changes from applicant behavior, as well as to reflect the reality that an accurate match simulator for an admission year may not be available until after nudges must be computed.

This comparison reveals two forms of year-to-year shift. First, the market environment itself changed: while the 2024-25 simulator reproduces the 2024-25 match with 98.2\% accuracy, it achieves only 88.6\% accuracy for the 2025-26 match. This suggests that changes beyond student applications (e.g., program characteristics, policies, or data availability) can reduce the accuracy of offer likelihood and nudge capacity calculations across years.\footnote{While it is unclear, even in hindsight, what exact changes lead to this drop, two plausible factors are: (1) year-to-year changes in program capacities, and (2) our 2025-26 data does not perfectly identify FRL status applicants, instead combining that data with other status indicators, such as whether a student is an English Language Learner.} Thus, while using a previous year's simulator may be necessary given the data available at deployment time, these estimates necessarily inherit errors from shifts in the market environment. 
Second, we compare computed nudge capacities between years (\Cref{fig:nudgecap_comparison_40})---since we hold the match simulator fixed, these differences in nudge capacities are only due to shifts in the observed application data. Overall, nudge capacities are strongly correlated across years ($R$=0.945), although we do observe notable variation at individual programs. The mean absolute difference in nudge capacities between years was 48 (the average program nudge capacity in 2025 was 144), and the mean difference from 2024-25 to 2025-26 was -16. 

This aggregate decline is largely driven by 4 of the 15 programs, whose computed nudge capacities dropped to 0 when computed on 2025-26 data; collectively, the drop in nudge capacities at these programs was 246 (compared to a total net decline of 248). At these programs, there seems to have been a notable increase in applications and resulting congestion; our target acceptance rates could no longer be maintained in simulations, even without additional simulated nudges. 
Among the other programs, nudge capacities fluctuate in both directions but exhibit little net change.\footnote{Some of this program-level variation reflects limitations of the data available for our simulations, as opposed to purely changes in underlying congestion. In particular, we only received offer likelihood data for students captured in our two intervention batches, and therefore do not observe the full set of nudge-eligible pairs for the 2025-26 application cycle. {Based on past years' data, we expect that the number of nudge-eligible students at the time of the Batch 2 data pull is about \textit{half} the total number.} For 5 programs, this appears to have been the limiting factor on the maximum nudge capacity compared to the previous year.}

Together, these results imply that computing nudge capacities on historical data is informative even under the observed distribution shift, but that they do not induce exact guarantees. It is thus important to deploy conservative interventions, as we did---an open methodological question remains on how to best be robust to such shifts at scale. Overall, our results suggest promise in being able to effectively scale up the intervention while preserving offer likelihood constraints. 

\section{Conclusion}
In this paper, we study how to design personalized recommendations in capacity-constrained matching markets. We characterize recommendation-induced congestion by showing that simultaneous adoption can sharply and heterogeneously reduce acceptance rates, failing to preserve a key signal for the recommendations themselves. Our results suggest that recommendation systems in capacity-constrained environments (school matching, labor markets, or dating platforms) should be designed with congestion in mind---individual-level interventions can have substantial system-wide externalities, inducing both inefficiencies and disparities. Then, we propose a generalizable method for recommendation in equilibrium, preserving offer likelihoods while effectively achieving the intervention goal of increasing the number of matches to nudged programs. Finally, we implement our approach in a large, real-world market---the NYC High School Match---and evaluate it through a randomized controlled trial. Treated students were 57\% more likely to apply to a recommended program ($p$=0.011) and 71\% more likely to match to a recommended program ($p$=0.071), demonstrating that recommendations shift application behavior and providing suggestive evidence of changes in matches. To our knowledge, this is the first field deployment of a recommendation system to control congestion in a centralized school matching platform.

We foresee that similar analyses and interventions would be informative in other capacity-constrained matching contexts---including labor, dating, and healthcare markets. Our view casts recommender systems as interventions that affect both individual  recipients and the market as a whole. Under this view, safe, effective deployment of recommenders requires both measuring systemic effects and proactively designing for them. Our work highlights several potential design principles. While recommendations should incorporate match probabilities, they should do so in equilibrium, incorporating recommender effects themselves. This is especially important because, as we highlight, increased congestion can exacerbate existing disparities, disproportionately affecting applicants with the fewest existing opportunities. More broadly, recommendation design is policy design: the choices of ``desirable matches'' and the ``congestion constraint'' reflect decisions about how policymakers may want to change the status quo and preserve it, respectively. 

Finally, our intervention is ultimately constrained by the limits of demand-side interventions in capacity-constrained settings. It cannot be scaled up indefinitely while preserving offer likelihoods, and the supply of seats at high-performing schools limits the potential increases in matches. We quantify such limits---larger changes may require supply-side interventions, such as increasing the supply of seats in high-performing schools. These considerations have analogues in other settings, in which improving recommendations complements, rather than substitutes, structural changes. 

\subsection*{Discussion and Limitations}
Our experimental design and deployment make several policy choices, in consultation with our deployment partners. First, we prioritize preserving offer likelihoods for \high likelihood students, allowing for displacement of lower-likelihood applicants (quantified in \Cref{sec:displacement_effects}). Intuitively, this increases assortativity with respect to each program's admissions priorities, while retaining reliability of \high offer likelihood signals. If desired, the congestion constraint can be adapted to alternative policy goals. 

Additionally, showing offer likelihood predictions to students may change student behavior even without further nudge interventions, and this can shift the congestion we are trying to anticipate when designing our intervention (recall that all students see offer likelihood predictions \cite{corradini2025school}). This is one of the reasons we were conservative in our intervention. In future deployments,  explicitly modeling behavior changes in response to both offer likelihoods and recommendations could support more robust design. More broadly, \textit{jointly} designing offer likelihood estimates alongside recommendations may further improve outcomes in equilibrium. 

Our optimization framework also makes simplifying assumptions. These assumptions were made to retain model transparency and simplicity, which were especially important for our first deployment. First, we assume homogeneous nudge adoption; future versions of the system could capture heterogeneity along these lines and weight recommendations accordingly. Successfully implementing richer personalization has the potential to further improve match outcomes. Furthermore, our method currently produces batch recommendations and relies on accurately simulating match outcomes; extending it to settings with real-time recommendations and stochastic matches is an important direction for future work. Integrating recommendations directly into the application portal may enable dynamic recommendations and improved personalization based on student preferences. However, these extensions would make implementation and analysis more complex, and would require careful validation to ensure the model does not end up systematically de-prioritizing more disadvantaged students or amplifying prediction error. 

Researchers and NYCPS do not observe true applicant preferences. While our survey responses and past surveys \cite{idoux2024overcoming} indicate that safety, academic performance, and distance are most important to families, other factors not included in our study are also important. For example, our survey results also highlight the importance of program interest area (\Cref{sec:survey_outcomes}). Prior work has shown substantial preference heterogeneity, including by demographic group \cite{Corradini2023HighSchoolChoices,idoux2024overcoming,abdulkadiroglu_parents_2020,harris2015schools,burdick2024anywhere,hailey2022racial}. Relatedly, our school performance measures---though decided in conversation with NYCPS stakeholders---do not capture other measures of school quality, such as causal value-add \cite{angrist2024race}. However, applicants have ultimate decision-making authority; when our nudges do not match applicants' personal preferences, they may not be adopted, potentially mitigating negative effects. On the other hand, it is possible that some nudges were adopted without a full understanding of the program's offerings compared to other options. Thus, matching to a nudged program does not necessarily imply an individual welfare gain. 

\begin{acks}
This research is supported by the Alfred P. Sloan Foundation, a Cornell Center for Social Sciences seed grant and the center's Regulated Research Environment, and a William T. Grant Foundation Scholars Award to NG. EC is supported by NSF DGE \#2139899, a Cornell Tech Digital Life Initiative doctoral fellowship, and a Siegel PiTech PhD impact fellowship. NG is supported by NSF CAREER \#2339427, and Cornell Tech Urban Tech Hub, NASA, Google, Mastercard, and Amazon research awards. 
We are grateful to 
Samuel Daunt,
Katrina Fortune,
Allie Jackson Hing,
Amy Lee,
Brett Saffer,
Perry Stein,
Christian Barnes,
Matthew Broggini,
Nadiya Chadha,
Yolanda Fana,
Tamara Lindor,
Ahmad Shilleh,
Diying Wu,
Trevonda Kelly,
Eric Ashton,
Michael Cho,
Aurora Moraes,
Benjamin Cosman, and
Justin Matthews
at NYC Public Schools for their guidance, expertise, and support throughout this project.
We also thank
Gabriel Agostini,
Peter Bergman,
Samuel Breckenridge,
Christopher Campos,
Sarah Cohodes,
Viola Corradini,
Sophie Greenwood,
Clemence Idoux,
Russell Legate-Yang,
Irene Lo,
Anushka Murthy,
Parag Pathak,
Kevin Ren,
Ignacio Rios,
Emily Ryu,
and other members of the Garg Lab and Cornell AIPP 
for valuable conversations and feedback.
\end{acks}

\FloatBarrier
\bibliographystyle{ACM-Reference-Format}
\bibliography{library}

\newpage
\appendix
\section{Additional related work}
\label{sec:relatedworkadditional}

\paragraph{Algorithmic tools  to support the delivery of social and government services while accounting for disparities}
Recent work applies, and examines the performance of, predictive modeling and optimization methods in a range of domains, especially in New York City. These include urban planning and incident prediction \cite{mydlarz_floodnet_2024, zied_abozied_spatial_2025, balachandar2026urban,liu2024quantifying,agostini2024bayesian,franchi2025bayesian,liu2024redesigning,gosciak2026scrutinizing}; allocation of housing and social services \cite{azizi2018designing, chouldechova2018case, gosciak2026llms,koenecke2023popular,daw2026due}, library resources \cite{liu2025optimizing,Liu_Rankin_Garg_2024}, and food \cite{prendergast2017food, raman2026rescuelens, shi2021recommender}; healthcare \cite{bastani2021efficient, chang2021supporting, adams2022prospective, joshi2026reducing, bartle2025shopping, chiang2025learning,bartle2025faster}; education \cite{delarue2026algorithmic,bertsimas2019optimizing,tomkins2023showing,lee2023evaluating,lee2024algorithms}; and refugee resettlement \cite{ahani2024dynamic,bansak2018improving}. One consistent concern in such work---including in our setting---is how administrative burdens and resulting heterogeneous participation may have equity implications in sociotechnical systems \cite{garg2025heterogeneous}.

A line of work builds on these developments to explore the actual process of translating research into real-world impact. These papers examine challenges in deployments \cite{saxena2025ai, paleyes2022challenges, selten2024organizing, kedron_describing_2025}, propose principles and strategies for more effective and responsible implementation in practice \cite{bondi2021envisioning, perrault2020artificial, kshirsagar2021becoming, floridi2021design, chiangForestFromTheTrees2026}, and argue broadly for more research on implementation-specific issues or different academic evaluation norms \cite{mergel2024implementing, zuiderwijk2021implications, wilder2025fostering, arribas-bel_urban_2021}. Our work provides one case study on the deployment and evaluation of such an algorithmic system.

\paragraph{Experimentation under interference.} A growing literature also studies experimentation under interference in markets with scarce resources and develops methods for \textit{computing treatment effects} in these settings
\cite{holtz2025reducing,munroCausalInferenceInterference2025, karakolios2025estimation, arkhangelskyEvaluatingLocalPolicies2025, johariPriceExperimentationInterference2025, fariasCorrectingInterferenceExperiments2023, liExperimentingStochasticCongestion2024, boutilierOperationalDosageImplications2025, johariWhenDoesInterference2025, karnani2024essays, johari2022experimental, blake2014marketplace, eckles2016design}. For example, if the intervention increases demand for a program among the treatment group, then acceptance rates at that program may decrease for the control group, inflating a naively estimated treatment effect on match outcomes. Our work is complementary to this literature, which focuses on randomization schemes and debiasing estimators; we focus on the intervention design itself to \textit{minimize interference} by preserving offer likelihoods for applicants with ex ante \high likelihoods.

Our ability to simulate match outcomes given application behavior allows us to characterize interference effects and optimize our intervention accordingly. Using this capability, we further run scale-up analyses to explore how large our intervention could be while still bounding interference, analogous to analyses in \citet{alubaydli2017can, allende2019approximating, larroucau2024college}. 

\section{Theoretical proofs}
\label{sec:theoretical_proofs}
\FloatBarrier
\subsection{Coverage and proof of \Cref{thm:cna_matches}}

\label{subsec:appendix_concavity_proof}
\noindent We first define \textit{Coverage}, which captures how nudges are distributed.

\begin{definition}[Coverage]
\textit{Coverage at $k$} under $\nudgeallocation$ is the number of students who receive at least $k$ nudges. 
\[
\mathrm{Cov_k}(\nudgeallocation)=\sum_{i\in\students} \mathbf{1}\{k_i(\nudgeallocation)\ge k\}.
\]
\end{definition}
For example, $\mathrm{Cov_1}(\nudgeallocation)$ is the number of students who receive at least one nudge. We now define lexicographic maximum coverage, a property of an allocation that characterizes first maximizing the number of students who receive at least one nudge, then the number receiving two, and so on. 

\begin{definition}[Lexicographic maximum coverage]
An allocation $\nudgeallocation^*$ has \emph{lexicographic maximum coverage} among all feasible allocations given $Z, \nudgecaps$ if there does not exist any feasible $\nudgeallocation \text{and any }k$ such that $\mathrm{Cov_k}(\nudgeallocation) > \mathrm{Cov_k}(\nudgeallocation^*)$ and $ \mathrm{Cov_{k'}}(\nudgeallocation) = \mathrm{Cov_{k'}}(\nudgeallocation^*)$ for all $k' < k$.
\end{definition}

\begin{definition}[{Prefix-maximal coverage}]
An allocation $\nudgeallocation^*$ has \emph{prefix-maximal coverage} among all feasible allocations given $Z,\nudgecaps$ if, for every feasible allocation $\nudgeallocation$ and every $k\geq 1$,
\[
\sum_{r=1}^k \mathrm{Cov_r}(\nudgeallocation^*)
\geq
\sum_{r=1}^k \mathrm{Cov_r}(\nudgeallocation).
\]
\end{definition}
Note that prefix-maximal coverage implies lexicographic maximum coverage.

We show that, perhaps surprisingly, \textit{any} strictly concave $u$ yields prefix-maximal coverage, and therefore lexicographic maximum coverage. 

\begin{restatable}[Concavity maximizes coverage]{theorem}{concavitythm}
\label{thm:concave_beats_linear_cov}Let $u$ be increasing and strictly discretely concave. Then, $\nudgeallocation$ is an optimum of $\textbf{\methodabbr}(\nudgecaps, u)$ $\iff$ it has prefix-maximal coverage.
\end{restatable}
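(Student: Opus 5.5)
The plan is to view the feasible allocations as integral flows in a bipartite network, or equivalently as independent sets of a matroid, and to use the fact that for such structures the achievable degree sequences form the integer points of a polymatroid-like set. I will then show that a concave separable objective is maximized exactly at the vectors that are majorization-maximal.

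First I rewrite the objective in terms of coverage. Since $u$ is increasing, put $\Delta_r = u(r)-u(r-1)$ for $r\ge 1$. Then $\sum_i u(k_i) = |\students|u(0) + \sum_{r\ge 1}\Delta_r \mathrm{Cov_r}(\nudgeallocation)$. Strict discrete concavity makes $\Delta_1>\Delta_2>\cdots$, and monotonicity makes them $\geq 0$ (strictly positive if "increasing" is strict). Abel summation then gives
\[
\sum_{r\ge1}\Delta_r \mathrm{Cov_r} = \sum_{k\ge1}(\Delta_k-\Delta_{k+1})S_k ,\qquad S_k:=\sum_{r=1}^k \mathrm{Cov_r},
\]
with every weight $\Delta_k-\Delta_{k+1}>0$; the sums are finite since $k_i\le|\programs|$. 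This yields ($\Leftarrow$) directly: if $\nudgeallocation^*$ has prefix-maximal coverage, it maximizes every $S_k$ simultaneously, so it maximizes the positively weighted sum and is an optimum of \textbf{\methodabbr}$(\nudgecaps,u)$.

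For ($\Rightarrow$), the key step, and the one I expect to be the main obstacle, is to show that a prefix-maximal feasible allocation exists at all. Once it does, any optimum $\nudgeallocation$ satisfies $S_k(\nudgeallocation)\le S_k(\nudgeallocation^*)$ for all $k$. Equality of the objectives, together with strict positivity of the weights, then forces $S_k(\nudgeallocation)=S_k(\nudgeallocation^*)$ for every $k$, so $\nudgeallocation$ is itself prefix-maximal.

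To prove existence, note that $S_k(\nudgeallocation)=\sum_i\min(k_i,k)$. For fixed $k$, the maximum of $S_k$ is the max-flow value in the network source$\to i$ (capacity $k$), $i\to j$ if $Z_{ij}=1$ (capacity 1), $j\to$sink (capacity $\nudgecaps_j$). I would show that a single allocation attains all these maxima at once. The cleanest route is matroid theory. Student degree vectors of feasible allocations are the integer points of an integral polymatroid $P$, because they are the projections of a bipartite transportation polytope, equivalently the transversal/gammoid structure on copies of students. The greedy (Fujishige/Gale) theorem then says $P$ has a base that is lexicographically, indeed majorization-, optimal in the "least-majorized" sense. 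Concretely, I would build it greedily: process copies $(i,1)$ for all $i$, then $(i,2)$, and so on, adding a copy whenever the set of copies remains independent in the gammoid induced by the student-copy expansion mentioned in the paper's footnote. Matroid greedy gives a maximum-weight basis for any weights that are decreasing in the copy index. Taking weights $w_{(i,r)}=\Delta_r$ recovers the concave objective, and taking $\mathbf 1\{r\le k\}$ recovers $S_k$. Since the same greedy order is consistent with all of these weightings, one basis maximizes every $S_k$ simultaneously.

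I still need to verify that the truncated weights in the second choice of weighting are compatible with the greedy order, since they contain ties. They are: the greedy picks copies in a non-increasing order of those weights, and matroid greedy handles ties. As an alternative without matroids, I would give the alternating-path argument. If some optimum has $S_k$ not maximal, take a max flow for capacity $k$. Its symmetric difference with the current allocation contains an augmenting path from a student with $k_i<k$ to a student with $k_{i'}>k$ (or to slack capacity). Shifting one nudge along this path strictly increases the concave objective because $\Delta_{k_i+1}>\Delta_{k_{i'}}$, a contradiction.
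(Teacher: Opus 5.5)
Your proposal is correct. Its combinatorial core is the same as the paper's: the gammoid on student copies, with greedy processing all first copies, then all second copies, and so on. The surrounding argument is organized differently, though.

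The paper places the marginal weights $\Delta(r)$ on copies. It then uses the fact that \emph{every} maximum-weight base, not only the greedy output, attains the rank of each threshold set $\bigcup_{r\le t}E^{(r)}$, where $E^{(r)}$ is the set of $r$-th copies; it cites this as a standard consequence of the greedy theorem. The converse is then a level-by-level comparison with an optimum.

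You instead use summation by parts to write the objective as a strictly positive combination of the prefix sums $S_k=\sum_{r\le k}\mathrm{Cov}_r$. This makes ($\Leftarrow$) immediate and matroid-free. It also reduces ($\Rightarrow$) to the \emph{existence} of a single allocation maximizing every $S_k$ at once. For that, the elementary greedy property suffices: the greedy set meets each prefix of the processing order in a maximal independent subset. The characterization of all optima then follows from the sandwich $S_k(\nudgeallocation)\le S_k(\nudgeallocation^*)$ together with equality of objective values.

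Your route is more self-contained and mirrors the summation by parts the paper itself uses to prove \Cref{thm:cna_matches}. The paper's weighted-matroid formulation, however, is what carries over to student-specific utilities $u_i$ (raised in the paper's discussion). There your identity $\sum_i u(k_i)=\sum_r\Delta_r\mathrm{Cov}_r$, with coefficients common to all students, is no longer available.

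Your alternating-path alternative is the natural extension of the paper's \Cref{thm:intuition} from $\mathrm{Cov}_1$ to $S_k$. It works provided you first truncate $\nudgeallocation$ to degrees at most $k$ before searching for the augmenting path. You should then stop the path at the first student with more than $k$ nudges that it visits.

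One slip needs repair. Summed over all $k\ge 1$, your Abel identity drops a term $\bigl(\lim_k\Delta_k\bigr)S_K$, so it is false whenever that limit is positive. For example, with $u(x)=x+\sqrt{x}$ and a single student receiving one nudge, the left side is $2$ and the right side is $1$. Truncate at $K:=|\mathcal{P}|$ instead:
\[
\sum_{r=1}^{K}\Delta_r\,\mathrm{Cov}_r=\sum_{k=1}^{K-1}(\Delta_k-\Delta_{k+1})\,S_k+\Delta_K\,S_K .
\]
All coefficients here are strictly positive because $\Delta_K>\Delta_{K+1}\ge 0$, and the rest of your argument goes through unchanged.

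Similarly, the weighting $\mathbf{1}\{r\le k\}$ reproduces $S_k$ only on canonical copy sets, meaning the first $k_i$ copies of each student. In general $|S\cap\bigcup_{r\le k}E^{(r)}|\le\sum_i\min(k_i,k)$. This is the direction you need, so the point is harmless, but you should state it.
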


We first prove \Cref{thm:concave_beats_linear_cov} by showing that the feasible allocations induce a matroid, so the greedy algorithm (which produces prefix-maximal coverage) is optimal. For intuition, we also provide a direct construction in which any allocation that does not maximize $\mathrm{Cov_1}$ can be improved by reallocating nudges from students with multiple nudges to those with no nudges, increasing the concave objective and $\mathrm{Cov_1}$.\footnote{Note that the result is trivially true if $Z_{ij}=1$ for all $i,j$, since then we can arbitrarily transfer nudges to those without nudges; but for general $Z$ we need to show transfer feasibility, which we do so via an alternating path construction.} We then use the resulting coverage property to prove \Cref{thm:cna_matches}.

\FloatBarrier

\usetikzlibrary{arrows.meta, positioning, fit, decorations.pathreplacing}

\pgfdeclarelayer{bg}
\pgfsetlayers{bg,main}
\begin{figure}[t]
\centering
\begin{tikzpicture}[
    >=Stealth,
    vtx/.style={circle, draw, thick, minimum size=8mm, inner sep=0pt, fill=white, font=\scriptsize},
    term/.style={circle, draw, thick, fill=black, minimum size=6mm, inner sep=0pt},
    lab/.style={font=\scriptsize},
    background/.style={rounded corners, fill=gray!15, inner sep=8pt}
]

\node[term] (s) at (0,0) {};
\node[lab,left=2mm of s] {$s$};

\node[vtx] (i1) at (3,4.0) {$(i,1)$};
\node[vtx] (i2) at (3,3.0) {$(i,2)$};
\node[lab] (idots) at (3,2.4) {$\vdots$};
\node[vtx] (ir) at (3,1.6) {$\left(i,K_i\right)$};

\begin{pgfonlayer}{bg}
\node[background,fit=(i1)(i2)(ir)] (graybox1) {};
\end{pgfonlayer}
\node[lab,below=0.5mm of graybox1.south] {Copies of $i$};

\node[lab] (studvdots) at (3,0) {$\vdots$};

\node[vtx] (ip1) at (3,-1.2) {$(i',1)$};
\node[vtx] (ip2) at (3,-2.2) {$(i',2)$};
\node[lab] (ipdots) at (3,-2.8) {$\vdots$};
\node[vtx] (ipr) at (3,-3.6) {$\left(i',K_{i'}\right)$};

\node[lab] (ipdots2) at (3,-5.2) {$\vdots$};

\begin{pgfonlayer}{bg}
\node[background,fit=(ip1)(ip2)(ipr)] (graybox2) {};
\end{pgfonlayer}
\node[lab,below=0.5mm of graybox2.south] {Copies of $i'$};

\node[draw,rounded corners,inner sep=4mm,
      fit=(ip1)(ipr)(studvdots)(i1)(ipdots2)] (studbox) {};
\node[lab,below=1mm of studbox.south] {Student-copy nodes ($E$)};

\node[vtx] (vij) at (6.5,2.4) {$v_{ij}$};
\node[lab] (pairdots1) at (6.5,1.8) {$\vdots$};

\node[lab] (pairdots2) at (6.5,0.3) {$\vdots$};

\node[vtx] (vipj) at (6.5,-0.9) {$v_{i'j}$};
\node[vtx] (vijprime) at (6.5,-1.9) {$v_{i'j'}$};
\node[lab] (pairdots3) at (6.5,-2.5) {$\vdots$};

\node[lab] (pairdots4) at (6.5,-4.0) {$\vdots$};

\begin{pgfonlayer}{bg}
\node[background,fit=(vij)(pairdots1)] (graybox3) {};
\node[background,fit=(vipj)(pairdots3)] (graybox4) {};
\end{pgfonlayer}
\node[lab,below=0.5mm of graybox3.south] {Pairs for $i$};
\node[lab,below=0.5mm of graybox4.south] {Pairs for $i'$};

\node[draw,rounded corners,inner sep=4mm,
      fit=(vij)(pairdots1)(vij)(vijprime)(pairdots4)] (pairbox) {};

\node[lab,below=1mm of pairbox.south] {Pair nodes};

\node[vtx] (j) at (10,1.0) {$j$};
\node[lab] (jvdots) at (10,0) {$\vdots$};
\node[vtx] (jprime) at (10,-1.0) {$j'$};
\node[lab] (jvdots2) at (10,-2.0) {$\vdots$};

\node[draw,rounded corners,inner sep=4mm,
      fit=(j)(jvdots2)] (jbox) {};

\node[lab,below=1mm of jbox.south] {Program nodes};

\node[term] (t) at (12.5,0) {};
\node[lab,right=2mm of t] {$t$};

\draw[->] (s) -- node[lab,midway,above] {$1$} (i1);
\draw[->] (s) -- (i2);
\draw[->] (s) -- (ir);
\draw[->] (s) -- (ip1);
\draw[->] (s) -- (ip2);
\draw[->] (s) -- (ipr);

\draw[->] (i1) -- node[lab,midway,above] {$1$} (vij);
\draw[->] (i2) -- (vij);
\draw[->] (ir) -- (vij);

\draw[->] (ip1) -- (vijprime);
\draw[->] (ip2) -- (vijprime);
\draw[->] (ipr) -- (vijprime);

\draw[->] (ip1) -- (vipj);
\draw[->] (ip2) -- (vipj);
\draw[->] (ipr) -- (vipj);

\draw[->] (vipj) -- (j);
\draw[->] (vijprime) -- (jprime);
\draw[->] (vij) -- node[lab,midway,above] {$1$} (j);

\draw[->] (j) -- node[lab,midway,above] {$\nudgecaps_j$} (t);
\draw[->] (jprime) -- node[lab,midway,above] {$\nudgecaps_{j'}$} (t);

\end{tikzpicture}
\caption{\textbf{Diagram for proof of \Cref{thm:concave_beats_linear_cov}: flow network encoding feasible nudge allocations.} Each unit of $s\to t$ flow corresponds to one nudge. The first column contains $K_i=\sum_j Z_{ij}$ copies of each student $i$, where each copy $(i,r)$ represents a potential nudge. The second column contains pair nodes $v_{ij}$ for eligible pairs $(i,j)$ with $Z_{ij}=1$. The third column contains one node for each program $j$, with an edge of capacity $\nudgecaps_j$ to the sink. A feasible flow corresponds to a feasible nudge allocation: each student copy sends flow to at most one program, flow is allowed only between nudge-eligible pairs $(i,j)$, each student sends at most 1 unit of flow to any program, and each program $j$ receives at most $\nudgecaps_j$ nudges. Edges $s\to (i,r)$ enforce ``each copy used at most once.'' Edges $(i,r)\to v_{ij}$ correspond to eligible nudges $Z_{ij}$. Edges $v_{ij}\to j$ enforce that a student cannot be nudged to the same program twice (only one copy of student $i$ can send flow to $v_{ij}$). Edges $j\to t$ enforce program nudge capacities.}
\label{fig:flow_matroid}
\end{figure}

\paragraph{Setup.}
Suppose there are $n$ students and $m$ nudge-eligible programs. Recall the feasibility constraints
\[
\sum_{i=1}^n \nudgeallocation_{ij} \le \nudgecaps_j \quad \forall j,
\qquad
\nudgeallocation_{ij}\le Z_{ij}\quad \forall i,j,
\qquad
\nudgeallocation_{ij}\in\{0,1\}.
\]
Let $k_i(\nudgeallocation):=\sum_j \nudgeallocation_{ij}$, {$\widehat{\nudgecaps}_j:=\min\{\nudgecaps_j,\sum_i Z_{ij}\}$,} and $B:=\sum_{j=1}^m {\widehat{\nudgecaps}_j}$.
Define discrete marginals $\Delta(t):=u(t)-u(t-1)$ for $t\ge 1$. Without loss of generality, assume $u(0)=0$. 

\concavitythm*

\begin{proof}
We first construct a network flow problem in which (1) feasible integral flows encode exactly the feasible nudge allocations; and (2) the subsets of student nodes that can be simultaneously included in an integer flow form the independent sets of a matroid. Then, we use the fact that the greedy algorithm is optimal for maximizing linear weights over matroids to show that the optimizer of $\textbf{\methodabbr}(\nudgecaps,u)$ corresponds to a prefix-maximal nudge allocation when $u$ is strictly concave.

\paragraph{Step 1: Build a network whose feasible flows encode feasible nudge allocations.}

The network is illustrated in \Cref{fig:flow_matroid}. We construct a layered graph with a source, student-copy nodes, intermediate pair nodes, program nodes, and a sink.

A feasible nudge allocation $\nudgeallocation$ assigns each student a set of nudges to distinct programs where they are nudge-eligible, subject to nudge capacities. To model this, we create multiple copies of each student, where each copy represents one potential nudge for that student. Pair nodes enforce that no student is nudged to the same program twice (at most one copy of student $i$ can send flow to program $j$). 

Formally, for each student $i$, define the maximum possible number of distinct nudges they could receive as 
\[\
K_i := \sum_{j=1}^m Z_{ij}.
\]
Create a \emph{copy ground set}
\[
E := \{(i,r): i\in[n],\ r\in[K_i]\}.
\]
A unit of flow through node $(i,r)$ represents allocating one nudge to student $i$.

We build a directed graph $D=(V,A)$ with source $s$ and sink $t$ as follows:

\begin{itemize}
\item \textit{Student-copy nodes}: for each copy element $e=(i,r)\in E$, create a node $(i,r)$ and add an edge $s \rightarrow (i,r)$ with capacity 1. 
\item \textit{Pair nodes}: for each eligible pair $(i,j)$ with $Z_{ij}=1$, create a node $v_{ij}$. For every copy $(i,r)$ and program $j$ with $Z_{ij}=1$, add an edge $(i,r)\rightarrow v_{ij}$ with capacity 1.
\item \textit{Program nodes}: for each program $j$, create a node $j$. Add edges $v_{ij}\rightarrow j$ with capacity 1 for all $i$. Add an edge $j\rightarrow t$ with capacity $\nudgecaps_j$. 
\end{itemize}

Pair nodes enforce that each student can be nudged to a given program at most once: since the edge $v_{ij}\rightarrow j$ has capacity 1, at most one copy of student $i$ can send flow to program $j$. The edges $j\rightarrow t$ enforce program capacities $\nudgecaps_j$, and the edges $s\rightarrow(i,r)$ ensure that each copy of a student receives at most one nudge.

Feasible flows in this network correspond exactly to feasible nudge allocations $\nudgeallocation$ (see caption of \Cref{fig:flow_matroid} for a full mapping of flow constraints to nudge allocation constraints).

\paragraph{Step 2: The matchable sets of student-copy nodes form a matroid.}

To state the matroid property cleanly, we convert the program capacities to unit-capacity terminals.
For each program $j$, replace the $j\to t$ edge of capacity $\nudgecaps_j$ with $\nudgecaps_j$ distinct terminal
vertices $t_{j,1},\dots,t_{j,\nudgecaps_j}$, and add unit-capacity edges $j \to t_{j,r}$ for $r\in[\nudgecaps_j]$.
Let $T:=\{t_{j,r}: j\in[m],\, r\in[\nudgecaps_j]\}$ denote the set of terminals.

For a subset $S\subseteq E$ of student-copy nodes, call $S$ \emph{linkable} if there exist $|S|$
edge-disjoint directed paths in $D$ that start at the elements of $S$ and end at distinct terminals in $T$.
(Equivalently, there exists an integral flow that sends one unit from the source through each element of $S$
and routes these units to distinct terminals in $T$.)

Define
\[
\mathcal{I} \;:=\; \{S\subseteq E:\ S \text{ is linkable to } T \text{ in } D\}.
\]
A standard result in matroid theory asserts that $(E,\mathcal{I})$ is a matroid (a gammoid, specifically) \cite{schrijver2003combinatorial}. Thus the family of student-copy sets that can simultaneously be matched to terminals (which correspond to nudges) forms a matroid on ground set $E$. (In the directed line graph of $D$, these paths correspond to vertex-disjoint paths from vertices representing edges $s\to e$ ($e\in S$) to vertices representing edges entering $T$).

{For each program $j$, choose any $\widehat{\nudgecaps}_j$ eligible applicants. Because applicants have no nudge limit, these choices form a feasible allocation using all $B$ nudges. Hence this}
matroid has rank $B$ and {allocations using all $B$ nudges} correspond exactly to bases of the matroid.

\paragraph{Step 3: Linearize the concave objective via marginal weights on copies.}
Assign each element $e=(i,r)\in E$ the weight
\[
w(i,r) := \Delta(r)=u(r)-u(r-1).
\]\smallskip\noindent\emph{Normalization.}
We may assume $u(0)=0$ without loss of generality: replacing $u$ by $\tilde u(r):=u(r)-u(0)$ shifts the
objective by the constant $n\,u(0)$ and does not change monotonicity or discrete concavity.

\smallskip\noindent\emph{Canonical copies.}
Because all copies $(i,r)$ of a fixed student $i$ have identical outgoing edges in $D$, linkability of a set
$S\subseteq E$ depends only on the multiplicities $k_i(S):=\big|\{r:(i,r)\in S\}\big|$.
In particular, if $S\in\mathcal{I}$ and $S'$ is obtained by permuting which copies of student $i$ are selected
(holding $k_i$ fixed for all $i$), then $S'\in\mathcal{I}$ as well.

Therefore, among optimal solutions we may restrict attention to \emph{canonical} sets that, for each student $i$,
select the first $k_i(S)$ copies $\{(i,1),\dots,(i,k_i(S))\}$.
For such a canonical set,
\[
\sum_{(i,r)\in S} w(i,r)
= \sum_{i=1}^n \sum_{r=1}^{k_i(S)} \Delta(r)
= \sum_{i=1}^n u\!\left(k_i(S)\right),
\]
using the telescoping identity $u(k)=\sum_{r=1}^k (u(r)-u(r-1))$ and $u(0)=0$.

\paragraph{Step 4: Greedy optimality and prefix maximality.}
By the matroid greedy theorem (see, e.g., \citet{oxley2006matroid,schrijver2003combinatorial}), a maximum-weight base is obtained by sorting elements by nonincreasing weight and adding each element if it preserves independence.

If $u$ is strictly discretely concave, then $\Delta(1)>\Delta(2)>\cdots$, so weights strictly decrease in level index $r$.

Partition the ground set by level: $E^{(t)}:=\{(i,t): t\le K_i\}$.
Since $w(i,t)$ depends only on $t$, the greedy algorithm processes all of $E^{(1)}$ first, then $E^{(2)}$, etc.
Let $S^\star$ be any maximum-weight base. Under strictly decreasing weights, a standard consequence of the matroid greedy theorem is that
for every $t\ge 1$, $S^\star$ maximizes the prefix intersection size
\[
\left|S^\star \cap \bigcup_{r=1}^t E^{(r)}\right|
\]
among all bases. Because we may restrict attention to canonical bases, and for any canonical base $S$,
\[
\left|S\cap\bigcup_{r=1}^t E^{(r)}\right|
=
\sum_{r=1}^t \Cov_r(S),
\]
every optimum $S^\star$ therefore satisfies, for every base $S$ and every $t\geq 1$,
\[
\sum_{r=1}^t \Cov_r(S^\star)
\geq
\sum_{r=1}^t \Cov_r(S).
\]
Any feasible allocation can be extended to a base, which weakly increases each prefix sum, so this inequality also holds relative to every feasible allocation. 
(Intuitively: $\Delta(t)>\Delta(t+1)\ge \Delta(t+2)\ge\cdots$, and so the greedy algorithm first picks as many $t=1$ elements as possible, and only picks a $t=2$ element (a second nudge for a student) if it cannot pick any more $t=1$ elements, and so on.)

This prefix maximization implies lexicographic maximum coverage. A student contributes a copy to $S\cap E^{(r)}$ if and only if $k_i(S)\ge r$. Thus, for $t = 1$, $\left|S^\star \cap E^{(1)}\right| = \Cov_1(S^\star)$ is maximized. Holding $\Cov_1$ fixed, the above then maximizes $\left|S^\star \cap E^{(2)}\right| = \Cov_2(S^\star)$, the number of students with at least 2 nudges (2 copies with flow through them). Continuing, the above greedy solutions maximize the tail vector $(\Cov_1(S),\Cov_2(S),\dots)$ lexicographically, and are optimal by the matroid greedy theorem. Thus, every optimal solution has prefix-maximal (and therefore lexicographic maximum) coverage.\\

The reverse direction is straightforward: if $S$ is not optimal, then there exists an optimal solution $S^\star$ with strictly higher weight. Let $t$ be the smallest level at which $S$ and $S^\star$ differ. Then, $S^\star$ has more elements in $E^{(t)}$ than $S$, so $\Cov_t(S^\star)>\Cov_t(S)$, contradicting  prefix maximality of coverage.

\end{proof}

\subsubsection{Intuition for \Cref{thm:concave_beats_linear_cov}}
The proof of \Cref{thm:concave_beats_linear_cov} uses matroid-based arguments. For intuition on the result, next we prove a weaker result (that with strict concavity, every optimum maximizes $\Cov_1$) more directly, without invoking such machinery. 

A nudge allocation is a binary matrix $\nudgeallocation\in\{0,1\}^{n\times m}$, where $\nudgeallocation_{ij}=1$ indicates nudging student $i$ to program $j$, subject to the program-level nudge capacities and eligibility constraints
\begin{equation}
\sum_{i=1}^n \nudgeallocation_{ij} \le \nudgecaps_j \quad \forall j,
\qquad
\nudgeallocation_{ij} \le Z_{ij} \quad \forall i,j.
\label{eq:caps}
\end{equation}
Given $\nudgeallocation$, define each student's nudge list size $k_i(\nudgeallocation)=\sum_{j=1}^m \nudgeallocation_{ij}$ and the coverage
\[
\mathrm{Cov}(\nudgeallocation)=\sum_{i=1}^n \Ind\{k_i(\nudgeallocation)\ge 1\}.
\]
Let $N(\nudgeallocation)=\sum_{i,j} \nudgeallocation_{ij}=\sum_i k_i(\nudgeallocation)$ be the total number of nudges.

\begin{definition}[Maximal feasible coverage under full utilization]
\label{def:cov_max}
{For $\nudgecaps\leq\overline c$, }let $B=\sum_{j=1}^m \nudgecaps_j$.
Define the maximal feasible coverage among full-utilization (equivalently, linear-optimal) allocations as
\[
\mathrm{Cov}_{\max}(Z,\nudgecaps)
\;:=\;
\max\Bigl\{\mathrm{Cov}(\nudgeallocation): \nudgeallocation \text{ satisfies \eqref{eq:caps} and } N(\nudgeallocation)=B\Bigr\}.
\]
\end{definition}

As a baseline, one can maximize the total number of nudges (a linear objective):
\begin{equation}
\max_{\nudgeallocation} \ \sum_{i=1}^n k_i(\nudgeallocation) \quad \text{s.t.\ \eqref{eq:caps}.}
\label{eq:linear_problem}
\end{equation}

Now instead fix an increasing, discretely concave $u:\mathbb{Z}_{\ge 0}\to\mathbb{R}_{\ge 0}$. Let $u(0)=0$ without loss of generality.
The concave surrogate objective solves
\begin{equation}
\max_{\nudgeallocation} \ \sum_{i=1}^n u\bigl(k_i(\nudgeallocation)\bigr)\quad \text{s.t.\ \eqref{eq:caps}.}
\label{eq:concave_problem}
\end{equation}

\begin{theorem}[Concavity maximizes number of students with at least one nudge]
\label{thm:intuition}
{For $\nudgecaps\leq\overline c$, }fix any increasing, discretely concave $u$ with $u(0)=0$.
\begin{enumerate}
\item There exists an optimizer $\nudgeallocation^\star$ of \eqref{eq:concave_problem} satisfying
\[
\mathrm{Cov}(\nudgeallocation^\star)=\mathrm{Cov}_{\max}(Z,\nudgecaps).
\]
In particular, for any optimizer $\tilde{\nudgeallocation}$ of the linear problem \eqref{eq:linear_problem}, we have $\mathrm{Cov}(\nudgeallocation^\star)\ge \mathrm{Cov}(\tilde{\nudgeallocation})$.
\item If, in addition, $u$ has \emph{strictly diminishing returns} between the first and second nudge, i.e.,
\[
u(1)-u(0)\;>\;u(2)-u(1),
\]
then \emph{every} optimizer $\nudgeallocation^\star$ of \eqref{eq:concave_problem} satisfies $\mathrm{Cov}(\nudgeallocation^\star)=\mathrm{Cov}_{\max}(Z,\nudgecaps)$.
\end{enumerate}
\end{theorem}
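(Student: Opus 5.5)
The plan is a direct exchange argument via alternating paths, without the matroid machinery. The first step is to show that every optimizer of \eqref{eq:concave_problem} uses all $B$ nudges. Because $\nudgecaps\le\overline c$, each program $j$ has at least $\nudgecaps_j$ eligible students. Hence, if some program $j$ has fewer than $\nudgecaps_j$ nudges, there is an eligible student $i$ with $\nudgeallocation_{ij}=0$. Setting $\nudgeallocation_{ij}=1$ keeps feasibility and weakly increases the objective, since $u$ is increasing. So we may restrict to full-utilization allocations, and some optimizer has $N=B$.

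The second step is the core exchange lemma. Suppose $\nudgeallocation$ is full-utilization and $\mathrm{Cov}(\nudgeallocation)<\mathrm{Cov}_{\max}$. Let $Y$ be a full-utilization allocation attaining $\mathrm{Cov}_{\max}$. Consider the symmetric difference $D=\nudgeallocation\triangle Y$ as a bipartite multigraph on students and programs. Orient $Y\setminus\nudgeallocation$ edges from student to program, and $\nudgeallocation\setminus Y$ edges from program to student. Both allocations place exactly $\nudgecaps_j$ nudges at each program, so every program node has equal in- and out-degree in $D$. Take a student $i_0$ with $k_{i_0}(\nudgeallocation)=0<k_{i_0}(Y)$, which exists since $Y$ covers strictly more students. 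Start a walk $i_0\to j_1\to i_1\to j_2\to\cdots$ along unused edges. The degree balance at programs guarantees we can always leave a program after entering it. The walk must therefore end at a student $i_\ell$ where $\nudgeallocation$-out-degree exceeds $Y$-in-degree in the remaining graph. Passing to a trail, we aim to terminate at a student with $k_{i_\ell}(\nudgeallocation)>k_{i_\ell}(Y)\ge 0$. One may choose $i_\ell$ with $k_{i_\ell}(\nudgeallocation)\ge 2$, or with $k_{i_\ell}(\nudgeallocation)=1$ and $k_{i_\ell}(Y)=0$.

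Flipping the path (add the $Y$ edges and remove the $\nudgeallocation$ edges along it) gives a feasible full-utilization allocation $\nudgeallocation'$. Program loads are unchanged, eligibility holds because all edges lie in $Z$, and intermediate students keep their counts. Student $i_0$ goes from $0$ to $1$ nudge, and $i_\ell$ loses one. The objective changes by $\Delta(1)-\Delta(k_{i_\ell}(\nudgeallocation))$.

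The main obstacle is guaranteeing an endpoint with $k_{i_\ell}(\nudgeallocation)\ge 2$, so that coverage increases. I would handle it with a counting argument on excess. Let students with $k_i(\nudgeallocation)>k_i(Y)$ be sinks and those with $k_i(\nudgeallocation)<k_i(Y)$ be sources. Total excesses balance because $N(\nudgeallocation)=N(Y)$. Among students reachable from uncovered sources, if every reachable sink had $k(\nudgeallocation)=1$ and $k(Y)=0$, then flipping would not increase coverage. Instead, I would use reachability to argue that repeatedly rerouting strictly increases $\mathrm{Cov}$ in a potential sense. Concretely, compare $\sum_i\min(k_i,1)$ between $\nudgeallocation$ and $Y$. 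Students that are sinks with $k(\nudgeallocation)=1$ and $k(Y)=0$ count against $\nudgeallocation$ as covered-but-not-in-$Y$. A decomposition of $D$ into paths then gives more paths starting at uncovered students than ending at such singly-covered sinks, which forces some path to end at a student with $k(\nudgeallocation)\ge 2$. With such a path, $\mathrm{Cov}$ rises by one and the objective weakly rises, since $\Delta(1)\ge\Delta(k)$ by concavity.

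The two claims then follow. For part 1, start from any optimizer with $N=B$ and iterate the exchange. Each exchange preserves optimality, since the objective is weakly larger and hence equal, and strictly increases $\mathrm{Cov}$. The process terminates at an optimizer attaining $\mathrm{Cov}_{\max}$. The comparison with linear optimizers follows because linear optimizers are exactly the full-utilization allocations, whose coverage is at most $\mathrm{Cov}_{\max}$. For part 2, suppose some optimizer has $\mathrm{Cov}<\mathrm{Cov}_{\max}$. It is full-utilization by step one: if not, adding a nudge gives strict gain unless $u$ is flat, and strictly diminishing returns with increasing $u$ gives $\Delta(1)>0$, so adding a nudge to an uncovered eligible student is a strict gain. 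One exchange then changes the objective by $\Delta(1)-\Delta(k)\ge\Delta(1)-\Delta(2)>0$ for $k\ge2$, contradicting optimality.
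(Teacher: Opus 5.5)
Your proposal is essentially the paper's proof. It uses the same reduction to full utilization, the same alternating-path flip in the symmetric difference of a concave optimizer $X$ and a coverage-maximizing full-utilization $Y$, and the same decisive count showing that more paths start at $U=\{i: k_i(X)=0<k_i(Y)\}$ than end at $L=\{i: k_i(X)=1,\ k_i(Y)=0\}$. The one step you should make explicit is that every path in your decomposition ends at a student with $k_v(X)>k_v(Y)$: in an arbitrary decomposition a path from $U$ can stop at a balanced vertex where another path starts, and only strict excess makes ``ends outside $L$'' force $k_v(X)\ge 2$ (the paper secures this by concatenating such paths).
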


\begin{proof} 
\noindent\emph{High-level intuition.} We argue by contradiction.
Suppose a full-utilization maximum-coverage allocation $Y$ covers more students than a concave-optimal allocation $\nudgeallocation$. Note that we can restrict attention to full-utilization concave-optimal allocations; {$\nudgecaps\leq\overline c$ implies} $|\{i: Z_{ij}=1\}|\ge \nudgecaps_j$ for every program $j$, so if some program $j$ has unused capacity (i.e., $\sum_i \nudgeallocation_{ij}<\nudgecaps_j$), then there exists an eligible student $i$ with $\nudgeallocation_{ij}=0$; setting $\nudgeallocation_{ij}=1$ increases $k_i(\nudgeallocation)$ by one and weakly increases $\sum_i u(k_i(\nudgeallocation))$. Repeating this operation yields a full-utilization allocation with objective weakly larger. 

Then, we show that, from the concave-optimal allocation, it is feasible to move a nudge from a student with at least 2 nudges to a student with 0 nudges, via an alternating-path swap. This results in increasing coverage and (by diminishing returns) not decreasing the concave objective (strictly increasing it under the strict condition).

\begin{figure}[tb]
    \centering
    \begin{tikzpicture}[
        font=\small,
        student/.style={circle, draw, fill=gray!10, minimum size=5.8mm, inner sep=1pt},
        program/.style={rectangle, rounded corners=1mm, draw, fill=gray!10, minimum width=8mm, minimum height=6mm, inner sep=1pt},
        xedge/.style={draw=blue!70!black, line width=1pt},
        yedge/.style={draw=orange!80!black, dashed, line width=1pt},
        removed/.style={draw=black!35, dashed, line width=0.8pt},
    ]
        \begin{scope}
            \node[student] (uA) at (0,2.0) {$u$};
            \node[student] (s1A) at (0,1.0) {$s_1$};
            \node[student] (s2A) at (0,0.0) {$s_2$};
            \node[student] (vA) at (0,-1.0) {$v$};

            \node[program] (p1A) at (2.2,1.5) {$p_1$};
            \node[program] (p2A) at (2.2,0.5) {$p_2$};
            \node[program] (p3A) at (2.2,-0.5) {$p_3$};

            \draw[yedge] (uA) -- (p1A);
            \draw[xedge] (s1A) -- (p1A);
            \draw[yedge] (s1A) -- (p2A);
            \draw[xedge] (s2A) -- (p2A);
            \draw[yedge] (s2A) -- (p3A);
            \draw[xedge] (vA) -- (p3A);

                \node[anchor=south] at (1.1,2.9) {\textbf{(a)} Alternating path $P$};
                \node[font=\scriptsize, anchor=west] at (0,2.3) {$k_u(\nudgeallocation)=0$};
                \node[font=\scriptsize, anchor=west] at (0,-1.35) {$k_v(\nudgeallocation)\ge 2$};
            \end{scope}

    \draw[->, line width=0.9pt] (3.6,0.5) -- (5.0,0.5) node[midway, above, font=\scriptsize] {flip along $P$};

        \begin{scope}[xshift=6.0cm]
            \node[student] (uB) at (0,2.0) {$u$};
            \node[student] (s1B) at (0,1.0) {$s_1$};
            \node[student] (s2B) at (0,0.0) {$s_2$};
        \node[student] (vB) at (0,-1.0) {$v$};

        \node[program] (p1B) at (2.2,1.5) {$p_1$};
        \node[program] (p2B) at (2.2,0.5) {$p_2$};
        \node[program] (p3B) at (2.2,-0.5) {$p_3$};

        \draw[xedge] (uB) -- (p1B);
        \draw[xedge] (s1B) -- (p2B);
        \draw[xedge] (s2B) -- (p3B);

        \draw[removed] (s1B) -- (p1B);
        \draw[removed] (s2B) -- (p2B);
            \draw[removed] (vB) -- (p3B);

                \node[anchor=south] at (1.1,2.9) {\textbf{(b)} After flipping ($\nudgeallocation'$)};
                \node[font=\scriptsize, anchor=west] at (0,2.3) {$k_u: 0\to 1$};
                \node[font=\scriptsize, anchor=west] at (0,-1.35) {$k_v: k\to k-1\ (\ge 1)$};
            \end{scope}

            \begin{scope}[yshift=-2.2cm, xshift=1.2cm]
                \draw[xedge] (0.0,0) -- (0.6,0);
                \node[font=\scriptsize, anchor=west] at (0.75,0) {$\nudgeallocation$-edge};
                \draw[yedge] (2.0,0) -- (2.6,0);
                \node[font=\scriptsize, anchor=west] at (2.75,0) {$Y$-edge};
                \draw[removed] (4.4,0) -- (5.0,0);
                \node[font=\scriptsize, anchor=west] at (5.15,0) {removed $\nudgeallocation$-edge};
            \end{scope}
    \end{tikzpicture}
    \caption{\textbf{Alternating-path exchange in \Cref{thm:intuition}} Panel (a) highlights an alternating path $P$ in the symmetric difference $E(\nudgeallocation)\triangle E(Y)$ between a concave-optimal allocation $\nudgeallocation$ and a full-utilization maximum-coverage allocation $Y$. Flipping $\nudgeallocation$ along $P$ (panel (b)) preserves program caps, increases coverage by one at the start vertex $u$, and does not decrease the concave objective by discrete concavity.}
    \label{fig:alt_path_flip}
\end{figure}

\paragraph{Step 1 (Pick benchmark allocations).}
Let $\mathcal{X}^\star$ be the set of optimizers of the concave problem \eqref{eq:concave_problem}, and choose $\nudgeallocation\in \mathcal{X}^\star$ that maximizes $\mathrm{Cov}(\nudgeallocation)$ over $\mathcal{X}^\star$. 
By monotonicity of $u$ and {the fact that $\nudgecaps\leq\overline c$ makes full utilization feasible}, we may assume $\nudgeallocation$ is full utilization, i.e., $N(\nudgeallocation)=B$.

Let $Y$ be the full-utilization allocation with $\mathrm{Cov}(Y)=\mathrm{Cov}_{\max}(Z,\nudgecaps)$ (such a $Y$ exists by Definition~\ref{def:cov_max}).
Assume for contradiction that $\mathrm{Cov}(Y)>\mathrm{Cov}(\nudgeallocation)$.

\paragraph{Step 2 (Build the symmetric-difference graph).}
Let $E(\nudgeallocation)=\{(i,j):\nudgeallocation_{ij}=1\}$ and define the symmetric difference subgraph
\[
H \;=\; (V,\ E(\nudgeallocation)\,\triangle\,E(Y)),
\]
where $\triangle$ denotes \emph{symmetric difference}: for sets $A,B$, $A\triangle B=(A\setminus B)\cup(B\setminus A)$.
Thus $E(\nudgeallocation)\triangle E(Y)$ consists of the edges present in exactly one of $\nudgeallocation$ and $Y$ (edges common to both are omitted).
We color edges in $E(\nudgeallocation)\setminus E(Y)$ as $\nudgeallocation$ and edges in $E(Y)\setminus E(\nudgeallocation)$ as $Y$.
Here $V$ consists of all students and all programs.

\paragraph{Step 3 (Decompose $H$ into alternating cycles/paths).}
Since both $\nudgeallocation$ and $Y$ are full utilization, every program vertex $j$ has equal $\nudgeallocation$-degree and $Y$-degree in $H$.
To see this, note that $\sum_i \nudgeallocation_{ij}=\sum_i Y_{ij}=\nudgecaps_j$ and edges common to $\nudgeallocation$ and $Y$ do not appear in $H$, so the number of $\nudgeallocation$-only edges incident to $j$ equals the number of $Y$-only edges incident to $j$.
Equivalently, writing $\deg_H^{(\nudgeallocation)}(v)$ and $\deg_H^{(Y)}(v)$ for the number of incident $\nudgeallocation$-colored and $Y$-colored edges at a vertex $v$ in $H$, we have $\deg_H^{(\nudgeallocation)}(j)=\deg_H^{(Y)}(j)$ for every program $j$.

This balance implies that $H$ can be decomposed into edge-disjoint alternating cycles and alternating paths.
Here an \emph{alternating} cycle/path is one whose edges alternate in color ($Y, \nudgeallocation, Y, \nudgeallocation,\dots$), and \emph{edge-disjoint} means that the components partition $E(H)$.
To construct such a decomposition, fix an arbitrary pairing at each program vertex $j$ that matches each incident $Y$-edge to a distinct incident $\nudgeallocation$-edge (possible since $\deg_H^{(Y)}(j)=\deg_H^{(\nudgeallocation)}(j)$).
Starting from any unused edge, follow a maximal alternating walk that (i) at programs always continues along the paired opposite-color edge and (ii) at students takes any unused opposite-color edge if one exists; this walk cannot get stuck at a program and therefore yields either an alternating cycle or an alternating path with student endpoints.
Delete its edges and repeat until no edges remain.

We will orient each alternating path to start with a $Y$-edge (equivalently, to end with an $\nudgeallocation$-edge); since $H$ is bipartite, every student-to-student path has even length and hence this orientation is well-defined.
If the end of one oriented path equals the start of another, concatenate them and repeat; then, no vertex is simultaneously the end of one path and the start of another.
In this resulting decomposition, any end vertex $v$ of an oriented alternating path satisfies $\deg_H^{(\nudgeallocation)}(v)>\deg_H^{(Y)}(v)$: cycles and internal vertices contribute equal numbers of incident $\nudgeallocation$- and $Y$-edges, while an oriented path contributes one extra incident $Y$-edge at its start and one extra incident $\nudgeallocation$-edge at its end.
See Figure~\ref{fig:alt_path_flip} for an illustration.

\paragraph{Step 4 (Locate a path from a student with 0 nudges in $x$ to a student with at least 2 nudges in $\nudgeallocation$).}
Recall that $k_i(\nudgeallocation)=\sum_{j=1}^m \nudgeallocation_{ij}$ denotes student $i$'s nudge list size under allocation $\nudgeallocation$.
Let $S_{\nudgeallocation}=\{i:k_i(\nudgeallocation)>0\}$ and $S_Y=\{i:k_i(Y)>0\}$ denote the sets of covered students under $\nudgeallocation$ and $Y$, so $\mathrm{Cov}(\nudgeallocation)=|S_{\nudgeallocation}|$ and $\mathrm{Cov}(Y)=|S_Y|$.
Since $\mathrm{Cov}(Y)>\mathrm{Cov}(\nudgeallocation)$, we have $|S_Y\setminus S_{\nudgeallocation}|>|S_{\nudgeallocation}\setminus S_Y|$.

Define the ``newly covered'' and ``fragile uncovered'' sets
\[
U=S_Y\setminus S_{\nudgeallocation}=\{i: k_i(\nudgeallocation)=0,\ k_i(Y)>0\},
\qquad
L=\{i: k_i(\nudgeallocation)=1,\ k_i(Y)=0\}.
\]
Here $U$ are students covered by $Y$ but not by $\nudgeallocation$, while $L\subseteq S_{\nudgeallocation}\setminus S_Y$ are students who have exactly one $\nudgeallocation$-edge and no $Y$-edge (so removing a single $\nudgeallocation$-edge would make them uncovered).
Since $|U|=|S_Y\setminus S_{\nudgeallocation}|>|S_{\nudgeallocation}\setminus S_Y|\ge |L|$, we have $|U|>|L|$.

Consider the edge-disjoint decomposition of $H$ into alternating cycles and alternating paths from Step~3, with every alternating path oriented to start with a $Y$-edge (and hence to end with an $\nudgeallocation$-edge).
For a student $i$, edges common to $\nudgeallocation$ and $Y$ do not appear in $H$, so
\[
k_i(\nudgeallocation)-k_i(Y)=\deg_H^{(\nudgeallocation)}(i)-\deg_H^{(Y)}(i).
\]
In particular, each $i\in U$ has $\deg_H^{(Y)}(i)=k_i(Y)\ge 1$ and $\deg_H^{(\nudgeallocation)}(i)=0$, so every $i\in U$ must be the start of at least one alternating path.
Each $i\in L$ has $\deg_H^{(\nudgeallocation)}(i)=1$ and $\deg_H^{(Y)}(i)=0$, so $i$ is the end of exactly one alternating path (since we need a unique incoming edge in $\nudgeallocation$ for an alternating path to end at $i$).
Since $|U|>|L|$, there exists an alternating path $P$ whose start vertex $u\in U$ and whose end vertex $v$ is not in $L$. 
By the endpoint property from Step~3, we have $\deg_H^{(\nudgeallocation)}(v)>\deg_H^{(Y)}(v)$, and hence $k_v(\nudgeallocation)>k_v(Y)$ by the displayed identity.
Finally, since $v\notin L$, this implies $k_v(\nudgeallocation)\ge 2$ (otherwise $k_v(\nudgeallocation)=1$ forces $k_v(Y)=0$ and hence $v\in L$).

\paragraph{Step 5 (Flip along the path).}
We have located an alternating path $P$ in $H$ starting at $u\in U$ with $k_u(\nudgeallocation)=0$ and ending at $v$ with $k_v(\nudgeallocation)\ge 2$. We can now reallocate nudges along this path to increase coverage without decreasing the concave objective (and strictly increasing it under the additional strict-diminishing-returns condition).

Define $\nudgeallocation'$ by flipping $\nudgeallocation$ along $P$: set $\nudgeallocation'_{ij}=1$ for edges $(i,j)\in E(Y)\cap P$, set $\nudgeallocation'_{ij}=0$ for edges $(i,j)\in E(\nudgeallocation)\cap P$, and keep all other entries the same.
By alternation, every program on $P$ loses exactly one incident $\nudgeallocation$-edge and gains exactly one incident $Y$-edge, so $\nudgeallocation'$ remains feasible under program capacities.
Eligibility is also preserved since every added edge lies in $E(Y)$ and $Y$ satisfies $Y_{ij}\le Z_{ij}$.
All internal students on $P$ lose and gain one nudge, so their list sizes are unchanged; the endpoints satisfy
\[
k_u(\nudgeallocation')=k_u(\nudgeallocation)+1=1,
\qquad
k_v(\nudgeallocation')=k_v(\nudgeallocation)-1\ge 1.
\]
Thus $\mathrm{Cov}(\nudgeallocation')=\mathrm{Cov}(\nudgeallocation)+1$.

Finally, only $k_u$ and $k_v$ change, so the concave objective changes by
\[
\bigl(u(1)-u(0)\bigr)-\bigl(u(k_v(\nudgeallocation))-u(k_v(\nudgeallocation)-1)\bigr)\ \ge\ 0
\]
by discrete concavity, with strict inequality under the additional condition $u(1)-u(0)>u(2)-u(1)$ since $k_v(\nudgeallocation)\ge 2$.

We now close the contradiction.
Since $\nudgeallocation$ is concave-optimal and the objective under $\nudgeallocation'$ is weakly larger, $\nudgeallocation'$ is also concave-optimal.
But $\mathrm{Cov}(\nudgeallocation')>\mathrm{Cov}(\nudgeallocation)$ contradicts the choice of $\nudgeallocation$ as the maximum-coverage optimizer in $\mathcal{X}^\star$.
Therefore $\mathrm{Cov}(\nudgeallocation)=\mathrm{Cov}_{\max}(Z,\nudgecaps)$, proving part~(1).

For part~(2), assume $u(1)-u(0)>u(2)-u(1)$.
Then discrete concavity implies $u(k)-u(k-1)\le u(2)-u(1)$ for all $k\ge 2$, so the objective change above is strictly positive since $k_v(\nudgeallocation)\ge 2$.
Thus no optimizer of \eqref{eq:concave_problem} can satisfy $\mathrm{Cov}(\nudgeallocation)<\mathrm{Cov}_{\max}(Z,\nudgecaps)$, because applying the same construction to such an optimizer would strictly improve the objective.
Hence every optimizer attains maximal coverage, proving part~(2).
\end{proof}

\subsubsection{Proof of \Cref{thm:cna_matches}}
\cnamatchesthm*
\begin{proof}
Recall that $k_i(\nudgeallocation)=\sum_j\nudgeallocation_{ij}$, the number of nudges that applicant $i$ receives. Let $K_i=\sum_j Z_{ij}$ be the maximum number of nudges applicant $i$ \textit{can} receive, and recall that $q(k)$ is the probability that an applicant receiving $k$ nudges adopts one.

Let $K=\max_i K_i$, let $\Delta_r=q(r)-q(r-1)$, and set $\Delta_{K+1}=0$. For any feasible allocation $\nudgeallocation$, summation by parts gives expected adoption as
\[
\sum_i q(k_i(\nudgeallocation))
=
\sum_{r=1}^K \Delta_r\Cov_r(\nudgeallocation)
=
\sum_{t=1}^K
(\Delta_t-\Delta_{t+1})
\sum_{r=1}^t\Cov_r(\nudgeallocation).
\]
Discrete concavity makes every coefficient $\Delta_t-\Delta_{t+1}$ nonnegative. The coverage inequality established in the proof of \Cref{thm:concave_beats_linear_cov} therefore implies that $\nudgeallocation^*$ maximizes expected adoption among feasible allocations.

Under \Cref{assm:adoption}, every desirable match under any feasible $\nudgeallocation$ is made by an applicant who adopted a nudge, so
\[
\mathbb{E}\left[\operatorname{Matches}_Z(\nudgeallocation)\right]
\leq
\sum_i q(k_i(\nudgeallocation)).
\]

Applying the Oracle guarantee on acceptance rates in expectation for each program and summing over programs therefore lower-bounds desirable matches by $L$ times expected adoption. Hence, for every feasible $\nudgeallocation$,
\[
\mathbb{E}\left[\operatorname{Matches}_Z(\nudgeallocation^*)\right]
\geq
L\sum_i q(k_i(\nudgeallocation^*))
\geq
L\sum_i q(k_i(\nudgeallocation))
\geq
L\mathbb{E}\left[\operatorname{Matches}_Z(\nudgeallocation)\right].
\]
Taking the maximum over feasible $\nudgeallocation$ completes the proof.
\end{proof}

\subsection{Proof of algorithm guarantees (\Cref{thm:algorithm_guarantees})}
\label{proof:algo_guarantees}

\algothm*

\begin{proof} We prove each of the three guarantees in turn. 
Note that the binary search procedure invoked by \algoname to find the largest feasible value no greater than the current value of a program's nudge capacity always terminates since the search space is integral and finite (between $0$ and $\nudgecaps_j$), the feasibility check is well-defined, and by \Cref{assm:deterministic_sim} {$\mathcal O_j$} is non-increasing in the candidate value. The feasible set is nonempty because {the no-nudge allocation is safe, and increasing the other programs' nudge capacities cannot increase congestion at program $j$}.

\paragraph{Finite termination.} At every iteration, the algorithm either terminates or strictly decreases at least one coordinate of $\nudgecaps$. Thus $\sum_{j\in \programs} \nudgecaps_j$ strictly decreases by at least $1$ at each nonterminal iteration. Since $\nudgecaps_j \geq 0$ for all $j\in\programs$, the total sum is bounded below by $0$. Therefore the algorithm terminates after at most $1+\sum_{j\in\programs}\overline c_j=1+\sum_{i,j}Z_{ij}$ iterations, including the final unchanged iteration. Each iteration uses $O(\log(1+\overline c_j))$ oracle calls for binary search at program $j$, giving $O\!\left(\left(1+\sum_{j\in\programs}\overline c_j\right)\sum_{j\in\programs}\log(1+\overline c_j)\right)$ oracle calls in total. Since $\overline c_j\leq\sum_{i,j}Z_{ij}$ for every $j$, this gives the stated bound.

\paragraph{Safe nudge capacities.} At termination, $\nudgecaps^{(t)}=\nudgecaps^{(t-1)}$. For each program $j$, the update rule therefore verifies {$\mathcal O_j(\textbf{\methodabbr}(\nudgecaps))=1$} at the final nudge capacity vector. Thus, the constraints are satisfied at termination.

\paragraph{Greatest safe nudge capacities.} Let $\nudgecaps$ be the final nudge capacities at termination. Let $\nudgecaps'\leq\overline{c}$ be any nudge capacity vector satisfying the congestion constraints. We show inductively that $\nudgecaps'\leq\nudgecaps^{(t)}$. This holds initially because $\nudgecaps^{(0)}=\overline{c}$. Suppose it holds at iteration $t-1$. For each program $j$, $\nudgecaps'_j$ is feasible in program $j$'s update: it is no greater than $\nudgecaps_j^{(t-1)}$, and cross-program monotonicity implies that raising the other nudge capacities from $\nudgecaps'_{-j}$ to $\nudgecaps_{-j}^{(t-1)}$ cannot {cause program $j$'s oracle outcome to change from satisfied to violated} (intuitively, increasing capacity elsewhere cannot increase congestion at program $j$, by assumption). The update therefore chooses $\nudgecaps_j^{(t)}\geq\nudgecaps'_j$. Hence $\nudgecaps'\leq\nudgecaps^{(t)}$, and thus $\nudgecaps'\leq\nudgecaps$ at termination.
\end{proof}

\subsection*{AI Disclosure}
An LLM (GPT 5.2--5.6) was used as a proof assistant, such as to help draw explanatory figures, explain ideas, and formalize proofs. The authors came up with all primary ideas and stand by all statements, and nothing was used without validation and editing. AI was not used for data analysis, and did not have any access to any non-public data. 

\FloatBarrier
\section{Supplementary simulation details}

\subsection{Non-personalized recommendations}
\label{subsec:nonpersonalized_nudges}
In all of our main text discussion, the \naive approach refers to nudging students to every high-performing program where they are nudge-eligible (they have \high offer likelihood and the program is within 45 minutes of their home). Here, we compare this to the \textit{non-personalized} approach used in prior interventions, where \textit{every} student at each middle school receives the same list of all nearby, high-performing programs (i.e., recommendations do not incorporate individual offer likelihood), showing that even \naive personalization improves substantively over the non-personalized approach. 

As we discuss in the introduction, non-personalized interventions must either (a) be restricted to programs that most students would be admitted to (e.g., historically undersubscribed programs), or (b) risk steering many students to programs where they are ultimately rejected. Under our definition of high-performing programs, option (a) would have severely limited the scale of the intervention. Only 28 of 56 high-performing programs had any open capacity in the 2025 admissions cycle (without any nudges). Further, the number of high-performing programs where at least half of all students at target middle schools have \high offer likelihood estimates is only 4 of 56. To quantify the consequences of retaining these programs (option (b)), we simulate the allocation method of nudging all students at target middle schools to all high-performing programs within 45 minutes of their middle school and evaluate outcomes under 20\% adoption (following the same procedure as in the main text); without personalization, 41 of 56 high-performing programs reject \textit{over 50\%} of nudged applicants, as opposed to 5 of 56 under our \naive allocation method (and, of course, 0 of 56 under our \optimizesqrt allocation method).

\subsection{Displacement effects}
\label{sec:displacement_effects}
Our main congestion criterion in the deployment is preserving offer likelihoods for \high offer likelihood students. In particular, if a student is told by NYCPS that their chance of admission at a program is over 99\%, these predictions must remain reliable even under large-scale adoption of nudges. Here, we also analyze calibration of \medium and \low offer likelihoods, as well as the displacement effects across \textit{all} students, based on simulations using historical 2024-25 application cycle data. We simulate 20\% adoption under our \optimizesqrt nudge allocation. 

For \medium offer likelihood students, acceptance rates in simulation decline under nudge adoption but largely remain calibrated. For \medium offer likelihood students, the mean acceptance rate across nudge-eligible programs decreases from 63.8\% to 49.0\% (median: 69.2\% to 52.1\%), still well within the (1, 99)\% stated interval. Of the 53 nudge-eligible programs that assign \medium offer likelihood to any students (of 56 total), 46 are calibrated in the absence of nudges. In our simulations at 20\% adoption of nudges, acceptance rates for \medium offer likelihood students drop below the 1\% threshold for 2 programs. Meanwhile, the mean nudge-eligible program acceptance rate of \low offer likelihood students changed from 8.36\% to 5.60\% (median 0.3\% to 0.0\%). 

We next examine which students were displaced from nudge-eligible programs or newly match to nudge-eligible programs in the simulations, and we compare demographics between the groups. Among displaced students, 13.0\% attend target middle schools. This indicates that the displacements increase representation of students from target middle schools at nudge-eligible high schools (and in particular of students with \high offer likelihoods at these programs, indicating concordance with the program- or city-level desiderata). Increasing the number of these matches is in line with our definition of ``desirable matches'' used throughout the paper. Further, 75.7\% of displaced students are FRL-eligible, compared to 86.1\% of students who newly match to one these programs due to nudges. 

Overall, while nudges naturally induce some displacement, calibration is largely preserved and the resulting reallocation aligns with our objectives.

\clearpage
\subsection{Other tables and figures}

\begin{figure}[htbp]
  \centering
  \begin{subfigure}[t]{0.32\linewidth}
    \centering
    \includegraphics[width=\linewidth]{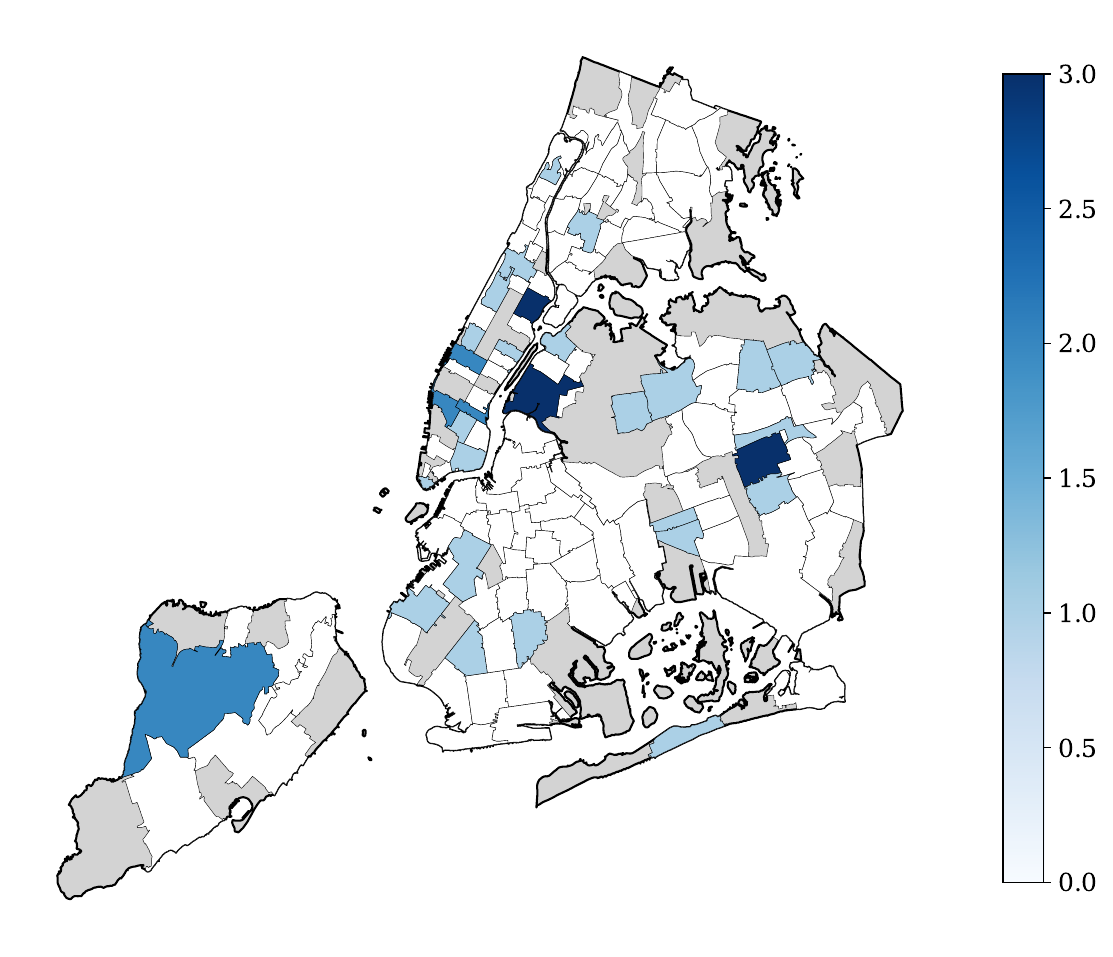}
    \caption{\textbf{Nudge-eligible high schools}}
    \label{fig:nudge_eligible_hs_zipcode_counts}
  \end{subfigure}
  \hfill
  \begin{subfigure}[t]{0.32\linewidth}
    \centering
    \includegraphics[width=\linewidth]{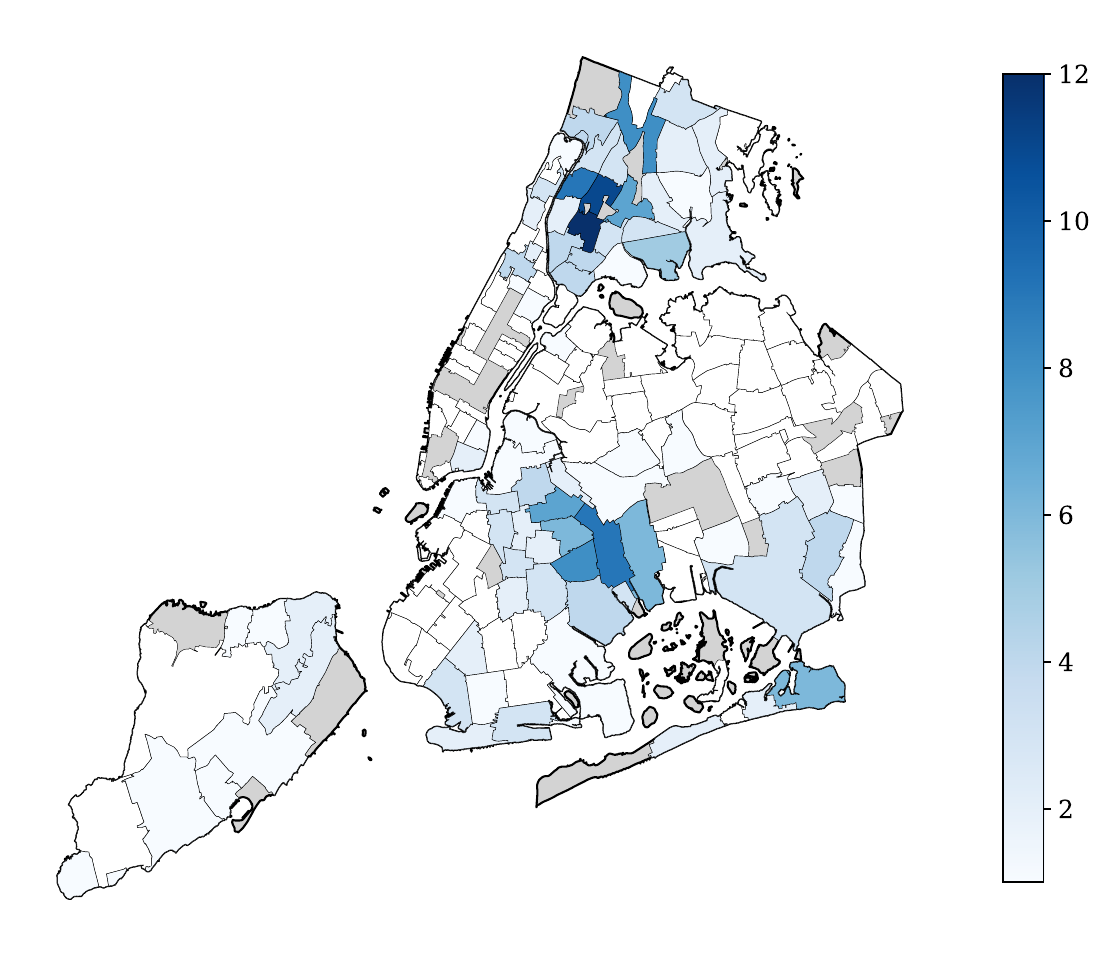}
    \caption{\textbf{Target middle schools}}
    \label{fig:target_ms_zipcode_counts}
  \end{subfigure}
  \hfill
  \begin{subfigure}[t]{0.32\linewidth}
    \centering
    \includegraphics[width=\linewidth]{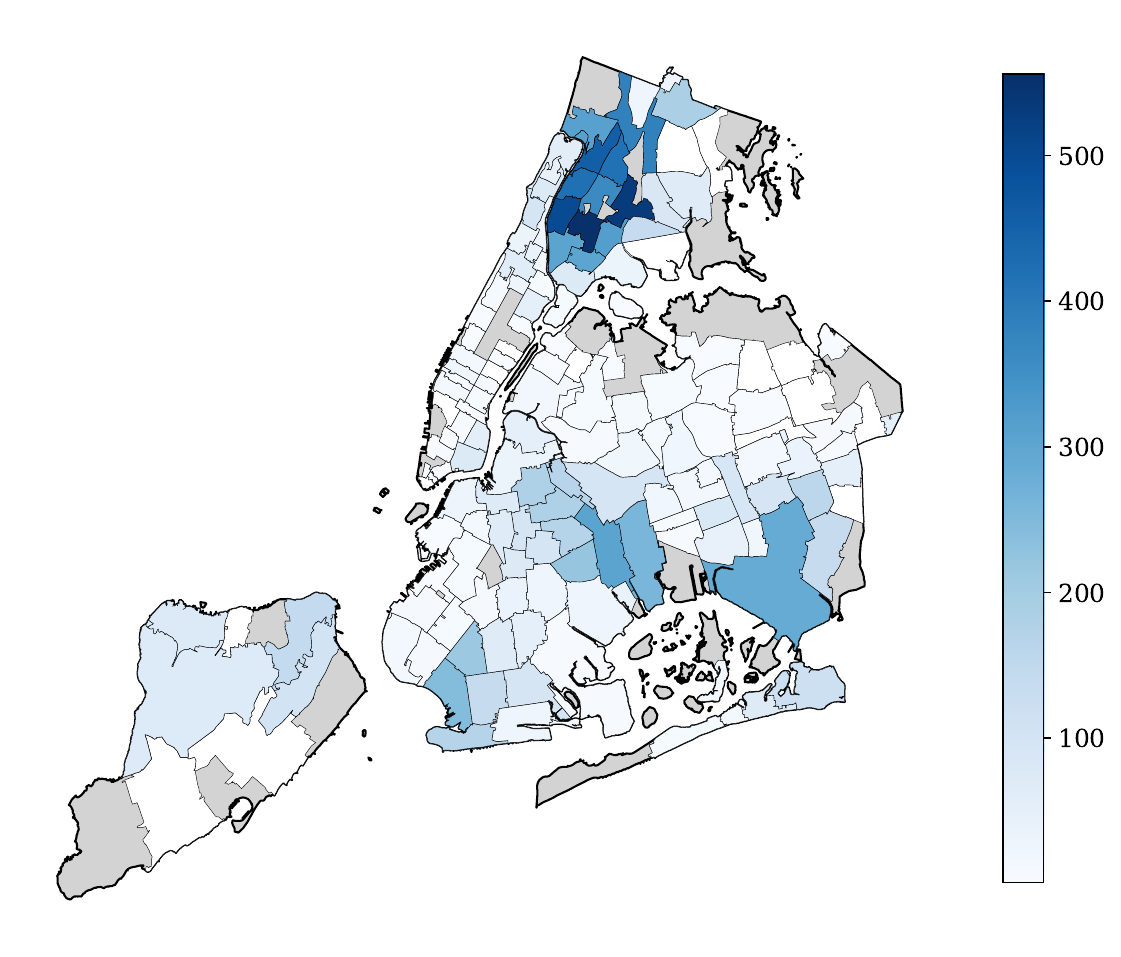}
    \caption{\textbf{Nudge-eligible students}}
    \label{fig:nudge_eligible_student_zipcode_counts}
  \end{subfigure}
  \caption{\textbf{Number of intervention schools and students by zip code.} Gray zip codes are locations where no high school, middle school, or student (respectively) is located at all.}
  \label{fig:counts_by_zipcode}
\end{figure}

\begin{figure}[htbp]
  \vspace{0.2cm}
  \centering
  \begin{subfigure}[t]{0.5\linewidth}
    \centering
    \includegraphics[width=0.75\linewidth]{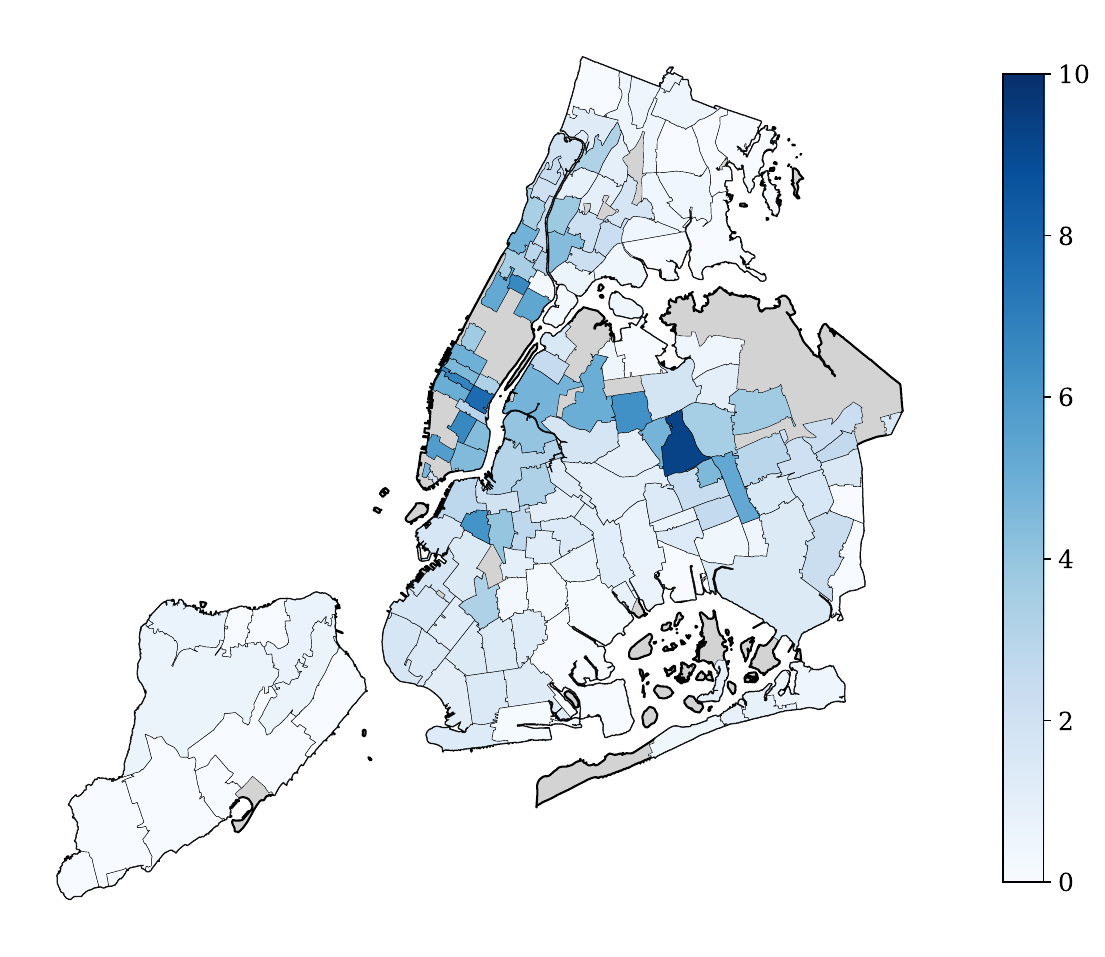}
    \caption{Mean inventory size by zip code. Gray zip codes have $\leq 1$ student who attend a target middle school.}
    \label{fig:access_disparities}
  \end{subfigure}
  \hfill
  \begin{subfigure}[t]{0.45\linewidth}
    \centering
    \includegraphics[width=0.7\linewidth]{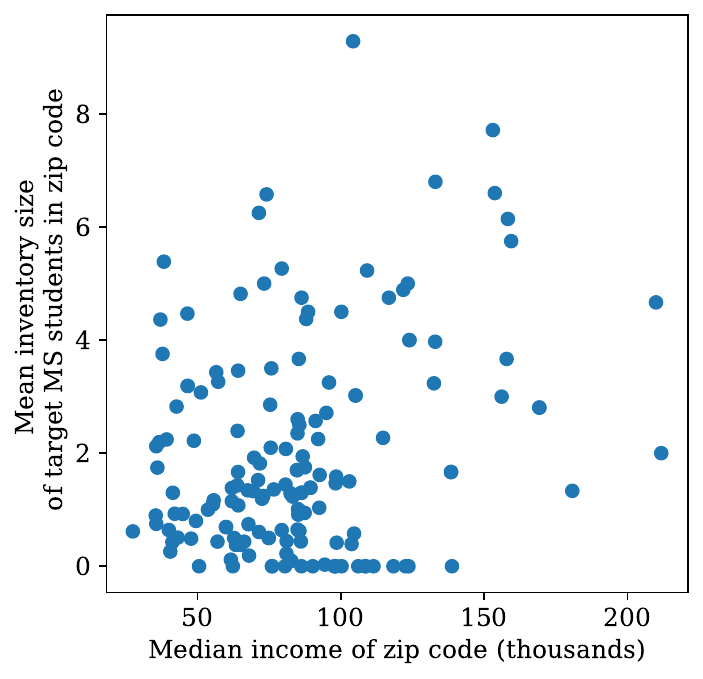}
    \caption{Zip code median income \cite{ipums_nhgis_2025} vs. mean inventory size. Pearson's r = 0.28.}
    \label{fig:income_inventory_scatter}
  \end{subfigure}
  \caption{\textbf{Disparities in access to nudge-eligible high schools.} A student's \textit{inventory size} is number of high-performing programs they are nudge-eligible for. We compute the mean inventory size of students living in each zip code (among students attending target middle schools).}
  \label{fig:congestion_and_disparities}
\end{figure}

\begin{table*}[htbp]
\vspace{0.35cm}
\centering
\small
\begin{tabular}{@{}lrrrrr@{}}\toprule
\textbf{Allocation Method} &
\makecell[r]{\textbf{Total \#}\\\textbf{Nudges}} &
\makecell[r]{\textbf{\# Students}\\\textbf{Nudged}} &
\makecell[r]{\textbf{Match}\\\textbf{Rate}} &
\makecell[r]{\textbf{\# Matched to}\\\textbf{Nudged}} &
\makecell[r]{\textbf{\# Matched to any}\\\textbf{Nudge-eligible}} \\
\midrule
\naive & 26,325 & 11,417 & 0.45 & 2,589 & 2,862 \\
\optimizelinear & 7,022 & 4,608 & 0.92 & 2,130 & 2,521 \\
\optimizesqrt & 7,021 & 6,227 & 0.80 & 2,515 & 2,853 \\
\bottomrule
\end{tabular}
\caption{\textbf{Simulation results at 50\% student adoption.} At 50\% adoption, enforcing nudge capacities continues to improve match rates with minimal effect on the total number of students matched to nudged and nudge-eligible programs.}
\label{table:table_50adoption}
\end{table*}

\begin{figure}[htbp]
  \begin{center}
    \centerline{\includegraphics[width=0.65\textwidth]{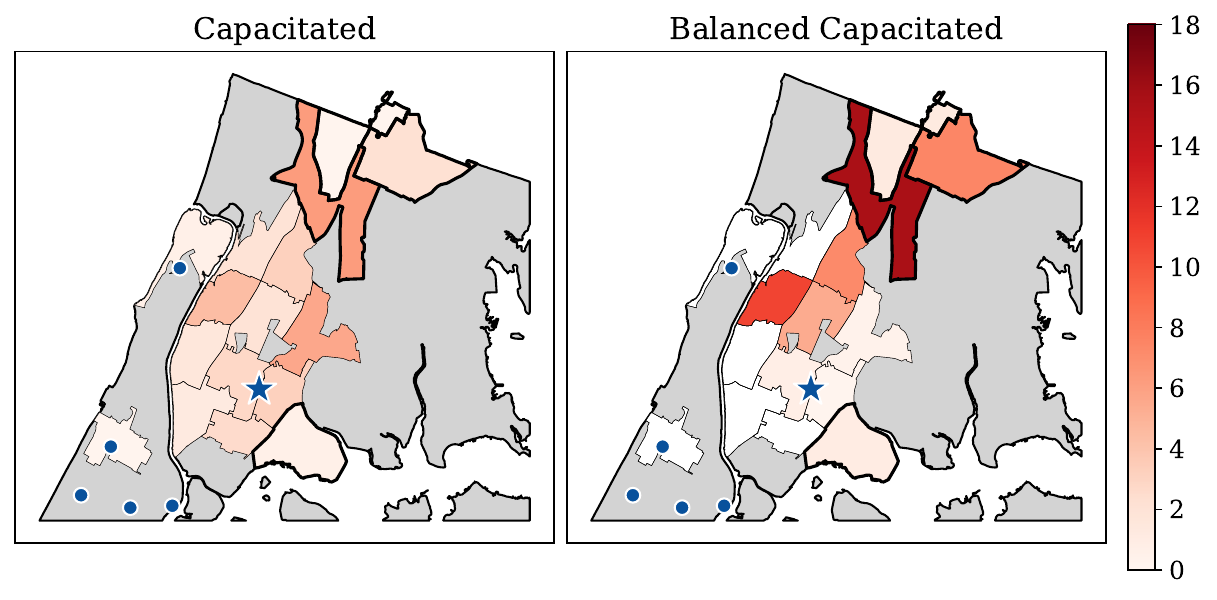}}
    \caption{\textbf{Geographic shift in nudged students.} The shade of red reflects the number of students from each zip code who are \textit{matched} to Bronx Latin after adopting a nudge for the program, under each allocation method. Nudge-eligible programs are indicated by the blue star (Bronx Latin) and dots (other). Balanced allocation results in more matches from students in zip codes \textit{without} access to any other programs (indicated with thicker outline).}
    \label{fig:Y35A_matches_shift}
    \end{center}
\end{figure}

\begin{figure}[ht]
  \centering
  \begin{subfigure}[t]{0.45\linewidth}
    \centering
    \includegraphics[width=\linewidth]{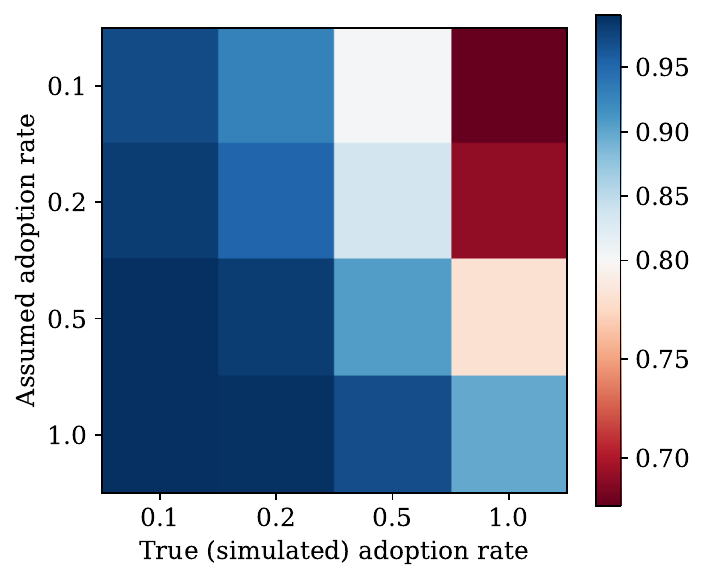}
    \caption{Program mean acceptance rate among \high offer likelihood students.}
    \label{fig:miscalibrated_nudgecaps_acceptancerate}
  \end{subfigure}
  \hfill
  \begin{subfigure}[t]{0.45\linewidth}
    \centering
    \includegraphics[width=\linewidth]{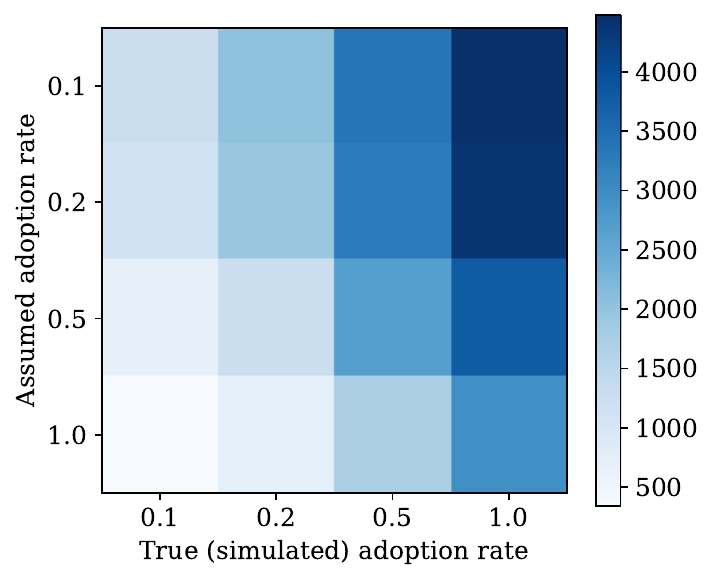}
    \caption{Total number of students matched to a nudged program.}
    \label{fig:miscalibrated_nudgecaps_nummatched}
  \end{subfigure}
  \hfill
  \caption{\textbf{Results of miscalibrated nudge capacities.} We sweep over pairs of assumed adoption rate (which informs nudge capacities) and true adoption rate, to characterize the behavior of miscalibrated nudge capacities. We find that setting nudge capacities too high or low navigates a tradeoff between \high offer likelihood acceptance rates and total intervention effectiveness (number of matches to nudged programs).}
  \label{fig:miscalibrated_nudgecaps}
\end{figure}

\FloatBarrier
\section{Supplementary deployment details}

\subsection{Details on NYC High School Match} 
\label{app:application_process_background}
New York City has about 400 high schools, attended by about 300,000 students. Each high school has at least one \textit{program}; different programs at the same school may provide different concentrations (such as arts). There are about 900 total programs, and students apply to and are admitted at the \textit{program level}. Before entering high school, students (often with the help of their families and school guidance counselors) create a ranked list of programs they wish to apply to, in an online system called MySchools. Since the 2024-25 admissions cycle, students have been able to list an unlimited number of programs on their application profile. Programs vary in admissions standards, and each program also has its own (publicly available) criteria for ranking students, with different seat availability, eligibility, and priority groups. 
We note that applications for General Education (GE) and Students with Disabilities (SWD) are run through the same process but have separate seat groups at each program. 
Once a student opens their application in MySchools, they are assigned and able to view their lottery number, which is used to break ties between students in the same priority group. After the admissions process closes in December, students are then matched to a single program on their list using student-proposing deferred acceptance, run by the NYC Public Schools (NYCPS). If a student fails to match to any program on their list, NYCPS assigns them a manual placement based on other programs they applied to and distance from home. 

The 2024-25 admissions cycle also saw the introduction of a new offer likelihood tool for students to use during the application process \cite{corradini2025school}. The tool uses a bootstrapping and simulation approach to calculate whether a student has a \high (>99\%), \medium, or \low (<1\%) chance of admission at a given program if they apply.\footnote{Not all programs have offer likelihoods available. For example, if a program uses qualitative/human graded information such as auditions/essays, then offer likelihoods are not available.} Since offer likelihood predictions are dependent on an individual student's lottery number, they are calculated and made visible to a student immediately after they begin an application in their MySchools portal.
We directly use these predictions in our work.

Note that all students have access to these offer likelihood estimates, including control students and those not eligible for the intervention. Thus, our intervention does not provide any new information (offer likelihoods, program information) to students that is not already available to them through other sources (MySchools, commute time calculators)---it simply highlights certain programs and their characteristics.  

\subsection{Data}
\label{app:data}

\paragraph{High school metric data} To determine which high school programs to involve in the intervention, we used public data characterizing schools through quantitative performance-based and survey-based metrics: graduation rates, college \& career readiness scores, and school safety ratings. We used metrics reported on the 2023-24 school year, the most recent results available at the time of the deployment. All of these metrics are reported at the \textit{school} level (even though applications are at the \textit{program} level), so the performance of all programs at a school are assumed to be the same. In practice, according to NYCPS collaborators, students within a school often have access to the same resources across programs, so considering school-level metrics are reasonable proxies for program-level metrics. We also use student-level graduation data from the 2017-18 and 2018-19 application cycles to compute program-level graduation rates and find that 91\% of programs at schools with school-level graduation rates above 95\% (as reported in the 2023-24 metric data) have a computed program-level graduation rate of at least 90\% across these two years. 

\paragraph{Middle school feeder pattern data} To determine which middle schools to involve in the intervention, we used ``feeder pattern'' data from the 2023-24 and 2024-25 admissions cycles, characterizing how many students from each middle school enroll at each high school. 

\paragraph{Student-level data}
We have access to each student's entire application, including current school, ranked application profile of program preferences, match outcome, home census tract, lottery number (used for breaking ties within an admissions priority group), and offer likelihoods using the above-described tool. We consider both General Education (GE) and Students with Disabilities (SWD), which use the same match process but have separate seat groups at each program. We developed the system based on analysis and simulations using historical data from the 2024-25 admissions cycle (72,197 students), i.e., we calculate program-level nudge capacities using 2024-25 data. Simulations on 2024-25 admissions cycle data use a student's zip code instead of census tract due to data availability constraints. We deploy our system for the 2025-26 admissions cycle, in which we had access to lottery numbers on a rolling basis, as students began applications. 

\paragraph{Match simulator} Lastly, we have access to details on each program's capacity and criteria for ranking students for the 2024-25 admissions cycle. We used this data to build a simulator that runs the deferred acceptance algorithm using these settings, meaning that we can accurately simulate match outcomes under any assumptions on student applications for this application cycle.

\subsection{Balance checks}
\label{sec:balance_checks}

We assess balance in student characteristics across treatment groups. While the realized groups appear to differ somewhat in residence borough and race composition, the distribution of p-values from corresponding balance tests is approximately uniform, consistent with valid randomization. This suggests that observed differences reflect random variation in the realized assignment (rather than systematic imbalance). To account for observed imbalances, we control for these characteristics in regressions as supplementary analyses, as shown in Tables \ref{table:all_fixed_effects_1} and \ref{table:all_fixed_effects_2}.

\begin{figure}[ht]
  \begin{center}
    \centerline{\includegraphics[width=0.6\textwidth]{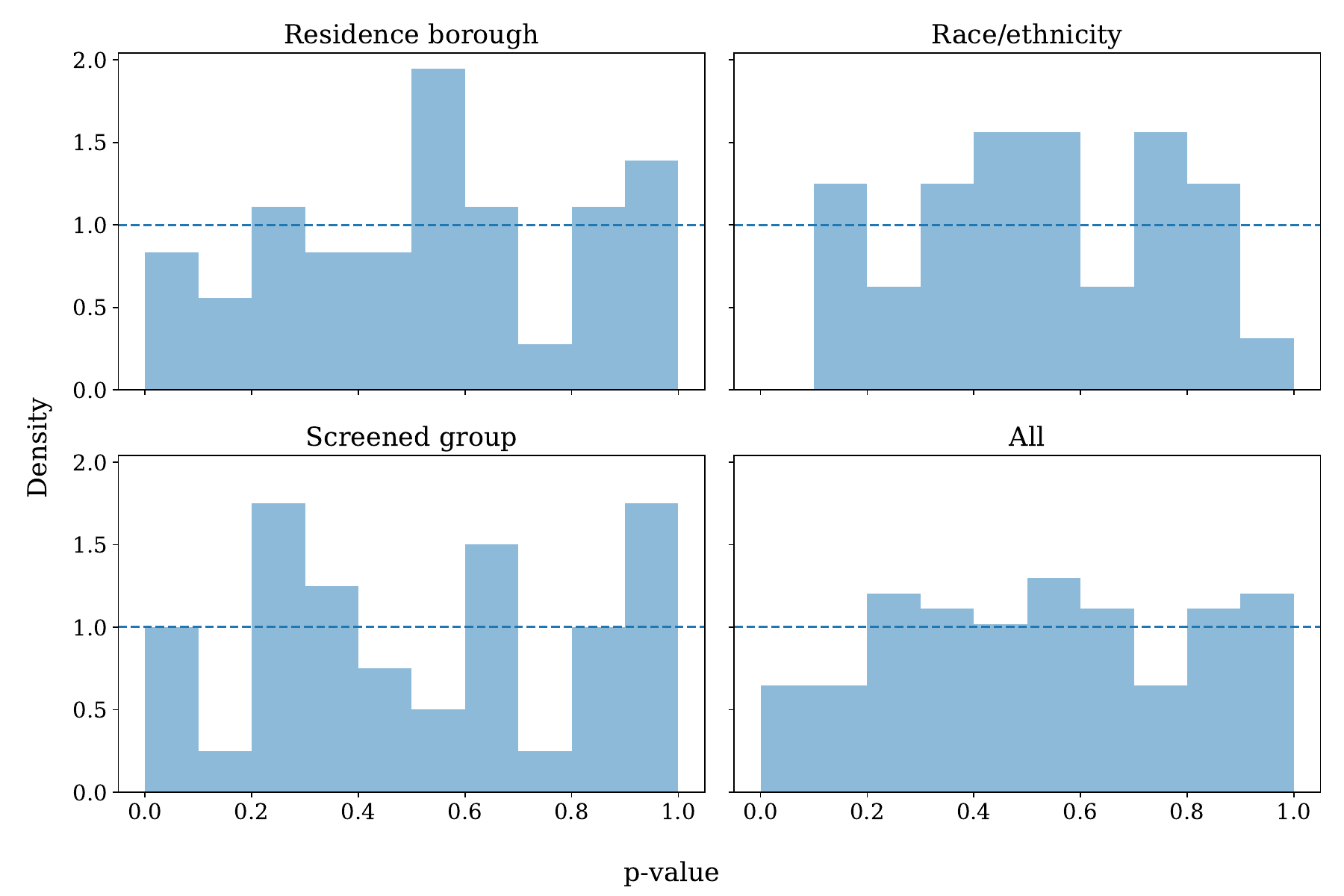}}
    \caption{\textbf{Distribution of p-values from balance tests.} 
    For each covariate category (e.g., residence borough = Queens), we use $Y$ as an indicator for the category and estimate linear probability models of the form 
    $Y \sim \text{treatment\_intensity\_50} + \text{treatment\_intensity\_100}$. 
    We estimate these models using (i) all students at target middle schools and (ii) only nudge-eligible students at those schools. 
    The figure plots the distribution of p-values from these tests. The approximately uniform distribution is consistent with differences arising from random variation under valid random assignment.}
    \label{fig:balance_tests_pvalues}
    \end{center}
\end{figure}


\begin{table}[htbp]
\centering
\begin{tabular}{lrrr}
\toprule
\textbf{Variable} & \textbf{Treatment} & \textbf{Control} & \textbf{Non-target} \\
\midrule
Total \# & 111 & 107 & 636 \\
Total \# students & 9,925 & 11,593 & 39,062 \\
\hline
Bronx & 0.41 & 0.41 & 0.19 \\
Brooklyn & 0.37 & 0.38 & 0.27 \\
Manhattan & 0.10 & 0.09 & 0.27 \\
Queens & 0.10 & 0.09 & 0.25 \\
Staten Island & 0.03 & 0.02 & 0.03 \\
\hline
Asian & 0.060 & 0.049 & 0.277 \\
Black & 0.310 & 0.324 & 0.112 \\
Hispanic/Latino & 0.525 & 0.521 & 0.382 \\
White & 0.075 & 0.079 & 0.187 \\
\hline
SWD & 0.226 & 0.228 & 0.171 \\
ELL & 0.223 & 0.208 & 0.185 \\
\hline
Mean \# nudge-eligible students & 6.9 & 7.5 & --- \\
Median \# nudge-eligible students & 4 & 5 & --- \\
\hline
Mean grade tier & 3.33 & 3.35 & 2.91 \\
\bottomrule
\end{tabular}
\caption{\textbf{Middle school characteristics.} Target middle schools (distributed within treatment and control) are those that historically did not send many students to high-performing high schools, so we expect them to systematically differ from middle schools not in the experimental group. Non-target middle schools reported here exclude charter schools (which were also filtered out for computing target middle schools); students from private middle schools and homeschool further contribute to the discrepancy between the number of students reported here and the total number of students in the match.}
\end{table}


\begin{table}[htbp]
\centering
\begin{tabular}{lrr}
\toprule
\textbf{Variable} & \textbf{Treatment} & \textbf{Control} \\
\midrule
Total \# & 444 & 822 \\
\hline
Bronx & 0.23 & 0.39 \\
Brooklyn & 0.48 & 0.41 \\
Manhattan & 0.07 & 0.09 \\
Queens & 0.22 & 0.11 \\
Staten Island & 0.00 & 0.00 \\
\hline
Asian & 0.072 & 0.063 \\
Black & 0.275 & 0.315 \\
Hispanic/Latino & 0.517 & 0.542 \\
White & 0.086 & 0.055 \\
\hline
Female & 0.538 & 0.491 \\
SWD & 0.187 & 0.223 \\
ELL & 0.169 & 0.181 \\
\hline
Screened Group 1 & 0.30 & 0.27 \\
Screened Group 2 & 0.18 & 0.20 \\
Screened Group 3 & 0.21 & 0.22 \\
Screened Group 4 & 0.16 & 0.14 \\
Screened Group 5 & 0.16 & 0.16 \\
Mean screened group & 2.7 & 2.7 \\
\bottomrule
\end{tabular}
\caption{\textbf{Nudge-eligible student characteristics.} Note that there are some imbalances between student demographic distributions between treatment and control, especially with respect to Borough. However, these do not explain our results: our primary regression specifications contain stratum fixed effects, which are defined by middle school Borough and number of nudge-eligible students in the 2024-25 admissions cycle.}
\end{table}


\begin{table}[htbp]
\centering
\begin{tabular}{lrr}
\toprule
\textbf{Variable} & \textbf{Treatment} & \textbf{Control} \\
\midrule

\multicolumn{3}{l}{\textit{Student-level measures (per student)}} \\
\hline
Mean \# in inventory & 1.65 & 1.42 \\
Median \# in inventory & 1 & 1 \\
Mean \# nudged & 1.60 & 1.35 \\
Median \# nudged& 1 & 1 \\
\hline

\multicolumn{3}{l}{\textit{Distances to nudged programs (minutes, per student--nudge pair)}} \\
\hline
Mean travel time to nudged program & 37.07 & 38.12 \\
Median travel time to nudged program & 39.22 & 39.89 \\
\hline

\multicolumn{3}{l}{\textit{Program performance among nudged programs (per student--nudge pair)}} \\
\hline
Mean graduation rate of program & 96.97 & 97.35 \\
Median graduation rate of program & 96.4 & 98.0 \\
Mean CCR of program & 70.5 & 70.2 \\
Median CCR of program & 70.4 & 70.4 \\
\hline

\multicolumn{3}{l}{\textit{Geographic distribution of nudged programs (share of student--nudge pairs)}} \\
\hline
Bronx & 0 & 0 \\
Brooklyn & 0.15 & 0.14 \\
Manhattan & 0.27 & 0.42 \\
Queens & 0.59 & 0.45 \\
Staten Island & 0 & 0 \\
\bottomrule

\end{tabular}
\caption{\textbf{Breakdown of characteristics on nudge-eligible students and nudged programs.}}
\end{table}

\FloatBarrier
\subsection{Treatment effects with fixed effects}
\label{subsec:all_fixed_effects}

\begin{table}[H]
\centering
\rotatebox{90}{
\begin{minipage}{1.35\textwidth}
\centering

\begin{minipage}{\linewidth}
\centering
\resizebox{\linewidth}{!}{%
\begin{tabular}{lcccccccccccccccc}
\toprule
 & (1) & (2) & (3) & (4) & (5) & (6) & (7) & (8) & (9) & (10) & (11) & (12) & (13) & (14) & (15) & (16) \\
\midrule
Treatment effect 
& 0.553 & 0.548 & 0.543 & 0.546 & 0.560 & --- & 0.552 & 0.557 & 0.497 & 0.493 & 0.469 & 0.472 & 0.523 & 0.518 & 0.517 & 0.521 \\
& (0.239) & (0.240) & (0.235) & (0.236) & (0.230) & --- & (0.225) & (0.227) & (0.249) & (0.250) & (0.245) & (0.246) & (0.237) & (0.238) & (0.230) & (0.232) \\
p-value & 0.021** & 0.022** & 0.021** & 0.021** & 0.015** & --- & 0.014** & 0.014** & 0.046** & 0.049** & 0.055* & 0.055* & 0.027** & 0.029** & 0.025** & 0.024** \\
\midrule
Residence borough FE & Yes & Yes & Yes & Yes & Yes & Yes & Yes & Yes \\
Ethnicity FE & Yes & Yes & Yes & Yes & & & & & Yes & Yes & Yes & Yes & & & &\\
Screened group FE & Yes & Yes & & & Yes & Yes & & & Yes & Yes & & & Yes & Yes & & \\
Gender FE & Yes & & Yes & & Yes & & Yes & & Yes & & Yes & & Yes & & Yes & \\
\bottomrule
\end{tabular}
}
\captionof{table}{\textbf{Treatment effects with alternative fixed effects specifications (Logit).} \textit{Note:} Maximum-likelihood estimation for specification (6) failed to converge. Standard errors in parentheses. *** $p<0.01$, ** $p<0.05$, * $p<0.1$.}
\label{table:all_fixed_effects_1}
\end{minipage}

\vspace{1.5cm}

\begin{minipage}{\linewidth}
\centering
\resizebox{\linewidth}{!}{%
\begin{tabular}{lcccccccccccccccc}
\toprule
 & (1) & (2) & (3) & (4) & (5) & (6) & (7) & (8) & (9) & (10) & (11) & (12) & (13) & (14) & (15) & (16) \\
\midrule
Treatment effect 
& 0.059 & 0.058 & 0.061 & 0.061 & 0.060 & 0.060 & 0.063 & 0.064 & 0.052 & 0.052 & 0.053 & 0.053 & 0.056 & 0.056 & 0.059 & 0.060 \\
& (0.026) & (0.026) & (0.027) & (0.027) & (0.026) & (0.026) & (0.027) & (0.027) & (0.028) & (0.028) & (0.029) & (0.029) & (0.027) & (0.027) & (0.028) & (0.028) \\
p-value & 0.024** & 0.026** & 0.025** & 0.025** & 0.019** & 0.021** & 0.018** & 0.018** & 0.058* & 0.061* & 0.068* & 0.068* & 0.036** & 0.038** & 0.034** & 0.034** \\
\midrule
Residence borough FE & Yes & Yes & Yes & Yes & Yes & Yes & Yes & Yes \\
Ethnicity FE & Yes & Yes & Yes & Yes & & & & & Yes & Yes & Yes & Yes & & & &\\
Screened group FE & Yes & Yes & & & Yes & Yes & & & Yes & Yes & & & Yes & Yes & & \\
Gender FE & Yes & & Yes & & Yes & & Yes & & Yes & & Yes & & Yes & & Yes & \\
\midrule
$R^2$ & 0.083 & 0.083 & 0.033 & 0.033 & 0.072 & 0.072 & 0.009 & 0.009 & 0.081 & 0.081 & 0.029 & 0.029 & 0.071 & 0.071 & 0.008 & 0.007 \\
\bottomrule
\end{tabular}
}
\captionof{table}{\textbf{Treatment effects with alternative fixed effects specifications (OLS).} Standard errors in parentheses. *** $p<0.01$, ** $p<0.05$, * $p<0.1$.}
\label{table:all_fixed_effects_2}
\end{minipage}

\end{minipage}
}
\end{table}

\FloatBarrier 

\subsection{Pre-registration compliance}
In this section, we present all procedures and analyses detailed in the pre-registration, and we describe the status of each. Nudge emails were implemented as described, with the exception that students from continuing middle schools were dropped from Batch 2, in response to stakeholder feedback. The same rule (on a feature independent from original randomization) was applied to control and treatment students. The pre-registration also stated that we would send follow-up emails to students a few weeks after their first intervention email. Due to the timing of original intervention emails (which faced a variety of unforeseen logistical constraints), follow-up emails were sent for Batch 1 students only. We also excluded students from receiving continuing middle schools from follow-up emails. Lastly, we pre-registered that we would send text messages as an additional method of communication to treatment families; this was not implemented, due to unforeseen issues with NYCPS text message platform. All other aspects of the intervention itself are in full compliance with the pre-registration. For analyses, we detail pre-registration compliance in \Cref{table:prereg_compliance}.

\begin{table}[tb]
\centering
\small
\begin{tabular}{p{0.4\linewidth} p{0.56\linewidth}}
\toprule
\textbf{Analysis} & \textbf{Status} \\
\midrule

\multicolumn{2}{l}{\textit{Primary experimental outcomes}} \\
\addlinespace
Primary outcome: applications 
& Logit results in \Cref{table:treatment_effects_logit}; OLS results in \Cref{table:treatment_effects_ols} \\
Primary outcome: matches 
& Logit results in \Cref{table:treatment_effects_logit}; OLS results in \Cref{table:treatment_effects_ols} \\
Alternative primary outcome specifications 
& \Cref{table:treatment_effects_ols} \\

\midrule
\multicolumn{2}{l}{\textit{Secondary experimental outcomes}} \\
\addlinespace
Scale-up analyses 
& \Cref{sec:scale_up} \\
Click outcomes 
& \Cref{sec:experiment_results,} \\

\midrule
\multicolumn{2}{l}{\textit{Survey outcomes}} \\
\addlinespace
Survey primary outcome 
& \Cref{table:primary_survey_outcomes} \\
Survey secondary outcomes 
& \Cref{table:descriptive_survey_results} \\
Survey exploratory outcomes 
& \Cref{table:compliers_analysis,table:info_source_regressions} \\
Survey qualitative outcomes & \Cref{sec:survey_outcomes} \\

\midrule
\multicolumn{2}{l}{\textit{Further exploratory analyses}} \\
\addlinespace
Local average treatment effects 
& \Cref{table:late_treatment_effects} \\
Heterogeneous treatment effects 
& \Cref{table:heterogeneous_treatment_effects} \\
Quantitative metrics of matched/enrolled school 
& Matched school only (enrollment data not yet available): \Cref{table:matched_prog_metrics} \\
Within-applicant comparisons of application profiles over time 
& \Cref{app:additional_experimental_results} \\
Outcomes at school level 
& Not reported; unable to compute full set of nudged students in deployment year due to data limitations \\
Long-term outcomes 
& Not yet realized \\

\bottomrule
\end{tabular}
\caption{\textbf{Pre-registration compliance for analyses.} This table summarizes the pre-registered analyses reported, along with deviations from the pre-registered plan.}
\label{table:prereg_compliance}
\end{table}

\subsection{Additional Experimental Results}
\label{app:additional_experimental_results}
In this section, we report the remaining pre-registered experimental outcomes. 
We also provide raw rates of applications and matches to nudged and nudge-eligible programs in \Cref{table:experiment_results}. ``Nudge-eligible programs'' refers to the 15 deployed nudge-eligible programs included in our experiment this year; this is distinct from nudge-eligible student-program pairs (desirable matches), which indicate that a student meets the \high offer likelihood, commute distance, and middle school feeder pattern requirements at that specific program.

\paragraph{Regression robustness checks} Our pre-registration specified that we would measure treatment effects for the primary outcome through logistic regression. For robustness, we further report results using linear probability models (\Cref{table:treatment_effects_ols}) to show that all significance results remain unchanged, with the exception of ``Ranked any nudged program in top 3'' without stratum fixed effects, which changes from being significant at the p < 0.10 level to p=0.111 with OLS. 
 
\paragraph{Additional outcome specifications.}
We report results for additional outcome specifications in \Cref{table:addtl_treatment_effects_logit} (logistic regression) and \Cref{table:treatment_effects_ols} (linear regression). First, we measure treatment effects on whether applicants ranked any \textit{nudge-eligible} program, rather than only the programs they were individually nudged to. While these treatment effects are not consistently significant across specifications (with vs. without stratum fixed effects), the behavioral changes in applying to nudged programs appears to be additive, translating to similar percentage-point increases in nudge-eligible programs.
We also report results for the two pre-registered alternative specifications of our primary outcome: whether an applicant ranked a nudged program first, and the rank position of the first nudged program on the applicant's list (conditional on ranking any nudged program).\footnote{Because rank positions are non-binary, we estimate that specification using linear models only.} We do not find a statistically significant treatment effect under either specification. 
 
\paragraph{Exploratory spillover analysis} We fit a linear regression with middle school treatment-intensity coefficients (\Cref{tab:spillovers}). Untreated students in 50\% treatment-intensity target middle schools exhibited higher application rates to nudged programs than students in 0\% schools. While this pattern is consistent with possible within-school spillovers, it may also reflect residual school-level imbalance or noise, since the spillover estimate is identified from between-school variation and a relatively small number of untreated students in partially treated schools.

\paragraph{Local Average Treatment Effects} As preregistered, we estimate local average treatment effects (LATE) using treatment assignment as an instrument for email engagement, defined as clicking at least one tracked link. By this definition, 19.8\% of treatment students complied with the intervention. However, our results indicate that many students likely read and engaged with the email without clicking on any link, and so the exclusion restriction is violated---these effects are likely overstated. While we report this analysis in \Cref{table:late_treatment_effects} to be consistent with the preregistration, we do not rely on them.

\paragraph{Heterogeneous treatment effects} For each subgroup variable (FRL/ELL/STH status, residence borough, screened group, race/ethnicity, gender), we estimate a regression with an interaction coefficient between treatment assignment and the feature: $Y_i \sim $ treatment*feature. We report results in \Cref{table:heterogeneous_treatment_effects}. We find no evidence of heterogeneous treatment effects, though our RCT was not powered to detect such effects. 
 
\paragraph{Additional analysis of link-clicking behavior} We find that the mean number of link clicks per treated applicant is 0.502, and on average applicants click on links for 18.2\% of their nudged programs. We also examine whether link click behaviors vary across subgroups, regressing an indicator for clicking any link on demographic characteristics among treated applicants. Overall, we find limited evidence of heterogeneous engagement, although we were underpowered to do so (\Cref{tab:click_heterogeneity}). 

\paragraph{Comparisons between pre-intervention and submitted application lists} We also examine saved student application lists from before the intervention, to study how and whether applicants revise their existing lists in response to nudges. We find that only 10 of 444 treatment applicants had any saved programs as of the data pulls one day before their nudge email was sent (this number is 18 of 822 for control applicants). None of these 10 applicants had a nudged program on their pre-intervention saved list, and none added a nudged program to their final application. This low rate of adding schools to the portal is consistent with our intervention only targeting applicants with in-progress, un-submitted applications---but given the small and potentially biased sample of applicants who had already added schools to their application upon receiving the email, means we have limited ability to study student behavior in adding schools to an existing list.

\FloatBarrier 

\begin{figure}[tb]
  \centering

  \begin{subfigure}[t]{0.24\linewidth}
    \vspace{0pt}
    \centering
    \includegraphics[width=\linewidth]{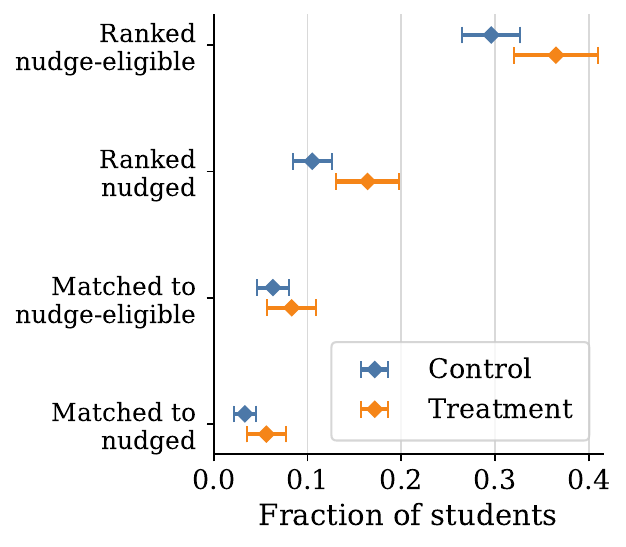}
    \label{fig:student_behavior_forest_full}
  \end{subfigure}
  \hfill
  \begin{subfigure}[t]{0.75\linewidth}
  \centering
  \begin{minipage}[t][2.95cm][t]{\linewidth}
    \vspace{2pt}
    \centering
    \small
    \resizebox{\linewidth}{!}{%
      \begin{tabular}{@{}llrrrrrrr@{}}
        \toprule
        &&&& \multicolumn{4}{c}{\textbf{Ranked program\ldots}} & \\
        \cmidrule(l){5-8}
        \textbf{Group} & \textbf{Total} & \textbf{Submitted} & \textbf{Program type} & 
        \emph{Anywhere} & \emph{First} & \emph{Second} & \emph{Third} & \textbf{Matched}\\
        \midrule
        
        \multirow{2}{*}{Control}
        & \multirow{2}{*}{822}
        & \multirow{2}{*}{803 (97.7\%)}
        & \textit{Nudged}
        & 86 (10.5\%) & 19 (2.3\%) & 11 (1.3\%) & 11 (1.3\%) & 27 (3.3\%) \\
        
        & 
        &
        & \textit{Nudge-eligible}
        & 243 (29.6\%) & 53 (6.4\%) & 36 (4.4\%) & 29 (3.5\%) & 52 (6.3\%) \\
        
        \addlinespace
        
        \multirow{2}{*}{Treatment}
        & \multirow{2}{*}{444}
        & \multirow{2}{*}{439 (98.9\%)}
        & \textit{Nudged}
        & 73 (16.4\%) & 17 (3.8\%) & 9 (2.0\%) & 9 (2.0\%) & 25 (5.6\%) \\
        
        &
        &
        & \textit{Nudge-eligible}
        & 162 (36.5\%) & 36 (8.1\%) & 21 (4.7\%) & 19 (4.3\%) & 37 (8.3\%) \\
        
        \bottomrule
        \end{tabular}
    }
  \end{minipage}
\end{subfigure}
\vspace{-0.5cm}
\caption{\textbf{Summary of student behavior in response to nudges.} For both control and treatment groups, we summarize the number of students total, who ultimately submitted an application, and who ranked (anywhere on the list or in any of the first 3 spots) or matched to a nudged or nudge-eligible program in their application.}
\label{table:experiment_results}
\label{fig:behavior_summary}
\end{figure}


\begin{table}[tb]\centering
\vspace{1cm}
\small
\caption{\textbf{Regression-estimated treatment effects (Logit): additional outcome specifications}}
\label{table:addtl_treatment_effects_logit}
\begin{tabular}{lccccl}
\toprule

\multirow{2}{*}{\textbf{Outcome}} & 
\textbf{Control} & \textbf{Treatment} & \multirow{2}{*}{\textbf{Stratum FE}} & \textbf{Treatment} & \multirow{2}{*}{\textbf{p-value}} \\
& \textbf{mean} & \textbf{mean} & & \textbf{effect} & \\
\midrule

\multirow{2}{*}{Ranked any nudge-eligible program}
& \multirow{2}{*}{0.296}
& \multirow{2}{*}{0.365}
& Yes & 0.226 (0.139) & 0.104 \\
& & & No & 0.314 (0.155) & 0.042** \\

\multirow{2}{*}{Ranked any nudged program first}
& \multirow{2}{*}{0.023}
& \multirow{2}{*}{0.038}
& Yes & 0.638 (0.456) & 0.162 \\
& & & No & 0.520 (0.418) & 0.213 \\

\multirow{2}{*}{Ranked any nudged program in top 3}
& \multirow{2}{*}{0.050}
& \multirow{2}{*}{0.079}
& Yes & 0.545 (0.298) & 0.068* \\
& & & No & 0.489 (0.292) & 0.094* \\

\multirow{2}{*}{Matched to any nudge-eligible program}
& \multirow{2}{*}{0.063}
& \multirow{2}{*}{0.083}
& Yes & 0.279 (0.294) & 0.344 \\
& & & No & 0.297 (0.276) & 0.281 \\

\bottomrule
\end{tabular}

\begin{flushleft}
\footnotesize
Logistic regressions. Robust standard errors clustered at the school level in parentheses.
Treatment is randomly assigned within strata; regressions are estimated with and without stratum fixed effects (borough and school size).
Control and treatment means are the average outcomes among control and treatment students in the experimental population, respectively.
Coefficients are log-odds ratios for each outcome. There are 1,266 total observations.
* $p<0.10$, ** $p<0.05$
\end{flushleft}
\end{table}


\begin{table}[htbp]\centering
\small
\caption{\textbf{Regression-estimated treatment effects (OLS): primary outcomes and additional outcome specifications}}
\label{table:treatment_effects_ols}
\begin{tabular}{lccccl}
\toprule

\multirow{2}{*}{\textbf{Outcome}} & 
\textbf{Control} & \textbf{Treatment} & \multirow{2}{*}{\textbf{Stratum FE}} & \textbf{Treatment} & \multirow{2}{*}{\textbf{p-value}} \\
& \textbf{mean} & \textbf{mean} & & \textbf{effect} & \\
\midrule

\multirow{2}{*}{Ranked any nudge-eligible program}
& \multirow{2}{*}{0.296}
& \multirow{2}{*}{0.365}
& Yes & 0.049 (0.030) & 0.105 \\
& & & No & 0.069 (0.034) & 0.041** \\

\multirow{2}{*}{Ranked any nudged program}
& \multirow{2}{*}{0.105}
& \multirow{2}{*}{0.164}
& Yes & 0.064 (0.026) & 0.013** \\
& & & No & 0.060 (0.028) & 0.034** \\

\multirow{2}{*}{Ranked any nudged program first}
& \multirow{2}{*}{0.023}
& \multirow{2}{*}{0.038}
& Yes & 0.0185 (0.014) & 0.188 \\
& & & No & 0.0152 (0.013) & 0.244 \\

\multirow{2}{*}{Ranked any nudged program in top 3}
& \multirow{2}{*}{0.050}
& \multirow{2}{*}{0.079}
& Yes & 0.032 (0.018) & 0.079* \\
& & & No & 0.029 (0.018) & 0.111 \\

\multirow{2}{*}{Matched to any nudge-eligible program}
& \multirow{2}{*}{0.063}
& \multirow{2}{*}{0.083}
& Yes & 0.019 (0.020) & 0.353 \\
& & & No & 0.020 (0.019) & 0.292 \\

\multirow{2}{*}{Matched to any nudged program}
& \multirow{2}{*}{0.033}
& \multirow{2}{*}{0.056}
& Yes & 0.027 (0.016) & 0.095* \\
& & & No & 0.023 (0.016) & 0.136 \\

\multirow{2}{*}{Rank position of first nudged program}
& \multirow{2}{*}{3.91}
& \multirow{2}{*}{4.16}
& Yes & -0.008 (0.604) & 0.990 \\
& & & No & 0.257 (0.625) & 0.681 \\

\bottomrule
\end{tabular}

\begin{flushleft}
\footnotesize
Linear probability models. Robust standard errors clustered at the school level in parentheses.
Treatment is randomly assigned within strata; regressions are estimated with and without stratum fixed effects (Borough and school size).
Control and treatment means are the average outcomes among control and treatment students in the experimental population, respectively.
When coefficients represent changes in the probability, multiplying by 100 gives the percentage-point effect. There are 1,266 total observations for all regressions except ``Rank position of first nudged program'', for which there are 159 total observations. * $p<0.10$, ** $p<0.05$
\end{flushleft}
\end{table}


\begin{table}[tb]\centering
\vspace{0.5cm}
\small
\caption{\textbf{Spillover effects}}
\label{tab:spillovers}
\begin{tabular}{lccl}
\toprule

 & \textbf{Stratum FE} & \textbf{Coefficient} & \textbf{p-value} \\
\midrule

\multirow{2}{*}{Treatment intensity 50}
& Yes & 0.0837 & 0.044** \\
& No & 0.0778 & 0.079* \\

\multirow{2}{*}{Treatment intensity 100}
& Yes & 0.0502 & 0.276 \\
& No & 0.0436 & 0.374 \\

\multirow{2}{*}{Treatment}
& Yes & 0.0213 & 0.530 \\
& No & 0.0198 & 0.556 \\

\bottomrule
\end{tabular}
\begin{centering}
\\
\footnotesize
\vspace{0.1cm}
Linear probability model with robust standard errors clustered at the school level. * $p<0.10$, ** $p<0.05$
\end{centering}
\end{table}


\begin{table}[htbp]\centering
\small
\caption{\textbf{Local Average Treatment Effects for primary outcomes and alternative outcome specifications.} We define an observed complier as a treatment applicant who clicks on at least one tracked link in their intervention email. We note that these effects are likely overstated---more people read the email than clicked on a tracked link, so not all compliers were observed. We do not track email open rates directly.}
\label{table:late_treatment_effects}
\begin{tabular}{lccccl}
\toprule

\multirow{2}{*}{\textbf{Outcome}} & 
\textbf{Control} & \textbf{Observed} & \textbf{Stratum} & \multirow{2}{*}{\textbf{LATE}} & \multirow{2}{*}{\textbf{p-value}} \\
& \textbf{mean} & \textbf{complier mean} & \textbf{FE} & & \\
\midrule

\multirow{2}{*}{Ranked any nudge-eligible program}
& \multirow{2}{*}{0.296}
& \multirow{2}{*}{0.500}
& Yes & 0.253 (0.144) & 0.078* \\
& & & No & 0.349 (0.138) & 0.012** \\

\multirow{2}{*}{Ranked any nudged program}
& \multirow{2}{*}{0.105}
& \multirow{2}{*}{0.284}
& Yes & 0.333 (0.104) & 0.001*** \\
& & & No & 0.302 (0.098) & 0.002*** \\

\multirow{2}{*}{Ranked any nudged program first}
& \multirow{2}{*}{0.023}
& \multirow{2}{*}{0.091}
& Yes & 0.096 (0.052) & 0.065* \\
& & & No & 0.077 (0.049) & 0.120 \\

\multirow{2}{*}{Ranked any nudged program in top 3}
& \multirow{2}{*}{0.050}
& \multirow{2}{*}{0.159}
& Yes & 0.167 (0.074) & 0.024** \\
& & & No & 0.146 (0.070) & 0.038** \\

\multirow{2}{*}{Matched to any nudge-eligible}
& \multirow{2}{*}{0.063}
& \multirow{2}{*}{0.136}
& Yes & 0.098 (0.080) & 0.220 \\
& & & No & 0.101 (0.076) & 0.182 \\

\multirow{2}{*}{Matched to any nudged program}
& \multirow{2}{*}{0.033}
& \multirow{2}{*}{0.114}
& Yes & 0.138 (0.062) & 0.026** \\
& & & No & 0.118 (0.059) & 0.044** \\

\bottomrule
\end{tabular}

\begin{flushleft}
\footnotesize
Local average treatment effects estimated using two-stage least squares (2SLS) linear probability models, with randomized treatment assignment as an instrument for clicking at least one tracked link in the nudge email. Robust standard errors clustered at the school level in parentheses.
Treatment is randomly assigned within strata; regressions are estimated with and without stratum fixed effects (Borough and school size).
Control and observed complier means are the average outcomes among control students and observed compliers in the experimental population, respectively.
Coefficients represent probability changes; multiplying by 100 gives the percentage-point effect. There are 1,266 total observations. * $p<0.10$, ** $p<0.05$, *** $p< 0.01$
\end{flushleft}
\end{table}


\begin{table}[htbp]
\centering
\small
\begin{tabular}{lcl}
\toprule
\textbf{Feature} & \textbf{Coef.} & \textbf{p-value} \\
\midrule
\multicolumn{3}{l}{\textit{FRL/ELL/STH status}} \\
\addlinespace
Treatment (Non-FRL/ELL/STH) & 0.0176 & 0.745 \\
Treatment $\times$ FRL/ELL/STH & 0.0499 & 0.336\\

\midrule
\multicolumn{3}{l}{\textit{Residence borough}} \\
\addlinespace
Treatment (Brooklyn) & 0.053 & 0.186 \\
Treatment $\times$ Manhattan & -0.021 & 0.794 \\
Treatment $\times$ Queens & -0.055 & 0.430 \\
Treatment $\times$ Bronx & 0.076 & 0.259 \\

\midrule
\multicolumn{3}{l}{\textit{Screened group}} \\
\addlinespace
Treatment (Group 1) & 0.091 & 0.121 \\
Treatment $\times$ Group 2 & 0.002 & 0.979 \\
Treatment $\times$ Group 3 & -0.092 & 0.119 \\
Treatment $\times$ Group 4 & -0.047 & 0.444 \\
Treatment $\times$ Group 5 & -0.046 & 0.492 \\

\midrule
\multicolumn{3}{l}{\textit{Race/ethnicity}} \\
\addlinespace
Treatment (Hispanic) & 0.039 & 0.266 \\
Treatment $\times$ Asian & 0.038 & 0.767 \\
Treatment $\times$ Black & 0.040 & 0.395 \\
Treatment $\times$ White & -0.038 & 0.689 \\
Treatment $\times$ Multi-racial & 0.161 & 0.138 \\

\midrule
\multicolumn{3}{l}{\textit{Gender}} \\
\addlinespace
Treatment (Female) & 0.076 & 0.033** \\
Treatment $\times$ Male & -0.034 & 0.345 \\

\bottomrule
\end{tabular}
\caption{\textbf{Heterogeneous treatment effects.}
Outcome is whether the applicant applied to a nudged program.}
\label{table:heterogeneous_treatment_effects}
\end{table}


\begin{table}[htbp]
\centering
\caption{\textbf{Heterogeneity in link-clicking behavior among treated applicants}}
\label{tab:click_heterogeneity}
\begin{tabular}{lcccccc}
\toprule
& \multicolumn{2}{c}{\textbf{Any link click}} 
& \multicolumn{2}{c}{\textbf{\# link clicks}} 
& \multicolumn{2}{c}{\textbf{Frac. links clicked}} \\
\cmidrule(lr){2-3} \cmidrule(lr){4-5} \cmidrule(lr){6-7}
\textbf{Subgroup variable} 
& \emph{Coef.} & \emph{$p$-value} 
& \emph{Coef.} & \emph{$p$-value} 
& \emph{Coef.} & \emph{$p$-value} \\

\midrule
\multicolumn{3}{l}{\textit{FRL/ELL/STH status}} \\
\addlinespace
FRL/ELL/STH & -0.0840 & 0.099 & -0.143 & 0.424 & -0.0739 & 0.147\\
\midrule
\multicolumn{3}{l}{\textit{Screened group}} \\
\addlinespace
Group 2 & 0.0525 & 0.335 & -0.0423 & 0.802 & 0.0463 & 0.397 \\
Group 3 & 0.0085 & 0.890 & -0.0094 & 0.966 & -0.0039 & 0.948 \\
Group 4 & 0.0605 & 0.357 & 0.0782 & 0.750 & 0.0411 & 0.522 \\
Group 5 & 0.0246 & 0.715 & 0.0468 & 0.840 & 0.0278 & 0.664 \\
\midrule
\multicolumn{3}{l}{\textit{Race/ethnicity}} \\
\addlinespace
Black & -0.0559 & 0.428 & -0.1034 & 0.497 & -0.0684 & 0.333 \\
Hispanic & -0.0085 & 0.905 & 0.1628 & 0.307 & -0.0282 & 0.684 \\
Multi-racial & 0.2227 & 0.033 & 0.2549 & 0.171 & 0.2227 & 0.033 \\
White & 0.0310 & 0.722 & 0.2988 & 0.230 & -0.0019 & 0.983 \\
\bottomrule
\end{tabular}

\vspace{0.5em}
\begin{flushleft}
\footnotesize
Each row reports the coefficient from a separate linear regression among treated applicants. FRL/ELL/STH indicates whether a student is classified as free/reduced-price lunch eligible, an English language learner, or a Student in Temporary Housing. The screened group (based on student grades) regressions use Group 1 as the reference category. The race/ethnicity regressions use Asian students as the reference category. 
\end{flushleft}
\end{table}


\begin{table}[htbp]\centering
\vspace{1cm}
\small
\caption{\textbf{Treatment effect on matched school metrics.} Reported metrics are those used to define high-performing schools.}
\label{table:matched_prog_metrics}
\begin{tabular}{lcccr}
\toprule

\textbf{Metric} & \textbf{Control mean} & \textbf{Stratum FE} & \textbf{Coef.} & \textbf{p-value} \\
\midrule

\multirow{2}{*}{Graduation Rate}
& \multirow{2}{*}{90.67}
& Yes & 0.27 (0.7) & 0.675 \\
& & No & 0.17 (0.8) & 0.820 \\

\multirow{2}{*}{College and Career Readiness}
& \multirow{2}{*}{49.78}
& Yes & 1.79 (1.16) & 0.122 \\
& & No & 2.19 (1.38) & 0.111 \\

\multirow{2}{*}{Safety}
& \multirow{2}{*}{3.62}
& Yes & 0.025 (0.047) & 0.589 \\
& & No & -0.031 (0.070) & 0.662 \\

\bottomrule
\end{tabular}

\begin{flushleft}
\footnotesize
Linear models. Robust standard errors clustered at the school level in parentheses.
Treatment is randomly assigned within strata; regressions are estimated with and without stratum fixed effects (Borough and school size).
Control mean is the average outcome among control students in the experimental population. Metrics are not available for all schools, and matches to those schools are omitted; this results in 1,228 total observations for Graduation Rate and College and Career Readiness regressions, and 1,191 for Safety regressions. * $p<0.10$, ** $p<0.05$, *** $p< 0.01$
\end{flushleft}
\end{table}


\FloatBarrier
\subsection{Additional survey results}
\label{sec:survey_outcomes}

We analyze post-application survey responses to measure whether the nudge emails increased applicants' familiarity with high-performing high school programs near their home, as pre-registered. We sent the survey to all treatment and control students from middle schools retained over both batches (370 and 707, respectively), and an equal number of non-treatment students, for a total of 2,154 total students. 

Our survey had 42 treatment respondents, 61 control respondents, and 136 respondents from non-target middle schools. For each student, we sample one nearby, high-performing high school program to ask more detailed questions about; for treatment and control students this program is randomly sampled from nudged programs they \textit{applied to} if any exist, and from all nudged programs otherwise. For non-target students, the program is randomly sampled from all nearby, high-performing high school programs. 

\Cref{table:primary_survey_outcomes} summarizes the primary survey outcomes (distinct from primary experimental outcomes): on the survey sample, we do not find evidence of treatment effect on self-reported familiarity or rates of recognition on sampled programs. \Cref{table:descriptive_survey_results} details all other survey responses, reporting raw survey rates for other questions. Due to small sample size, these results should be interpreted as descriptive evidence, potentially providing insight for future interventions. 

We also report outcomes specifically among students who report reading the email (\Cref{table:compliers_analysis}) and run exploratory regressions by applicants' self-reported primary information sources for high school options (\Cref{table:info_source_regressions}).

Finally, we conducted a thematic analysis of open-ended survey responses. Among families who answered why they did not apply to the sampled high-performing program, most responses referred either to the program's specific academic interest area or the fact that it was too far from home.
Only 2 of 25 families provided a free-response answer regarding their perception of the email, so we do not draw conclusive insights from this question. In response to ``what advice would have helped you navigate the application process better'', the most common themes were ``nothing'', more school/program information, insight on admissions chances, personalized recommendations, clarity on the process and admissions requirements, and better support for different languages or students with individualized education programs. 

\begin{table}[htbp]
\vspace{0.5cm}
\centering
\small
\begin{tabular}{p{0.54\linewidth}lcc}
\toprule
\textbf{Outcome} & \textbf{Model} & \textbf{Coefficient} & \textbf{$p$-value} \\
\midrule

Self-reported familiarity with high-performing programs near home (very unfamiliar (1) -- very familiar (4) scale)
& Ordered logit
& -0.25
& 0.517 \\
& OLS
& -0.11
& 0.590 \\
\addlinespace
Familiar with sampled high-performing program
& Logit
& 0.368
& 0.416 \\
& OLS
& 0.089
& 0.409 \\
\bottomrule
\end{tabular} \\
\caption{\textbf{Primary survey outcomes.} Familiarity with high-performing programs (self-reported familiarity and rates of recognition on sampled programs), comparing treatment and control respondents. Ordered logit and logit coefficients are reported in log-odds units. For self-reported familiarity, the OLS coefficient is measured in points on the 1–4 scale. For the binary familiarity outcome, the OLS coefficient can be interpreted as a change in probability; multiplying by 100 gives the percentage-point effect. 103 total observations.}
\label{table:primary_survey_outcomes}
\end{table}


\definecolor{veryimp}{RGB}{76, 114, 176}   
\definecolor{imp}{RGB}{85, 168, 104}       
\definecolor{notimp}{RGB}{129, 114, 179}   
\definecolor{noresp}{RGB}{200, 200, 200}   
\def\labelpad{0.1}

\newcommand{\importancebar}[4]{
\begin{tikzpicture}[baseline=-0.6ex]
    \def\barwidth{6.5}
    \def\barheight{0.28}

    \pgfmathsetmacro{\wA}{\barwidth * #1 / 239}
    \pgfmathsetmacro{\wB}{\barwidth * #2 / 239}
    \pgfmathsetmacro{\wC}{\barwidth * #3 / 239}
    \pgfmathsetmacro{\wD}{\barwidth * #4 / 239}

    \fill[veryimp] (0,-\barheight/2) rectangle (\wA,\barheight/2);
    \fill[imp] (\wA,-\barheight/2) rectangle (\wA+\wB,\barheight/2);
    \fill[notimp] (\wA+\wB,-\barheight/2) rectangle (\wA+\wB+\wC,\barheight/2);
    \fill[noresp] (\wA+\wB+\wC,-\barheight/2) rectangle (\barwidth,\barheight/2);

    \ifnum#1>14
        \node[anchor=west, font=\tiny\bfseries, text=white, inner sep=0pt]
        at (\labelpad, 0) {#1};
    \fi

    \ifnum#2>14
        \node[anchor=west, font=\tiny\bfseries, text=white, inner sep=0pt]
        at (\wA+\labelpad, 0) {#2};
    \fi

    \draw[black!20] (0,-\barheight/2) rectangle (\barwidth,\barheight/2);
    
\end{tikzpicture}%
}


\begin{table}[htbp]
\centering
\caption{\textbf{Descriptive survey outcomes.} We report raw survey rates for other outcomes. These are exploratory and could inform future interventions.}
\label{table:descriptive_survey_results}
\small
\begin{tabular}{
p{0.46\linewidth}
>{\centering\arraybackslash}p{0.12\linewidth}
>{\centering\arraybackslash}p{0.13\linewidth}
>{\centering\arraybackslash}p{0.14\linewidth}
}
\toprule
\textbf{Outcome} & \textbf{Treatment} & \textbf{Control} & \textbf{Non-target} \\
\midrule

\multicolumn{4}{l}{\textit{Panel A. Reaction to email (Treatment group only)}} \\
\addlinespace
Read the email
& 59.5\% (25/42)
& --
& --\\
Email was helpful (of applicants who read email)
& 88.0\% (22/25)
& --
& --\\
Email affected application decisions (of applicants who read email)
& 44.0\% (11/25)
& --
& --\\
Perception of sampled program improved after reading (of applicants who read email and recognized sampled prog)
& 66.7\% (10/15)
& --
& --\\

Would \textit{not} have applied \textit{without} the email (of applicants who applied to sampled prog)
& 72.7\% (8/11)
& --
& --\\

\midrule

\multicolumn{4}{l}{\textit{Panel B. Knowledge about programs (Treatment \& control groups)}} \\
\addlinespace
Recognized sampled program
& 52.4\% (22/42)
& 45.9\% (28/61)
& --\\
Email readers: knowledge \textit{before} email \\
\quad Sampled prog is at high-performing school 
& 33.3\% (5/15)
& --
& --\\
\quad \high likelihood at sampled prog 
& 13.3\% (2/15)
& --
& --\\
\quad Sampled prog is within 45 min of home
& 20.0\% (3/15)
& --
& --\\

Non-email readers: knowledge at end of process\\
\quad Sampled prog is at high-performing school 
& 85.7\% (6/7)
& 89.3\% (25/28)
& --\\
\quad \high likelihood at sampled prog 
& 71.4\% (5/7)
& 57.1\% (16/28)
& --\\

\midrule
\multicolumn{4}{l}{\textit{Panel C. Demand for recommendations (Control \& non-target groups)}} \\
\addlinespace
Would have considered applying to sampled prog
& --
& 64.8\% (35/54)
& 56.9\% (62/109)\\
Interested in receiving recommendations
& --
& 90.2\% (55/61)
& 88.1\% (118/134)\\

\midrule
\multicolumn{4}{l}{\textit{Panel D. Reasons for not applying to sampled program (All groups)}} \\
\multicolumn{4}{l}{\quad \textit{Multiple choice among applicants who recognized sampled prog but did not apply}}\\
\addlinespace
Did not know enough about it
& 22.2\% (2/9)
& 31.8\% (7/22)
& 21.4 \% (9/42) \\
Did not match child's interests
& 44.4\% (4/9)
& 31.8\% (7/22)
& 57.1\% (24/42) \\
Did not think child would get in
& 22.2\% (2/9)
& 9.1\% (2/22)
& 16.7\% (7/42)\\
Too far from home
& 22.2\% (2/9)
& 9.1\% (2/22)
& 4.8\% (2/42)\\

\midrule
\addlinespace
\multicolumn{4}{l}{\textit{Panel E. Feature importance for recommendations (Aggregated across all groups)}} \\
\addlinespace
\multicolumn{4}{l}{
\scriptsize
\textcolor{veryimp}{\rule{0.8em}{0.8em}} Very important \quad
\textcolor{imp}{\rule{0.8em}{0.8em}} Important \quad
\textcolor{notimp}{\rule{0.8em}{0.8em}} Not important \quad
\textcolor{noresp}{\rule{0.8em}{0.8em}} No response
}\\
\addlinespace
Safety
& \multicolumn{3}{l}{\importancebar{178}{44}{4}{13}}\\
Academic performance
& \multicolumn{3}{l}{\importancebar{137}{88}{4}{10}}\\
Distance from home
& \multicolumn{3}{l}{\importancebar{108}{104}{15}{12}}\\
Program theme or interest area
& \multicolumn{3}{l}{\importancebar{105}{117}{9}{8}}\\
High chance of offer
& \multicolumn{3}{l}{\importancebar{98}{120}{7}{14}}\\
Extracurricular offerings
& \multicolumn{3}{l}{\importancebar{81}{136}{12}{10}}\\

\bottomrule
\end{tabular}
\begin{flushleft}
\footnotesize The treatment, control, and non-target columns report raw survey outcomes. The non-target group is included for descriptive context only and is not part of the pre-registered treatment-control comparison. Some exploratory outcomes are conditional on responding to follow-up questions, so denominators vary across rows.
\end{flushleft}
\end{table}


\begin{table}[htbp]
\centering
\caption{\textbf{Outcomes by applicants' self-reported email readership (exploratory)}}
\label{table:compliers_analysis}
\small
\begin{tabular}{p{0.52\linewidth}cc}
\toprule
\textbf{Outcome} & \textbf{Read email} & \textbf{Did not read email} \\
\midrule
Self-reported familiarity with high-performing programs near home (very unfamiliar (1) -- very familiar (4) scale)
& 2.72 $\pm$ 0.37
& 3.06 $\pm$ 0.52 \\
Familiar with sampled high-performing program
& 60.0\% (15/25)
& 41.2\% (7/17) \\
Applied to at least one nudged program
& 44.0\% (11/25)
& 29.4\% (5/17) \\
Applied to at least one nudge-eligible program
& 64.0\% (16/25)
& 58.8\% (10/17) \\
\bottomrule
\end{tabular}
\begin{flushleft}
\footnotesize This table conditions on treatment assignment and splits respondents by whether they self-reported reading the email. Results for self-reported familiarity include 95\% confidence intervals.
\end{flushleft}
\end{table}


\begin{table}[htbp]
\vspace{0.5cm}
\centering
\caption{\textbf{Regressions for outcomes by applicants' primary information sources (exploratory)}}
\label{table:info_source_regressions}
\small
\begin{tabular}{p{0.2\linewidth}rrrr}
\toprule
\textbf{Info source} 
& \textbf{Familiarity}
& \textbf{Read email}
& \textbf{Applied to nudged prog}
& \textbf{Interested in recs} \\
\midrule
Intercept
& 2.930 (0.119)
& 0.449 (0.135)
& 0.156 (0.045)
& 0.853 (0.056) \\
\addlinespace
Other parents
& -0.047 (0.124)
& -0.246 (0.170)
& -0.071 (0.035)
& -0.033 (0.052) \\
Paid consultants
& -0.018 (0.472)
& -0.764 (0.204)
& -0.051 (0.033)
& 0.126 (0.076) \\
Counselors
& -0.016 (0.104)
& 0.162 (0.137)
& -0.006 (0.037)
& 0.026 (0.050) \\
MySchools
& 0.032 (0.104)
& -0.066 (0.169)
& -0.035 (0.037)
& -0.005 (0.054) \\
Other websites
& 0.062 (0.167)
& -0.180 (0.218)
& -0.038 (0.040)
& -0.054 (0.076) \\
Information sessions
& 0.077 (0.122)
& 0.399 (0.114)
& -0.017 (0.041)
& 0.082 (0.043) \\

\midrule
Observations & 239 & 42 & 239 & 197 \\
\bottomrule
\end{tabular}
\begin{flushleft}
\footnotesize This table reports coefficients from exploratory regressions with indicators for applicants' self-reported primary information sources. Outcomes are: self-reported familiarity with high-performing programs near home on a 1--4 scale (all respondents); whether the applicant reported reading the email (treatment respondents only); whether the applicant applied to at least one nudged program (all respondents); and whether the applicant reported interest in receiving recommendations (control and non-target respondents only). Standard errors are reported in parentheses. These regressions are descriptive because information sources are self-reported and not randomly assigned.
\end{flushleft}
\end{table}

\FloatBarrier
\clearpage
\subsection{Scale-up analyses}
\begin{figure}[ht]
  \centering
  \begin{subfigure}[t]{0.33\linewidth}
    \centering
    \includegraphics[width=\linewidth]{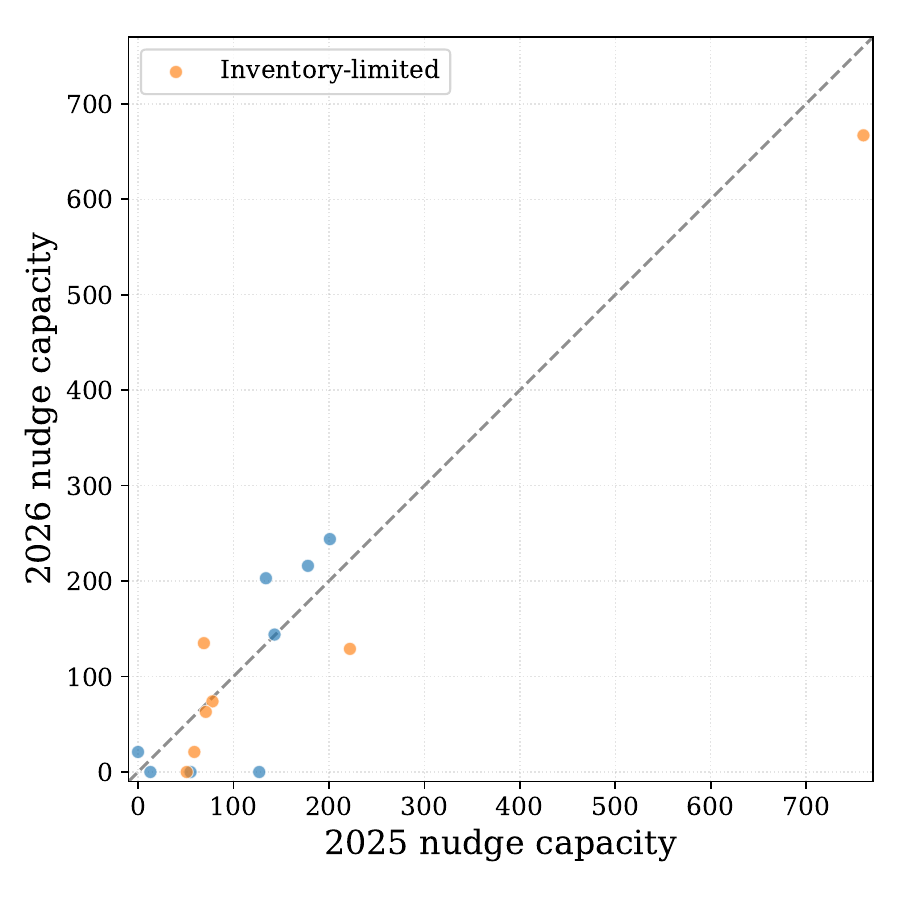}
    \caption{\textbf{40\% nudge adoption}}
    \label{fig:nudgecap_comparison_40}
  \end{subfigure}
    \hspace{0.03\linewidth}
  \begin{subfigure}[t]{0.33\linewidth}
    \centering
    \includegraphics[width=\linewidth]{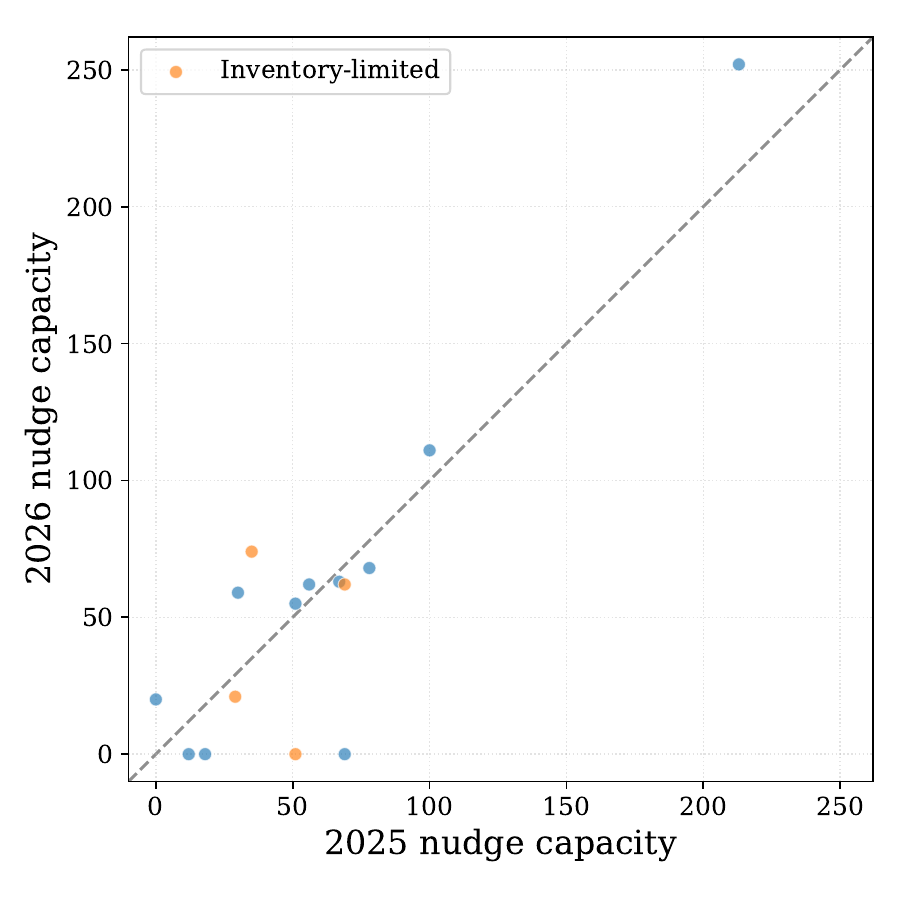}
    \caption{\textbf{100\% nudge adoption}}
    \label{fig:nudgecap_comparison_100}
  \end{subfigure}
  \hfill
  \caption{\textbf{Comparison of nudge capacities between years.} We compute nudge capacities on 2024-25 versus 2025-26 application data, assuming two simulated adoption rates. ``Inventory-limited'' orange dots indicate programs were limited by the number of nudge-eligible students rather than congestion.}
  \label{fig:nudgecap_comparisons}
\end{figure}

\FloatBarrier
\subsection{Other figures}

\begin{figure}[ht]
  \begin{center}
    \centerline{\includegraphics[width=0.6\textwidth]{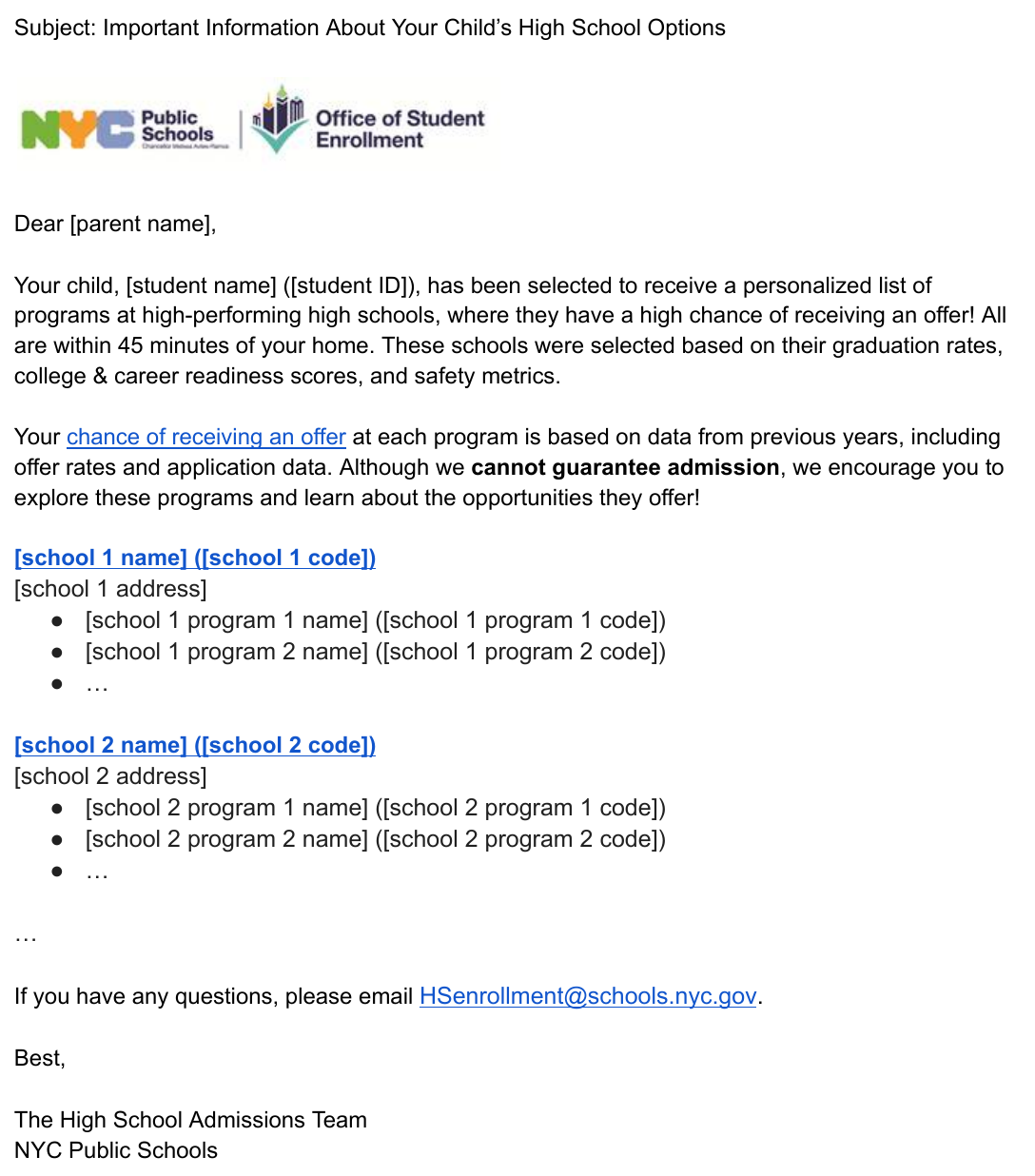}}
    \caption{\textbf{Nudge email text.}}
    \label{fig:email_text}
    \end{center}
\end{figure}\renewcommand{\arraystretch}{1.2}

\end{document}